\titleformat*{\section}{\fontsize{16}{19}\bfseries\selectfont}
\titleformat*{\subsection}{\fontsize{13}{17}\bfseries\selectfont}
\renewcommand{\baselinestretch}{1.2}
\newcommand{\linebr}{\\[0.2em]}
\DeclareMathAlphabet{\mathpzc}{OT1}{pzc}{m}{it}
\definecolor{myblue1}{RGB}{0, 0, 139}
\newtheorem{theorem}{Theorem}[section]
\newtheorem{lemma}[theorem]{Lemma}
\newtheorem{proposition}[theorem]{Proposition}
\newtheorem*{proposition*}{Proposition}
\newtheorem{definition}{Definition}[section]
\newtheorem{corollary}[theorem]{Corollary}
\newtheorem*{assumption(A)}{Condition (A)}
\theoremstyle{remark}
\newtheorem{remark}{{\bf Remark}}[section]
\newtheorem*{remarknotation}{{\bf Remark on notation}}
\numberwithin{equation}{section}
\newcommand{\Lie}{\mathcal L}
\newcommand{\VV}{\mathscr V}
\newcommand{\HH}{\mathcal H}
\newcommand{\cL}{\mathscr{L}}
\newcommand{\Cl}{\mathrm{Cl}}
\newcommand{\CC}{\mathbb{C}}
\newcommand{\dd}{{\mathrm d}}
\newcommand{\vol}{\mathrm{vol}}
\newcommand{\Vol}{\mathrm{Vol}}
\newcommand{\diag}{\mathrm{diag}}
\newcommand{\Diff}{{\rm Diff}}
\newcommand{\divergence}{\mathrm{div}}
\newcommand{\gradient}{\mathrm{grad}}
\newcommand{\sD}{\slashed{D}}
\newcommand{\beq}{\begin{equation}}
\newcommand{\eeq}{\end{equation}}
\newcommand{\bal}{\begin{align}}
\newcommand{\eal}{\end{align}}
\newcommand{\bmatr}{\begin{bmatrix}}
\newcommand{\ematr}{\end{bmatrix}}
\newcommand{\overbar}[1]{\mkern 1.5mu\overline{\mkern-1.5mu#1\mkern-1.5mu}\mkern 1.5mu}
\newcommand{\bz}{\bar{z}}
\newcommand{\blangle}{\boldsymbol{\langle}}
\newcommand{\brangle}{{\boldsymbol{\rangle}}}
\newcommand{\Blangle}{\boldsymbol{\big\langle}}
\newcommand{\Brangle}{{\boldsymbol{\big\rangle}}}
 \newtcolorbox{empheqboxed}{ 
 opacityback=0,
 enhanced jigsaw,
 width=\textwidth,
 boxrule=.5pt,
 sharpish corners,
 left=0pt,
 right=2pt,
 top=-9pt, % default value 2mm
 bottom=3pt
}
\newcommand{\PPP}{{\text{\scalebox{.8}{$P$}}}}
\newcommand{\MMM}{{\text{\scalebox{.8}{$M$}}}}
\newcommand{\KKK}{{\text{\scalebox{.8}{$K$}}}}
\DeclareSymbolFont{TOneChars}{T1}{\familydefault}{m}{it}
\DeclareMathSymbol{\mathdh}{\mathord}{TOneChars}{"F0}
\def\blfootnote{\xdef\@thefnmark{}\@footnotetext}
\begin{document}

\begin{titlepage}

\title{{\bf Chiral interactions of fermions and \\massive gauge fields  in Kaluza-Klein models\vspace{.5cm}}}

\author{Jo\~ao Baptista} 
\date{\vspace{-.3cm}June 2025}

\maketitle

\thispagestyle{empty}
\vspace{1cm}
\vskip 10pt
{\centerline{{\large \bf{Abstract}}}}
\noindent
In Kaluza-Klein theory, gauge fields on $M_4$ arise as components of a higher-dimensional metric defined on $M_4 \times K$. The traditional expectation is that all the gauge fields of the Standard Model are linked to exact Killing vector fields on the internal space. This paper questions that assumption and investigates the properties of 4D gauge fields linked to non-Killing fields on $K$. It is shown that they have massive yet arbitrarily light bosons; they can mix fermions with different masses; and they can have asymmetric couplings to left- and right-handed fermions. None of these properties is easily satisfied by gauge fields linked to internal isometries. Thus, backgrounds encoding such massive gauge fields circumvent traditional no-go arguments and offer a geometric source of chiral interactions with fermions. This may help to model the weak force within the Kaluza-Klein framework. Technically, the paper uses the language of spin geometry and Riemannian submersions. It studies the higher-dimensional Dirac operator with non-trivial background metrics. The results are derived for a general $K$. They are illustrated explicitly in the simpler cases where $K$ is the two-sphere and the two-torus.

\vspace{-1.0cm}
\let\thefootnote\relax\footnote{
\noindent
{\small {\sl \bf  Keywords:} Kaluza-Klein theories; chiral fermions; Riemannian submersions; Dirac operator; weak force.}
}

\end{titlepage}

\pagenumbering{roman}

\renewcommand{\baselinestretch}{1.1}\normalsize
\tableofcontents
\renewcommand{\baselinestretch}{1.19}\normalsize

\newpage

\pagenumbering{arabic}

\section{Introduction and overview of results}
\label{Introduction}

The weak force is a special interaction in the Standard Model. Notable properties that distinguish it from the strong and electromagnetic forces include:
\begin{enumerate}[itemsep=1mm, topsep=7pt]
\item It has massive gauge bosons;
\item It mixes fermions with different masses;
\item It has asymmetric (chiral) couplings to left- and right-handed fermions.
\end{enumerate}
This paper describes how, in a general Kaluza-Klein framework, these three properties of 4D gauge fields are manifestations of a single higher-dimensional feature. So the three properties should generally appear together. Massive gauge bosons, however light, should have chiral interactions with fermions, whereas massless gauge bosons should not.

To introduce that higher-dimensional feature, recall that Kaluza-Klein theory encodes 4D gauge fields as components of a metric defined on $P=M_4 \times K$. The usual assumption is that all the gauge fields of the Standard Model are linked to isometries of the internal vacuum metric. In other words, they are linked to exact Killing vector fields on $K$ \cite{Witten81, Duff, Bailin, CJBook, CFD, WessonOverduin}. This paper probes that assumption and shows that the 4D gauge fields linked to non-Killing vector fields will in general exhibit properties 1, 2 and 3, while the gauge fields linked to isometries of $K$ will not. So the standard view appears well-founded in the case of the strong and electromagnetic fields. However, it may be preferable to link the weak gauge field to non-Killing vector fields on $K$ that are small perturbations of Killing ones.

Non-Killing vector fields generate non-isometric diffeomorphisms of the internal space. These transformations do not preserve the vacuum metric $g_K$ but do preserve the Einstein-Hilbert action, which is diffeomorphism-invariant. This is reminiscent of the standard Brout-Englert-Higgs mechanism \cite{EnBr, GHT, Higgs}, which uses gauge transformations that preserve the action but do not preserve the vacuum value of the Higgs field. The internal metric $g_K$ can therefore be understood as a more geometrical version of the traditional Higgs fields. This is apparent from the decomposition of the higher-dimensional scalar curvature $R_{g_\PPP}$ in the Einstein-Hilbert action, 
 \beq \label{GaugedSigmaModelAction0}
\int_P \, R_{g_\PPP}  \vol_{g_\PPP}   \ = \ \int_P \, \Big[\, R_{g_\MMM}  \, + \, R_{g_\KKK} \, - \, \frac{1}{4}\, |F_A|^2 \, - \,  \frac{1}{4}\,  |\dd^A  g_K|^2  \, + \,  |\dd^A \, (\vol_{g_\KKK})|^2 \, \Big] \,\vol_{g_\PPP} \, . 
\eeq
 This formula extends the usual Kaluza-Klein result to the setting of general Riemannian submersions, where the geometry of the fibres can change. For more details, see \cite{Bap, Besse}. 
 
For a general submersion metric on $P$, the gauge one-form $A_\mu$ has values in the full space of vector fields on $K$, which is the Lie algebra of the diffeomorphism group ${\rm Diff} (K)$. It can be expanded as $A^a_\mu\, e_a$, where $\{ e_a \}$ is a set of independent vector fields on $K$, some of them Killing and others non-Killing. The gauge group is ${\rm Diff} (K)$ or a subgroup. It need not act on $K$ only through isometries of $g_K$.
 As in \cite{Bap}, one can then calculate that the 4D gauge boson linked to a divergence-free $e_a$ has a classical mass given by 
\beq   \label{MassFormula}
\left(\text{Mass} \ A_\mu^a \right)^2 \ \ \propto \ \ \frac{ \int_K  \; \left\langle \Lie_{e_a}\, g_K,  \; \Lie_{e_a}\, g_K \right\rangle  \, \vol_{g_\KKK} }{ 2 \int_K  \,  g_K (e_a ,  e_a  ) \ \vol_{g_\KKK} }  \ ,
\eeq
where $\Lie_{e_a} g_K$ denotes the Lie derivative of the internal vacuum metric along $e_a$. 

Formula \ref{MassFormula} is significant. It suggests that massive gauge fields, however light, should not be linked to exact isometries of $g_K$. The derivatives $\Lie_{e_a} g_K$ can be small yet non-zero. Thus, a natural way to obtain light gauge bosons within the Kaluza-Klein framework is to start with an internal metric with a large isometry group, for example an Einstein metric, and assume that a perturbation operating at a different scale\footnote{Presumably, a higher-order correction to the Einstein-Hilbert action, or a quantum effective potential. In the usual Higgs mechanism, the ad-hoc Higgs potential forces the shift of the vacuum.} slightly shifts the vacuum metric and breaks its isometry group. Some Lie derivatives $\Lie_{e_a} g_K$ will then become non-zero and the corresponding gauge fields will become massive. If the perturbation is small, the masses will be small too. Once those gauge fields are linked to non-Killing vector fields on $K$, they can satisfy conditions 1, 2 and 3.

Among the three listed properties of gauge fields linked to non-Killing internal fields, the least predictable one is perhaps the emergence of chiral interactions with fermions. In fact, the literature has several no-go arguments against the existence of chiral fermions in Kaluza-Klein models. Firstly, the basic Kaluza-Klein mechanism to explain 4D chiral fermions relies on the correlation between internal and 4D chiralities, which only works when $K$ is even-dimensional \cite{Manton, CS, Witten83}. But an analysis of conjugation in the spin representations excludes internal spaces of dimension $4k$ \cite{CS, Wetterich83a, Wetterich83, Witten83}. Additionally, a prominent no-go argument presented by Witten in \cite{Witten83} excludes all internal spaces with even dimensions. This argument is based on an index result of Atiyah and Hirzebruch \cite{AH}, so remains valid even if the Dirac operator is continuously deformed, for instance through the addition of torsion to the metric connection.

Nonetheless, an inspection of these no-go arguments reveals a common assumption: that all 4D gauge fields of interest are linked to exact Killing vector fields on $K$. Thus, gauge fields linked to non-Killing fields on $K$ have the potential to evade the arguments, as noted in \cite{Bap}. 
Even light 4D gauge fields, linked to very small perturbations of internal Killing fields, should be able to evade the no-go arguments. This paper confirms this is indeed the case. It shows that arbitrarily light gauge fields have chiral interactions with fermions. Explicit examples are given when $K=S^2$ and $K=T^2$.

There are other well-known proposals in the literature to circumvent the no-go arguments against chiral fermions in Kaluza-Klein. For example, one can add to the model gauge fields  living on the higher-dimensional spacetime \cite{CM, Manton, CS, RSS, FY, CIF}. The additional fields will twist the internal Dirac operator and can lead to the appearance of fermionic zero modes with chiral interactions. One can also use non-compact internal spaces \cite{Wetterich84}. One can construct models using generalized versions of Riemannian geometry \cite{Weinberg84, Wetterich84, Moffat}. One can consider metric connections with topologically non-trivial torsion \cite{Neville, Tchrakian}. And a fifth, prominent strategy is to abandon the smoothness assumption and consider vacuum internal spaces with singularities, for example orbifolds. Then one can impose chiral boundary conditions at the singularity and obtain models with 4D chiral fermions \cite{DHVW, DHVW2, PQ, DDG, CDH, GGH, Sundrum, CHM}. 

To this author's prejudice, however, the mechanism proposed in the present paper seems somewhat more natural. It does not introduce any new bosonic fields besides the higher-dimensional metric, in accordance with the original Kaluza-Klein philosophy. Chiral interactions follow from the properties of the standard Dirac operator in higher-dimensions, with no need for more complicated operators. It uses smooth, compact internal spaces and the ordinary version of Riemannian geometry. And, importantly, the fermionic interactions generated in this manner can be chiral only for gauge fields with non-zero mass, as observed in nature. 

Extended reviews of the Kaluza-Klein framework can be found in \cite{Bailin, Bou, CFD, CJBook, Duff, WessonOverduin, Witten81}. Some of the early original references are \cite{Kaluza, Klein, EB, Jordan, Thiry, DeWitt, Kerner, Cho}. This paper follows the notation and treatment given in \cite{Bap, Bap2} of massive gauge fields and Higgs-like scalars. A clear limitation is that the analysis is only classical, all throughout. More comments about the literature will be added below, as we give an overview of the main results in the paper.

\subsection*{Overview of the general results}

Section \ref{Submersions} starts by describing properties of spinors on the manifold $P= M_4 \times K$ equipped with a metric $g_P \simeq (g_M, A, g_K)$ defining a Riemannian submersion. These metrics generalize the Kaluza ansatz by encoding not only the 4D metric and 4D massless gauge fields, but also massive gauge fields and an internal metric $g_K$ that can change along $M_4$. The main classical results about Riemannian submersions were developed by O'Neill in \cite{ONeill1}, after foundational work in \cite{Ehresmann, Hermann}. They are presented in \cite{Besse, FIP}, for example. We use the translation of those results to the Kaluza-Klein language provided in \cite{Bap}, which considers the case of general fibres that need not be totally geodesic. Spinors on general Riemannian submersions have been previously studied by Moroianu \cite{Moroianu} and, later, by Loubeau and Slobodeanu \cite{LS} and Reynolds \cite{Reynolds} for general conformal submersions. Section \ref{Submersions} reframes that geometric setting in the explicit language suitable for Kaluza-Klein physics. 

Section \ref{SectionDecompositionDiracOperator} uses that same language to describe the covariant derivatives of spinors over Riemannian submersions. They split into derivatives along vertical fields (vector fields on $K$) and along horizontal  fields (lifts of vector fields on $M_4$). These two cases come together in the expression for the standard Dirac operator on $P = M_4 \times K$. This formula is spelled out in proposition \ref{SimplerDecompositionDiracOperator}, for the case of a constant internal metric $g_K$, and in proposition \ref{thm:DecompositionDiracOperator} for the general case. Although mathematically equivalent to a special case of the formulae given by Reynolds in \cite{Reynolds}, the expressions given in these propositions look quite different. They have two advantages for our purposes. First, they use the physical language of Kaluza-Klein models. Second, by explicitly developing certain terms left unexpanded in \cite{Reynolds}, they show that general gauge fields couple to 4D spinors through the Kosmann-Lichnerowicz derivative $\cL_X$ of the internal component of those spinors. This fact is well-known in the physics literature for gauge fields linked to internal Killing vector fields \cite{Witten83, Wetterich83}. The generalization to gauge fields linked to a non-Killing $X$ over a general Riemannian submersion does not seem to have been established yet. Thus, very schematically, one can write
\beq
\label{GeneralFermionCoupling}
\sD^P \Psi \; = \; \sum_{\mu=0}^3\,  \Gamma^\mu\, (\partial_\mu + A^a_\mu \, \,\cL_{e_a})\, \Psi \; + \; \sD^K \, \Psi \; +\; \cdots \ ,
\eeq
where $\Psi$ is a higher-dimensional spinor and $\sD^P$ and $\sD^K$ denote the standard Dirac operators on $P$ and $K$. When $\Psi$ is an eigenspinor of the internal Dirac operator, the term $\sD^K \, \Psi$ produces a mass term in four-dimensions, as is customary in Kaluza-Klein models. The gauge fields have values in  the Lie algebra of vector fields on $K$. So the $\{ e_a \}$ form a basis of such vector fields. Some of them will be Killing and others non-Killing.

Due to its appearance in decomposition \eqref{GeneralFermionCoupling}, the Kosmann-Lichnerowicz derivative $\cL_X$ plays a major role in this story. So we take section \ref{PropertiesKLD} to collect useful properties of this operator for non-Killing $X$. Almost all of them can be found in the literature in the Riemannian case \cite{Kosmann, BT}, but are somewhat dispersed and stated using various conventions. One property that is especially relevant for us is the commutator $[\cL_X, \sD^K]$. Given a vector field $X$ and a spinor $\psi$ on $(K, g_K)$, that commutator is
\begin{align}
[\sD^K\!,\,  \cL_X] \, \psi  \ = \ & \frac{1}{2}\, \sum_{i, j} \, (\Lie_X g_K) (v_i , v_j) \ v_i \cdot \nabla_{v_j} \psi  \ + \linebr
& + \frac{1}{4}\, \sum_{i, j}  \, \big\{ [\nabla_{v_i}(\Lie_X g_K)] (v_i , v_j)  \; - \;    [\nabla_{v_j}(\Lie_X g_K)] (v_i , v_i)  \big\} \; v_j \cdot \psi  \ . \nonumber 
\end{align}
Here $\nabla$ denotes the Levi-Civita connection and $\Lie_X g_K$ is the Lie derivative of $g_K$ along $X$. The $\{v_j\}$ are a local, orthonormal trivialization of the tangent bundle $TK$. So the commutator vanishes when $X$ is a Killing vector field, but in general does not. In light of \eqref{GeneralFermionCoupling}, this says that a gauge field $A_\mu^a$ will mix fermions with different masses only when the respective vector field $e_a$ is not Killing on $K$. This is one reason why the weak gauge field should be linked to non-Killing internal vector fields.

Now assume that $K$ is even-dimensional and let $\Gamma_K$ denote the complex chirality operator on $K$, normalized so that $(\Gamma_K)^2 = 1$. We can decompose the complex spinor bundle as $S_\CC(K) = S_\CC^+(K)  \oplus S_\CC^-(K) $ and write each spinor as a sum of chiral components,
\beq
\label{DefChiralProjections}
\psi \ = \ \psi^+ \; + \;  \psi^- \qquad \quad {\rm with} \quad \   \psi^\pm \; =\; \frac{1}{2} \, (1 \pm \Gamma_K) \, \psi \ .
\eeq
The Kosmann-Lichnerowicz derivative $\cL_X$ always commutes with $\Gamma_K$. So it preserves the spaces of sections of the half-spinor bundles,  denoted $V^\pm$. These are infinite-dimensional spaces of Weyl spinors. The question we want to study in section \ref{ChiralFermions} is whether $\cL_X$ acts similarly on $V^+$ and $V^-$. Using the standard assumption of correlation between 4D and internal chiralities \cite{Witten83}, that is equivalent to asking whether the gauge field linked to $X$ acts symmetrically on left- and right-handed 4D spinors.  

When $\cL_X$  acts similarly on $V^+$ and $V^-$, we say that it has chiral symmetry. This notion should imply an identity of matrix elements $\langle \phi^+, \cL_X \psi^+ \rangle = \langle \phi^-, \cL_X \psi^- \rangle$ for all $\sD^K$-eigenspinors $\phi$ and $\psi$ with positive eigenvalues. In fact, in section \ref{ChiralFermions} we will distinguish two notions of chiral symmetry: when that identity holds exactly (strong symmetry), and when it holds only up to a unitary redefinition of the $\sD^K$-eigenspinors (weak symmetry). In the latter case, the redefinitions must respect the mass eigenvalues and the symmetries of spinors induced by the isometry group of $g_K$. 

Let us probe the main question. Consider two $\sD^K\!$-eigenspinors $\phi_{m'}$ and $\psi_{m}$ with positive eigenvalues $m'$ and $m$. The Dirac operator anti-commutes with $\Gamma_K$ and is formally self-adjoint with respect to the $L^2$-inner-product of spinors. The operator $\Gamma_K$ is self-adjoint with respect to that same product. So a simple calculation yields
\begin{align}
\label{RelationCommutatorChirality}
\int_K \langle\, \phi_{m'}, \, [\sD^K\!,\,  \cL_X] \, \Gamma_K  \psi_m \, \rangle \, \vol_{g_\KKK}\, &= \, (m+m') \int_K  \langle\,  \Gamma_K \phi_{m'}, \; \cL_X  \psi_m \, \rangle  \, \vol_{g_\KKK}  \linebr
&= \, (m+m') \int_K \big\{ \langle\, \phi_{m'}^+ , \; \cL_X \psi_m^+  \, \rangle   -   \langle\, \phi_{m'}^- , \; \cL_X \psi_m^-  \, \rangle \big\} \, \vol_{g_\KKK} \, . \nonumber
\end{align}
When $X$ is Killing on $K$, the commutator $ [\sD^K\!,\,  \cL_X]$ vanishes and the matrix elements of $\cL_X$ are the same on left- and right-handed spinors. When $X$ is not Killing, $\cL_X$ should in general act differently on $\psi_m^+$ and $\psi_m^-$. So it will not have strong chiral symmetry.

The traditional no-go arguments against the existence of chiral fermions in Kaluza-Klein models are encapsulated in the following important result:
\begin{proposition}[\!\!\cite{Witten83, AH}]
\label{thm: KillingCase}
If $X$ is a Killing vector field on a compact, connected, even-dimensional spin manifold $K$, the derivative $\cL_X$ has strong chiral symmetry.\footnote{This result does not follow directly from \eqref{RelationCommutatorChirality} because that formula is uninformative when $m=m'=0$.}
\end{proposition}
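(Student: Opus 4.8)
The plan is to split the spectrum of $\sD^K$ into its strictly positive part and its kernel, and to handle these by entirely different means. The common starting point is that for a Killing field $\Lie_X g_K = 0$, so every term of the commutator formula quoted above vanishes and $[\sD^K, \cL_X] = 0$; consequently $\cL_X$ commutes with $\sD^K$ and (as always) with $\Gamma_K$, and therefore preserves each eigenspace of $\sD^K$ and each chirality sector $V^\pm$.

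On the massive modes the argument is immediate. Substituting $[\sD^K, \cL_X] = 0$ into \eqref{RelationCommutatorChirality} annihilates its left-hand side, leaving $(m+m')\int_K \{\langle \phi_{m'}^+, \cL_X \psi_m^+\rangle - \langle \phi_{m'}^-, \cL_X \psi_m^-\rangle\}\,\vol_{g_\KKK} = 0$ for eigenspinors $\phi_{m'}, \psi_m$ with eigenvalues $m', m$. Since $m, m' > 0$ the factor $m+m'$ is nonzero and I would divide it out to obtain the exact identity $\langle \phi_{m'}^+, \cL_X \psi_m^+\rangle = \langle \phi_{m'}^-, \cL_X \psi_m^-\rangle$. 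This is strong chiral symmetry on the massive part of the spectrum, and it needs no hypothesis on $K$ beyond that $X$ be Killing.

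The real content sits in the kernel $\ker \sD^K = H^+ \oplus H^-$, with $H^\pm = \ker \sD^K \cap V^\pm$, where \eqref{RelationCommutatorChirality} is vacuous because the prefactor $m+m'$ vanishes. Here I would argue representation-theoretically instead. As $X$ is Killing, $\cL_X$ is the generator of the unitary lift of the isometry flow, hence skew-adjoint; on the finite-dimensional spaces $H^\pm$ it is then diagonalisable with purely imaginary weights, and the task reduces to showing that these weights occur with the same multiplicities on $H^+$ as on $H^-$, that is, $H^+ \cong H^-$ as modules for the flow of $X$.

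The hard part is precisely this zero-mode statement, where the topology enters irreducibly. Replacing $X$ by the torus obtained as the closure of its flow yields a nontrivial circle action on the compact spin manifold $K$, and I would examine the $S^1$-equivariant index $[H^+] - [H^-] \in R(S^1)$ of $\sD^K$. For the Dirac operator this character is rigid -- constant in the group variable -- while at the identity it equals the ordinary index $\hat A(K)$, which vanishes by the Atiyah--Hirzebruch theorem \cite{AH} for a compact spin manifold carrying a nontrivial circle action. A constant character with value zero can only be the zero virtual character, so $[H^+] = [H^-]$ and hence $H^+ \cong H^-$ as $S^1$-representations; differentiating the isomorphism returns the sought equivalence of $\cL_X$ on the two chirality sectors. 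Unlike the massive case, this conclusion cannot be read off from the vanishing of the commutator alone and genuinely requires the index input -- which is exactly why the result is credited to Witten and to Atiyah--Hirzebruch.
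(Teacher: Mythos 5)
Your proposal follows essentially the same route the paper intends and credits to \cite{Witten83, AH}: on the massive spectrum, the vanishing of $[\sD^K, \cL_X]$ for Killing $X$ turns \eqref{RelationCommutatorChirality} into the exact matrix-element identity after dividing by $m+m'>0$; commutation with $\sD^K$ and $\Gamma_K$ gives preservation of the kernel; and the zero modes are handled by the Atiyah--Hirzebruch equivariant index theorem. This correctly delivers points \ref{SCSit1} and \ref{SCSit2} of definition \ref{DefStrongChiralSymmetry}, and the index input for the kernel is exactly the paper's (the footnote to the proposition flags precisely that \eqref{RelationCommutatorChirality} cannot reach $m=m'=0$).

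There is, however, one concrete gap in your treatment of the kernel. Point \ref{SCSit3} of definition \ref{DefStrongChiralSymmetry} requires the unitary map $\Theta_0: V_0^+ \to V_0^-$ to be equivariant under the \emph{full} connected isometry group $G$ of $g_K$, not merely under the torus $T_X$ obtained as the closure of the flow of $X$. Your index argument only yields $[H^+]=[H^-]$ in $R(S^1)$ (equivalently $R(T_X)$), and two $G$-representations can restrict isomorphically to a single circle subgroup without being isomorphic as $G$-representations (take $G=T^2$ with the circle being one factor: the characters $(1,0)$ and $(1,5)$ agree on it). So the $\Theta_0$ you construct need not be $G$-equivariant, and the closure of the flow of a single Killing field is in general not a maximal torus, so injectivity of restriction to a maximal torus cannot be invoked. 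The repair is to run the same rigidity/vanishing argument for the $G$-equivariant index: since the isometry action is effective, every non-trivial circle subgroup of $G$ acts non-trivially, so by \cite{AH} the character of $[V_0^+]-[V_0^-]$ vanishes on every such circle; elements lying on circle subgroups are dense in $G$ and characters are continuous, so the virtual character vanishes identically in $R(G)$. This is the vanishing of the $\rho$-index for every irreducible representation $\rho$ that the paper invokes in section \ref{InternalChiralFermions}, and it produces a genuinely $G$-equivariant $\Theta_0$; your final step (equivariance implies $\Theta_0$ intertwines $\cL_X$, hence the matrix-element identity by unitarity) then goes through unchanged.
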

This symmetry, together with the correlation between internal and 4D chiralities, forces all massless gauge fields to have the same couplings to left- and right-handed 4D fermions. 
The main message of the present paper, in contrast, is that this property of $\cL_X$ is specific to Killing fields. It should not be generic. It will not hold for most non-Killing vector fields on $K$, even if they are small perturbations of Killing fields. In fact, the explicit calculations on the sphere described in section \ref{ExplicitExampleS2} show that:
\begin{proposition}
\label{thm: SphereCounterexample}
Let $X$ be a generic divergence-free vector field on  $K=S^2$ equipped with its unique spin structure. Then the derivative $\cL_{X}$ of spinors does not have strong chiral symmetry.
\end{proposition}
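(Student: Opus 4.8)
The plan is to reduce the statement to the non-vanishing of a single matrix element of the commutator $[\sD^K,\cL_X]$, and then to promote this to a generic statement using linearity in $X$. Since $S^2$ has positive scalar curvature it carries no harmonic spinors, so every eigenvalue of $\sD^K$ is non-zero; the relation \eqref{RelationCommutatorChirality} then holds without the $m=m'=0$ caveat, and strong chiral symmetry is \emph{exactly} equivalent to the vanishing of
\[
\int_{S^2}\langle\, \phi_{m'},\,[\sD^K\!,\,\cL_X]\,\Gamma_K\psi_m\,\rangle\,\vol_{g_\KKK}
\]
for all $\sD^K$-eigenspinors $\phi_{m'},\psi_m$ with positive eigenvalues. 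Because $\cL_X$ is linear in $X$, for each fixed pair $\phi,\psi$ the quantity $D_{\phi\psi}(X):=\langle\phi^+,\cL_X\psi^+\rangle-\langle\phi^-,\cL_X\psi^-\rangle$ is a (real-)linear functional of $X$ on the space of divergence-free fields. Hence it suffices to exhibit one pair $\phi,\psi$ for which $D_{\phi\psi}$ is not identically zero: its non-vanishing locus is then open and dense, and on it strong chiral symmetry fails.

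Next I would set up the representation theory on the round $S^2$. Divergence-free fields are the Hamiltonian fields $X_f$ of functions $f$ modulo constants, so they decompose into $\mathrm{SO}(3)$-irreducibles of spin $\ell\ge 1$; the $\ell=1$ part is the Killing algebra $\mathfrak{so}(3)$, and $\Lie_{X_f}g_K=0$ precisely there. The Dirac eigenspinors of eigenvalue $\pm(n+1)$ span the spin $j=n+\tfrac12$ representation. Since $X\mapsto\cL_X$ is $\mathrm{SO}(3)$-equivariant, the operators $\{[\sD^K,\cL_X]:X\in(\text{spin }\ell)\}$ form a rank-$\ell$ irreducible tensor operator, and the Wigner--Eckart theorem controls which matrix elements can survive: between eigenspinors of spin $j$ and $j'$ a rank-$\ell$ operator contributes only if $|j-j'|\le\ell\le j+j'$. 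For a generic $f$ the tensor $\Lie_{X_f}g_K$ has a non-zero $\ell=2$ component, and the smallest eigenspinors that can pair with a rank-$2$ operator are those of spin $j=\tfrac12$ (eigenvalue $1$) and $j'=\tfrac32$ (eigenvalue $2$); note that the diagonal choice $j=j'=\tfrac12$ is annihilated by the selection rule $2\le 1$, so the lowest eigenspace alone does not detect the effect.

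I would therefore take $X_0=X_{f_2}$ with $f_2$ a degree-$2$ spherical harmonic, $\psi=\psi_1$ in the eigenvalue-$1$ space and $\phi=\phi_2$ in the eigenvalue-$2$ space, and evaluate $\int_{S^2}\langle\phi_2,[\sD^K,\cL_{X_0}]\Gamma_K\psi_1\rangle\,\vol_{g_\KKK}$ explicitly. By Wigner--Eckart this equals a non-zero Clebsch--Gordan coefficient times the reduced matrix element of the rank-$2$ operator, and by \eqref{RelationCommutatorChirality} (with the harmless factor $m+m'=3$) its non-vanishing is equivalent to $D_{\phi_2\psi_1}(X_0)\neq0$. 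Concretely this uses the explicit eigenspinors of $\sD^K$ written via spin-weighted spherical harmonics, the Clifford action, and the covariant derivatives of $\Lie_{X_0}g_K$ that appear in the commutator formula of section \ref{PropertiesKLD}. Once one such matrix element is shown to be non-zero, the functional $D_{\phi_2\psi_1}$ is non-trivial, its kernel is a proper closed hyperplane containing every symmetric $X$, and the proposition follows at once.

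The main obstacle is this last non-vanishing computation. The representation theory only tells us which matrix elements are \emph{allowed} to be non-zero; it does not exclude an accidental cancellation forcing the reduced matrix element itself to vanish. Establishing that it is genuinely non-zero requires the honest spinor calculus on $S^2$, inserting the explicit eigenspinors and the two-derivative tensor $\Lie_{X_0}g_K$ into the commutator and carrying out the angular integral. I would guard against sign and normalization errors by cross-checking the $\ell=1$ (Killing) case, where the same computation must return zero in agreement with Proposition \ref{thm: KillingCase}.
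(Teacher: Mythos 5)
Your scaffolding is sound, and it even makes explicit a point the paper leaves implicit: since the round $S^2$ has positive scalar curvature, $\sD^K$ has trivial kernel, so conditions 2 and 3 of definition \ref{DefStrongChiralSymmetry} are vacuous and strong chiral symmetry is exactly equivalent, via \eqref{RelationCommutatorChirality}, to the vanishing of all commutator matrix elements between positive-eigenvalue eigenspinors; the linearity-plus-genericity reduction to a single non-vanishing matrix element is also correct. The genuine gap is in the concrete instance you commit to. The matrix element you propose to compute --- between the eigenvalue-$1$ space (spin $\tfrac12$) and the eigenvalue-$2$ space (spin $\tfrac32$), with a degree-$2$ Hamiltonian perturbation --- is \emph{identically zero}, even though it is allowed by the Wigner--Eckart selection rule. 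This is precisely the ``accidental cancellation'' you flag as a risk in your last paragraph, and it actually occurs here: the paper's proposition \ref{thm: ChiralMatrixElementsS2} gives, for every Hamiltonian field $X_h$,
\[
\blangle\, \psi_{l,\, n}^+ \, , \, \cL_{X_h} \psi_{\frac{1}{2},\, \pm \frac{1}{2}}^+  \, \brangle_{L^2}  \,  - \,   \blangle\, \psi_{l,\, n}^- \, , \, \cL_{X_h} \psi_{\frac{1}{2},\, \pm \frac{1}{2}}^-  \, \brangle_{L^2} \; = \;  \frac{i}{8}\, \sqrt{\frac{l\pm n}{2\pi\, l}} \, \Big(l- \frac{3}{2} \Big)  \Big(l- \frac{1}{2} \Big) \; \blangle \, Y_{l - \frac{1}{2},  \,  n \mp \frac{1}{2}}^0 \, , \, h \, \brangle_{L^2} \ ,
\]
and the factor $\big(l-\tfrac32\big)$ annihilates exactly your choice $l = \tfrac32$, for all $n$, both signs, and every $h$. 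So the functional $D_{\phi_2\psi_1}$ you intend to show is non-trivial vanishes identically on the space of divergence-free fields, the genericity argument has nothing to act on, and the proof stalls at its decisive step. The first non-zero asymmetry involving the lightest eigenspinor sits one level higher, at $l=\tfrac52$ (eigenvalue $3$), paired with the degree-$2$ harmonic component of $h$ --- still consistent with your selection rule $\ell \le j+j'$ since $2 \le \tfrac12 + \tfrac52$, but not where you propose to look.

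The repair is either to relocate your computation to the pair $(j,j')=(\tfrac12,\tfrac52)$, or to do what the paper does and avoid guessing altogether: the paper computes the whole family of asymmetries in closed form by exploiting that $\psi_{\frac12,\pm\frac12}$ are Killing spinors \eqref{KillingSpinorsS2}, so that lemma \ref{thm: KLDerivativeRS} collapses $\cL_{X_h}\psi_{\frac12,\pm\frac12}$ to $-\tfrac12 X_h\cdot\psi_{\frac12,\pm\frac12} - \tfrac{i}{4}(\Delta h)\,\Gamma_K\psi_{\frac12,\pm\frac12}$, and then lemma \ref{thm:AuxChiralFermionsS2} together with the Poisson-bracket identity \eqref{PoissonBracketSH} reduces every matrix element to scalar integrals of $h$ against spherical harmonics. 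Genericity is then read off from the closed formula (non-zero coefficients for all $l\ge\tfrac52$) rather than argued abstractly, and the Killing cross-check you suggest is automatic, since the degree-$1$ components of $h$ enter only through the vanishing coefficient at $l=\tfrac32$.
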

The calculations on the torus described in section \ref{ExplicitExampleT2} go further. They show that it is not possible to rotate the eigenspinors of $\sD$ within their eigenspaces so that the action of $\cL_{X}$ on $V^+$ looks similar to its action on $V^-$.
\begin{proposition}
\label{thm: TorusCounterexample}
Let $X$ be a generic Hamiltonian vector field on the torus $K=T^2$ equipped with its trivial spin structure. Then the derivative $\cL_{X}$ has neither strong nor weak chiral symmetries.
\end{proposition}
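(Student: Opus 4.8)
The plan is to make everything explicit on the flat torus, where the Dirac operator diagonalises in Fourier modes. Take $T^2=\mathbb R^2/(2\pi\mathbb Z)^2$ with coordinates $(\theta_1,\theta_2)$, trivialise the spinor bundle as $T^2\times\mathbb C^2$ with Clifford multiplication $v_j\cdot=i\sigma^j$ and $\Gamma_K=\sigma^3$. For the trivial spin structure the eigenspinors of $\sD^K=i(\sigma^1\partial_1+\sigma^2\partial_2)$ are the plane waves $\psi_n=e^{in\cdot\theta}u_n$ indexed by $n\in\mathbb Z^2$, with eigenvalues $\pm|n|$; the positive-eigenvalue eigenspinor has $u_n\propto(n_1-in_2,\,-|n|)$, so its chiral components $\psi_n^\pm$ are simply its upper and lower entries. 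Writing a Hamiltonian field as $X_H=(\partial_2H,-\partial_1H)$ with $H=\sum_q h_q\,e^{iq\cdot\theta}$ real (so $h_{-q}=\overline{h_q}$), the Kosmann--Lichnerowicz derivative becomes $\cL_{X_H}=X_H^j\partial_j-\tfrac{i}{4}\,\Delta H\,\Gamma_K$, the Clifford term arising from $\dd X_H^\flat=-\Delta H\,\dd\theta_1\wedge\dd\theta_2$.

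First I would dispose of strong symmetry. Feeding the plane waves into $\cL_{X_H}$ and integrating over $T^2$ yields the closed forms
\begin{align}
\langle \psi_{n'}^+,\,\cL_{X_H}\psi_n^+\rangle &\;\propto\; h_{n'-n}\;\tfrac{(n_1'+in_2')(n_1-in_2)}{|n'|\,|n|}\Big[(n'\times n)+\tfrac{i}{4}|n'-n|^2\Big], \nonumber \\
\langle \psi_{n'}^-,\,\cL_{X_H}\psi_n^-\rangle &\;\propto\; h_{n'-n}\;\Big[(n'\times n)-\tfrac{i}{4}|n'-n|^2\Big], \nonumber
\end{align}
where $n'\times n=n_1'n_2-n_2'n_1$. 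Choosing $n=(1,0)$ and $n'=(0,1)$ (both of mass $1$) makes the two brackets unequal whenever $h_{(-1,1)}\neq0$, which holds for generic $H$. Hence $\cL_{X_H}$ has no strong chiral symmetry; this also follows from \eqref{RelationCommutatorChirality} and the nonvanishing of $[\sD^K,\cL_X]$, but the explicit matrix elements are what the weak case needs.

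For weak symmetry the key reduction is that a mass-respecting unitary redefinition of the eigenspinors forces the mass-$m$ diagonal blocks $A^\pm_m=\big(\langle\psi_{n'}^\pm,\cL_X\psi_n^\pm\rangle\big)_{|n|=|n'|=m}$ on $V^+$ and $V^-$ to be unitarily equivalent, hence isospectral; so it suffices to exhibit one $m$ for which their spectra differ. Conjugating $A^+_m$ by the diagonal unitary $\diag\big(e^{\,i\arg(n_1+in_2)}\big)$ removes the spinor phases and turns it into $S+R$, while $A^-_m=S-R$, where $S_{n'n}=\tfrac12 h_{n'-n}(n'\times n)$ and $R_{n'n}=\tfrac{i}{8}h_{n'-n}|n'-n|^2$ are both anti-Hermitian. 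The quadratic invariant is inconclusive: $\operatorname{tr}((S\pm R)^2)$ differ only by $4\operatorname{tr}(SR)$, which vanishes because its summand is antisymmetric under $n\leftrightarrow n'$ (the cross product is odd while $|h_{n'-n}|^2|n'-n|^2$ is even). I would therefore pass to the cubic invariant, $\operatorname{tr}((S+R)^3)-\operatorname{tr}((S-R)^3)=6\operatorname{tr}(S^2R)+2\operatorname{tr}(R^3)$, a polynomial in the $h_q$. Working at $m=1$ with the four modes $(\pm1,0),(0,\pm1)$, the monomial $h_{(1,-1)}h_{(1,1)}h_{(-2,0)}$ survives with nonzero total coefficient, so this polynomial is not identically zero. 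Consequently, for generic $H$ the blocks have distinct spectra, no mass-respecting unitary can intertwine them, and since the isometry constraint only shrinks the admissible unitaries further, $\cL_X$ has neither weak nor strong chiral symmetry.

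The main obstacle is precisely the collapse of the quadratic invariant: because $\operatorname{tr}(SR)=0$ one cannot separate the spectra at second order and is forced into the more delicate cubic computation, where the genuine $R^3$ and $S^2R$ contributions must be shown not to cancel. Verifying that a single monomial (such as the one above) has nonzero coefficient in $6\operatorname{tr}(S^2R)+2\operatorname{tr}(R^3)$ is the crux; once this is established, the genericity statement follows at once, since the vanishing locus of a nonzero polynomial in the Fourier coefficients is nowhere dense.
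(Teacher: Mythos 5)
Your proof is correct, and while the strong-symmetry half coincides in substance with the paper's (both read the failure off the $\mp\tfrac{i}{4}\,\divergence(JX)$ term in the explicit matrix elements), your treatment of weak symmetry takes a genuinely different route. The paper leans on the $G$-equivariance clause of definition \ref{ChiralIsomorphism}: since every isotypic space $V_{m,\rho}^{\pm}$ on the flat torus is one-dimensional, a chirality isomorphism must act by a complex phase on each Fourier mode, and the matching condition \eqref{PerfectMatchT2} reduces to asking whether the ratio $a_{l_1,l_2,n_1,n_2}=-\,\overbar{r_{l,n}}/r_{l,n}$ is separable as $\overbar{a_l}\,a_n$; the paper excludes this via the failure of the cocycle identity $a_{l,n}\,a_{l',n'}=a_{l,n'}\,a_{l',n}$. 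You discard equivariance entirely and use only that $\Theta$ commutes with $\sD^2$, so that the mass-diagonal blocks $A_m^{\pm}$ of $\cL_{X_h}^{\pm}$ must be unitarily equivalent, hence isospectral, and you obstruct this with the cubic trace invariant. This yields a strictly stronger conclusion (no mass-respecting unitary intertwiner exists at all, equivariant or not), at the price of a subtler computation, since the quadratic invariant collapses --- your observation that $\operatorname{tr}(SR)=0$ by the antisymmetry of the cross product is right --- and one must go to third order. I checked your crux claim: restricting to the modes $(\pm1,0),(0,\pm1)$, the only cycles producing the monomial $h_{(1,-1)}h_{(1,1)}h_{(-2,0)}$ are $(1,0)\to(0,1)\to(-1,0)$ and $(0,-1)\to(-1,0)\to(1,0)$; since $S$ vanishes on antipodal pairs (the cross product of collinear vectors is zero), only the two rotations placing $R$ on the $(-2,0)$ edge survive in $\operatorname{tr}(S^2R)$, giving that monomial the coefficient $\tfrac{i}{4}$ there, while its coefficient in $\operatorname{tr}(R^3)$ is $-\tfrac{3i}{16}$; hence in $6\operatorname{tr}(S^2R)+2\operatorname{tr}(R^3)$ it carries $\tfrac{3i}{2}-\tfrac{3i}{8}=\tfrac{9i}{8}\neq0$, as you asserted. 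The one point you should state explicitly is how genericity interacts with the reality constraint $h_{-q}=\overbar{h_q}$: the trace difference is a polynomial in the variables $h_q,\overbar{h_q}$, and a polynomial of this kind that is not formally zero cannot vanish identically on the reality locus; since no other triple of mass-one modes reproduces the same monomial, its single nonzero coefficient suffices.
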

The precise definitions of strong and weak chiral symmetries are given in section \ref{InternalChiralFermions}. The fact that $\cL_{X}$ does not have such symmetries for a non-Killing $X$ should hold much more generally, beyond the simple cases illustrated in this paper.

Thus, the broad picture provided in this paper is that the solutions of the equation $\sD^P \Psi = 0$ do not involve chiral fermionic interactions in regions where the background metric is a simple product, $g_P = g_M + g_K$, or where it is a submersion metric $g_P \simeq (g_M, A, g_K)$ with constant $g_K$ and only massless gauge fields. In regions where massive gauge fields are present, in contrast, the background geometry becomes very different from a product and, in general, chiral fermionic interactions will emerge.

\subsection*{Explicit example: $K=S^2$}

Choose $K=S^2$ equipped with its round metric and unique spin structure. Its space of spinors has an infinite basis formed by eigenspinors $\psi_{l,\, n}$ of the internal Dirac operator. Each eigenspinor determines the properties of the 4D fermion $\varphi$ coupled to it in the tensor product $\Psi = \varphi \otimes \psi_{l,\, n}$, which is a spinor over $M_4\times S^2$. The half-integer $l$ accounts for the $\sD^K\!$-eigenvalue of $\psi_{l,\, n}$. It determines the 4D mass of $\varphi$. The half-integer $n$ is in the range $\{-l, -l+1, \ldots , l \}$ and reflects the eigenvalue of $\psi_{l,\, n}$ under the action of the internal operator $\cL_{\partial_\phi}$. Here $\partial_\phi$ is the Killing vector field generating azimuthal rotations on $S^2$. That eigenvalue measures the charge of $\varphi$  when responding to the massless 4D gauge field linked to $\partial_\phi$, as follows from \eqref{GeneralFermionCoupling}. Thus, the only characteristic numbers of 4D fermions in this model are their mass and $\partial_\phi$-charge.

Now suppose that a 4D gauge field is linked to an internal, divergence-free vector field $X$ on $S^2$. On the sphere, $X$ is necessarily the Hamiltonian vector field $X_h$ of some real function $h$. This simplifies the calculation of $\cL_{X_h}$. For example, when acting on the internal spinor $\psi_{\frac{1}{2},\, \frac{1}{2}}$, a direct computation in section \ref{ChiralFermionsS2} says that
\begin{multline}
\label{ChiralExampleS2}
\int_{S^2} \big\{ \langle\, \psi_{l,\, n}^+ \, , \; \cL_{X_h} \psi_{\frac{1}{2},\, \frac{1}{2}}^+  \, \rangle   -   \langle\, \psi_{l,\, n}^- \, , \; \cL_{X_h} \psi_{\frac{1}{2},\, \frac{1}{2}}^-  \, \rangle \big\} \, \vol_{g_\KKK}  \; = \;  \linebr 
=\; \frac{i}{8}\, \sqrt{\frac{l+ n}{2\pi\,  l}} \, \Big(l- \frac{3}{2} \Big) \, \Big(l- \frac{1}{2} \Big)  \int_{S^2}\, \overbar{Y_{l - \frac{1}{2},  \,  n - \frac{1}{2}}^0}  \,\,  h \, \, \vol_{g_\KKK} \ .
\end{multline}
Here $Y_{l - \frac{1}{2},  \,  n - \frac{1}{2}}^0$ denotes the scalar spherical harmonic on $S^2$. This formula shows that the massive 4D gauge field linked to  $X_h$ will generically have chiral interactions with the lightest 4D fermion, i.e. with the fermion $\varphi$ appearing in the higher-dimensional spinor $ \varphi \otimes \psi_{\frac{1}{2},\, \frac{1}{2}}$. The amount of chirality depends on $X_h$ through the harmonic components of the Hamiltonian function $h$. One can check that for the Killing choice $X_h = \partial_\phi$ the right-hand side of \eqref{ChiralExampleS2} always vanishes. Note also that chirality is manifest only in the higher mass components of $\cL_{X_h} \, \psi_{\frac{1}{2}, \, \frac{1}{2}}$, namely, in the components with $l \geq 5/2$. Returning to the Kaluza-Klein model over $M_4 \times S^2$, this means that the lightest 4D fermions can have chiral interactions only when higher mass fermions are also involved, not when the interaction involves only the lightest fermions.

\subsection*{Explicit example: $K=T^2$}

The simplest example to calculate is when $K$ is the two-torus $T^2 = \mathbb{R}^2 / \mathbb{Z}^2$ with its flat metric and trivial spin structure. Then the Dirac operator $\sD^K$ has eigenspinors $\psi_{l_1,\,  l_2}$  characterized by two integers $(l_1, l_2) \in  \mathbb{Z}^2$. 
The spinors $\psi_{l_1, \,l_2}$ together with their counterparts $\Gamma_K\, \psi_{l_1, \, l_2}$ form a basis of the space of all spinors on $T^2$. The kernel of $\sD^K$ is spanned by $\psi_{0, \, 0}$ and $\Gamma_K\, \psi_{0,\,  0}$. The eigenspinors of the Dirac operator on the flat torus, for all possible spin structures, are described in \cite{Ginoux, Friedrich}.

Each eigenspinor $\psi_{l_1, \, l_2}$ determines the properties of the 4D fermion $\varphi$ coupled to it in $\Psi = \varphi \otimes \psi_{l_1, \, l_2}$, which is a spinor over $M_4\times T^2$. For instance, the integer $l_j$ determines the charge of $\varphi$ with respect to the 4D gauge field linked to the Killing field $\partial_{x^j}$ on $K$. Here $(x^1, x^2)$ are the periodic Euclidean coordinates on the torus.

Now consider a 4D gauge field linked to an arbitrary vector field $X$ on $T^2$, not necessarily Killing. It couples to fermions through the derivative $\cL_X$. Decompose the $\sD^K\!$-eigenspinors as sums of Weyl components, $ \psi_{l_1, \, l_2} =  \psi_{l_1, \, l_2}^+ +  \psi_{l_1, \, l_2}^-$. Section \ref{ExplicitExampleT2} shows that
\beq
\cL_X\, \psi_{l_1, \, l_2}^\pm \; = \; 2\pi i\,  (X^1 l_1 + X^2 l_2) \,  \psi_{l_1, \, l_2}^\pm \; \mp \; \frac{i}{4} \, \divergence (JX)\, \psi_{l_1, \, l_2}^\pm \ ,
\eeq
 where $J$ denotes the complex structure on $T^2$. So when $\divergence (JX)$ is not zero, $\cL_X$ will act differently on $V^+$ and $V^-$ and will not have strong chiral symmetry. The Killing vector fields $J\, \partial_{x^j}$, however, do have vanishing divergence on $T^2$. So the $\cL_{\partial_{x^j}}$ and their linear combinations define the same $T^2$-representations on $V^+$ and $V^-$, as expected.

Finally, denote by $\cL_X^\pm$ the restriction of the Kosmann-Lichnerowicz derivative to $V^\pm$. Consider the simpler case where $X_h$ is a Hamiltonian vector field determined by a function $h$ on the torus. Section \ref{ExplicitExampleT2} also establishes that, for a generic $X_h$, there exists no invertible, unitary, $T^2$-equivariant linear map $\Theta: V^+ \rightarrow V^-$ that commutes with $\sD^2$ and satisfies $\Theta^{-1} \cL_{X_h}^- \Theta = \cL_{X_h}^+$. So $\cL_{X_h}$ does not have weak chiral symmetries.

\subsection*{Conserved currents and an additional comment}

Section \ref{ConservedCurrents} briefly investigates the conserved currents associated with solutions of the Dirac equation on $P$. Proposition \ref{SpinorCurrents} describes a natural relation between higher-dimensional and 4D currents. This follows from a general procedure to average higher-dimensional vector fields along the fibres of $P$, described in appendix \ref{SectionFibreProjection}. Section \ref{ConservedCurrents} also defines charge currents associated with Killing vector fields on $P$.

Now a final note. This paper investigates the properties of spinors satisfying the higher-dimensional Dirac equation $\sD^P \Psi = 0$, or the Weyl equation with constraints $\Gamma_P \Psi = \pm \Psi$. It studies how these spinors couple to the 4D objects that determine the background metric on $P$, such as the metric $g_M$, the gauge fields $A^a_\mu$, and the Higgs-like scalars coming from the components of $g_K$. These couplings determine how the higher-dimensional spinors are perceived in 4D. In this paper we are primarily concerned with the features that produce chiral interactions of fermions and 4D gauge fields. 
However, let us also mention a different point. A very well-known and elegant feature of the Kaluza-Klein framework is that the massless equation $\sD^P \Psi = 0$ produces non-zero mass terms in the 4D Dirac equation, after dimensional reduction. This also arises in section \ref{SectionDecompositionDiracOperator}, of course. Thus, if $\sD^P \Psi = 0$ or its Weyl counterpart were the physical equations of motion, this would mean that the fermionic masses observed in 4D are entirely due to the vibration of the higher-dimensional spinor along the internal space. In the heuristic picture provided by geodesics, this corresponds to the statement that the 4D rest energy of test particles is entirely due to their internal motion along $K$. As stressed in \cite{Bap2}, this is equivalent to saying that elementary particles always travel at the speed of light in higher dimensions. It is the projection of velocities to three dimensions that appears to produce speeds in the range $[0, c]$, as observed macroscopically. This simple and attractive picture is one of the conceptual rewards of working with a higher-dimensional spacetime.

\section{Spinors on Riemannian submersions}
\label{Submersions}

\subsection{Riemannian submersions}

This section describes properties of spinors on manifolds equipped with metrics defining Riemannian submersions. These metrics generalize the Kaluza ansatz by encoding not only the 4D metric and 4D massless gauge fields, but also massive gauge fields and an internal metric that can vary along $M_4$. The main classical results about Riemannian submersions were developed in \cite{ONeill1, Ehresmann, Hermann} and are presented in \cite{Besse, FIP}, for example. We use the translation of those results to the Kaluza-Klein language provided in \cite{Bap}. Spinors on Riemannian submersions have been previously studied in \cite{Moroianu} and more generally in \cite{Reynolds}. In this section, we introduce all the notation and reframe the geometric formalism using the explicit language suitable for Kaluza-Klein physics. It does not present new results.

Take a Lorentzian metric $g_P$ on the higher-dimensional space $P =  M_4 \times K$ such that the projection $P \rightarrow M_4$ is a Riemannian submersion. As described in \cite{Bap}, this is equivalent to taking a $g_P$ determined by three simpler objects: 
\begin{itemize}
\item[{\bf i)}]  a Lorentzian metric $g_M$ on the base $M_4$; 
\vspace{-.1cm}
\item[{\bf ii)}]  a family of Riemannian metrics $g_K(x)$ on the fibres $K_x$ parameterized by the points $ x \in M_4$;
\vspace{-.2cm}
\item[{\bf iii)}] a gauge one-form $A$ on $M_4$ with values in the Lie algebra of vector fields on $K$.
\end{itemize}
These objects determine the higher-dimensional metric through the relations
\bal \label{MetricDecomposition}
g_P (U, V) \ &= \ g_K (U, V) \nonumber \\
g_P (X, V) \ &= \  - \ g_K \left(A (X), V \right) \nonumber \\
g_P (X, Y) \ &= \ g_M (X, Y) \ + \  g_K \left(A(X) , A(Y) \right) \ ,
\end{align}
valid for all tangent vectors $X,Y \in TM$ and vertical vectors $U, V \in TK$. These relations generalize the usual Kaluza ansatz for $g_P$. Choosing a set $\{ e_a \}$ of independent vector fields on $K$, the one-form on spacetime can be decomposed as a sum 
\beq \label{GaugeFieldExpansion}
A(X) \ = \ \sum\nolimits_a \,A^a(X) \, e_a \ ,
\eeq
where the real-valued coefficients $A^a(X)$ are the traditional gauge fields on $M_4$. For general submersion metrics on $P$ this can be an infinite sum, with $\{ e_a \}$ being a basis of the full space of vector fields on $K$, which is the Lie algebra of the diffeomorphism group ${\rm Diff} (K)$. The gauge group need not act on $K$ only through isometries of $g_K$.

The curvature $F_A$ is a two-form on $M_4$ with values in the Lie algebra of vector fields on $K$. It can be defined by 
\beq
\label{CurvatureDefinition}
F_{A} (X, Y)  \ := \ (\dd_M A^a) (X, Y) \,\, e_a  \ + \ A^a (X)\, A^b (Y) \, [e_a, e_b]  \ ,
\eeq
where the last term is just the Lie bracket $ [A(X), A(Y)] $ of vector fields on $K$.

The tangent space to $P$ has a natural decomposition $TP = TM \oplus TK$. Then $TK$ is the kernel of the projection $TP \rightarrow TM$. It is also called the vertical sub-bundle $\VV$ of $TP$. Its $g_P$-orthogonal complement, denoted $\HH := (TK)^\perp$, is called the horizontal sub-bundle. So we get a second, $g_P$-dependent decomposition
\beq \label{HorizontalDistribution}
T P \; =\; \HH \oplus \VV   \ .
\eeq
Using these two decompositions, any tangent vector $w\in TP$ can be written as a sum $w= w_M + w_K = w^\HH  +  w^\VV$ in two different ways. The relation between them is 
\beq \label{DefinitionHorizontalDistribution}
w^\VV  \ = \   w_K \; - \;  A (w_M)    \qquad \qquad  w^\HH  \ = \  w_M \; + \;   A (w_M)   \ .
\eeq
So the information contained in the gauge one-form $A$ on $M$ is equivalent to the information contained in the horizontal distribution $\HH \subset TP$. For example, the curvature $F_A$ is the obstruction to the integrability of the distribution $\HH$ \cite{Besse}.

One can construct local, $g_P$-orthonormal trivializations of $TP$ using only horizontal and vertical vectors. They can take the form $\{ X_\mu^\HH, v_j \}$. Here the $v_j$ form an orthonormal basis of $TK$ with respect to $g_K (x)$, for each $x \in M$. The $X_\mu$ form a $g_M$-orthonormal basis of $TM$. The horizontal lift of $X_\mu$ to $P$ is denoted $X_\mu^\HH$, and is given by
\beq
\label{BasicLiftX}
X_\mu^\HH =  X_\mu \; + \; A^a(X_\mu) \; e_a  \ .
\eeq
Such horizontal lifts are called basic vector fields on $P$.

\subsection{Decomposing the covariant derivative of vector fields}

As before, let $\pi: P \rightarrow M$ be a Riemannian submersion with a metric $g_P$ equivalent to a triple $(g_M, A, g_K)$, as in \eqref{MetricDecomposition}. Denote by $\nabla$, $\nabla^M$ and $\nabla^K$ the Levi-Civita connections on $P$, on $M$, and on each fibre $K_x$. Let $U$, $V$ and $W$ be vertical vector fields on $P$ and let $X$, $Y$ and $Z$ be vector fields on the base $M$. 
Then we have the standard geometrical identities \cite[p. 240]{Besse}:
\begin{subequations}
\label{IdentitiesSubmersions}
\begin{align}
\label{IdentityCurvature1}
[X^\HH, \, Y^\HH]^\VV \; &= \; F^a_A(X, Y)\, \,  e_a    \linebr
\label{IdentityCurvature2}
g_P(\nabla_{X^\HH} Y^\HH, \,  U) \; &= \; \frac{1}{2} \,\, F_A^a (X, Y) \, g_P (e_a, U)   \linebr
\label{IdentitySubmersionH1}
g_P(\nabla_{X^\HH} Y^\HH, \, Z^\HH) \; &= \;  g_M (\nabla^M_{X} Y, \, Z) \   \linebr
\label{IdentitySecondFundamentalForm}
g_P(\nabla_U V, \, X^\HH) \;  &= \; -\, \frac{1}{2} \, (\dd^A g_K)_X (U, V)    \linebr
\label{IdentitySubmersionH2}
g_P(\nabla_U V, \, W) \; &= \;  g_K (\nabla^K_U  V , \, W)   
\end{align}
\end{subequations}
These are identities of vector fields and functions on $P$. Functions defined on $M$, namely $F^a_A(X, Y)$ and $ g_M (\nabla^M_{X} Y, Z)$, are regarded as functions on $P$ that are constant along the fibres.
The derivative $\dd^A g_K$ can be identified with the second fundamental form of the fibres of $P$. It measures how $g_K$ changes along $M$ up to diffeomorphisms of $K$. It is equivariant under $\Diff (K)$-gauge transformations.  As in \cite{Bap}, it can be expressed in terms of Lie derivatives and the gauge one-forms $A^a$ as 
\beq \label{DefinitionCovariantDerivative}
(\dd^A g_K)_X (U, V) \ = \  (\partial_{X} g_K) (U, V) \ + \ A^a (X) \; (\Lie_{e_a}\, g_K) (U, V) \ .
\eeq
Here $\partial_{X} g_K$ denotes the directional derivative of $g_K (x)$ along $X \in TM$ and $\Lie_{e_a}\, g_K$ denotes the Lie derivative along the internal vector field $e_a$.

\vspace{.2cm}

\begin{remarknotation} The notation used in this paper differs from the notation in the literature on Riemannian submersions. The modification is necessary to avoid a conflict with the traditional notation in physics. Namely, the tensor $A$ in \cite{ONeill1, Besse, FIP} is essentially what we call $F_A$ here, since it represents the physical gauge field strength.
The tensor $T$ in \cite{ONeill1, Besse, FIP} is called here $\dd^A g_K$. This avoids confusion with torsion and emphasizes its physical role as the covariant derivative of Higgs-like fields. The precise relations are
\beq \label{TranslationTensors2}
(\dd^A g_K)_X  (U, V)  \ = \  (\Lie_{X^\HH} g_P) (U, V) \ = \ -\, 2\,  g_P (T_U V, \, X^\HH) \ = \  -\, 2\, g_P (\nabla_U V, \, X^\HH  ) \ .
\eeq
The last equality is the definition of $T$ in \cite{ONeill1, Besse, FIP}. The first two equalities are derived in sections 2.3 and 2.5 of \cite{Bap}.
\end{remarknotation}

\subsection{Spinors on $M_4 \times K$}
\label{SpinorsOnP}

\subsubsection{Gamma matrices and chirality operators}
\label{GammaMatrices}
Let $k$ denote the dimension of the internal space $K$. The higher-dimensional spinor space $\Delta_{3+k,1}$ can be written as the tensor product $\Delta_{3,1} \otimes \Delta_k$. These spaces have irreducible representations of Clifford algebras. There is a standard isomorphism between $\Cl(3+k,1)$ and the $\mathbb{Z}_2$-graded tensor product of algebras $\Cl(3,1) \, \hat{\otimes}  \! \Cl(k)$. It is determined by the correspondence of generators
\begin{align}
\label{CliffordMultiplication}
1 &= 1\otimes 1   \nonumber   \linebr
\Gamma_\mu &= \gamma_\mu \otimes 1 \quad \ \ {\rm for} \ \ \mu = 0,1, 2, 3       \nonumber  \linebr
\Gamma_{3+j} &= \gamma_5 \otimes \rho_j \quad \ {\rm for} \ \ j = 1, ..., k
\end{align}
This is a recipe to construct higher-dimensional gamma matrices $\Gamma_l$ from the lower-dimensional ones. The $\rho_j$ are gamma matrices acting on $\Delta_k$, and the $ \gamma_\mu$ are 4D gamma matrices for the metric $\eta = \diag (-1, 1, 1, 1)$. The convention used here is that gamma matrices in spatial dimensions are square roots of $-1$ and are anti-self-adjoint with respect to the positive-definite inner-product of spinors $\langle  \Psi ,  \Psi \rangle =  \Psi ^\dag  \Psi $. This is the most common convention in Riemannian geometry \cite{LM, Bourguignon, Ginoux}. So for example
\bal
\label{ConventionsGammaMatrices}
\{ \gamma_\mu,\, \gamma_\nu \} \ = \ - 2 \, \eta_{\mu \nu} \, I_4  \qquad \quad (\gamma_0)^\dag \ &= \ \gamma_0  \qquad \quad (\gamma_l)^\dag \ = \ - \gamma_l  \ ,
\end{align}
for $l = 1, 2, 3$.  On a Lorentzian space, one can define an indefinite inner-product of spinors with respect to which all gamma matrices are self-adjoint operators. This is 
  \beq
  \label{Lorentzian pairing}
 \langle \Psi,  \Psi \rangle_{\Gamma_0} \ := \ \langle \Gamma_0 \, \Psi ,  \Psi \rangle \ .
  \eeq
The complex chirality operators on the spinor spaces $\Delta^\CC_{3,1}$,  $\Delta^\CC_{k}$ and $\Delta^\CC_{3+k,1}$ are defined by
\begin{align}
\label{ChiralOperators}
\gamma_5 \; &:= \; i\, \gamma_0 \,\gamma_1\, \gamma_2\, \gamma_3  \qquad \qquad \qquad  \qquad  \qquad  \quad   \Gamma_K\; := \; i^{[\frac{k+1}{2}]}\, \rho_1 \cdots \rho_k    \nonumber \linebr
\Gamma_P \; &:= \;  i^{[\frac{k+3}{2}]}\, \, \Gamma_0 \, \Gamma_1 \cdots \Gamma_{k+3}  \; = \;   (\gamma_5)^{k+1}  \otimes \Gamma_K  \ ,
\end{align}
where $[s]$ denotes the integral part of $s$. The choices of $i$-factors are very standard for $\gamma_5$ in the Minkowski case and for $\Gamma_K$ in the Riemannian case.
Using the Clifford relations, such as \eqref{ConventionsGammaMatrices}, one can check that these operators square as
\beq
\label{SquaresChiralOperators}
\gamma_5 \, \gamma_5 = 1 \qquad  \qquad \Gamma_K \,\Gamma_K = 1  \qquad  \qquad \Gamma_P \,\Gamma_P = 1
\eeq
on the respective spinor spaces. The three operators are self-adjoint with respect to the positive inner-product $\langle \cdot \, ,  \cdot \rangle$ of spinors. Moreover, $\gamma_5$ is anti-self-adjoint with respect to the $\langle \cdot ,  \cdot\rangle_{\gamma_0}$ pairing while $\Gamma_P$ satisfies 
\beq
\label{HermitianConjChiralOperator}
\langle\, \Phi ,  \Gamma_P \Psi \, \rangle_{\Gamma_0} \; =\;  (-1)^{k+1} \, \langle \, \Gamma_P \Phi , \Psi \, \rangle_{\Gamma_0} \ .
\eeq

\subsubsection{Horizontal and vertical spinors}

If $E$ is an oriented, real vector bundle with a spin structure, its complexified spinor bundle is denoted by $S_\CC(E)$. For a tangent bundle, the notation is simplified, as in $S_\CC(TP) = S_\CC(P)$. Now assume that $K$ has a spin structure. Together with the trivial spin structure on $M_4$, it determines a unique spin structure on $P = M_4 \times K$. We fix those structures for the rest of the paper. 

The horizontal and vertical bundles in decomposition \eqref{HorizontalDistribution} have their own associated spinor bundles, denoted $S_\CC(\HH)$ and $S_\CC(\VV)$.  Sections of $S_\CC(\HH)$ are called horizontal spinors, while sections of $S_\CC(\VV)$ are the vertical spinors. Calling $\pi$ the projection from $P$ to $M_4$, there is a natural isomorphism $\HH \simeq \pi^*(TM)$ as Lorentzian vector bundles over $P$. So the respective spinor bundles are isomorphic too, 
\beq 
\label{IsomorphismHorizontalBundle}
S_\CC(\HH) \; \simeq \;  S_\CC (\pi^\ast(TM)) \; \simeq \; \pi^\ast [S_\CC (M)] \ .
\eeq
In particular, a spinor $\varphi$ on $M_4$ has a unique lift as a horizontal spinor that coincides with the pull-back $\pi^\ast \varphi$ under that isomorphism. This is denoted $\varphi^\HH$ and called the basic lift of $\varphi$ to $P$. At the same time, if we fix a point $x \in M_4$ and consider the fibre $K_x := \{ x\} \times K$ inside $P$ with its metric $g_K (x)$, there is a natural isomorphism between the restriction of $S_\CC(\VV)$ to $K_x$ and the spinor bundle of the fibre,
\beq
\label{IsomorphismVerticalBundle}
S_\CC(\VV) \, |_{K_x} \; \simeq \; S_\CC(K_x)  \ .
\eeq
Now let $X$ be a vector field on $M_4$ and let $X^\HH$ denote its basic lift to $P$, as in \eqref{BasicLiftX}. Then Clifford multiplication in $S_\CC (M)$ and $S_\CC (\HH)$ satisfies the equivariance property
\beq
\label{EquivarianceClifford}
(X\cdot \varphi)^\HH \ = \ X^\HH \cdot \varphi^\HH
\eeq
for any spinor $\varphi$ on $M_4$. Moreover, a dissection of these constructions, as done in \cite{Reynolds}, shows that there is a natural isomorphism of spinor bundles
\beq
\label{TensorProductSpinorBundle}
S_\CC(P) \ \simeq \ S_\CC(\HH) \otimes S_\CC(\VV)  \ 
\eeq
that is compatible with the Clifford multiplication implied by \eqref{CliffordMultiplication}, in the sense that 
\begin{align}
\label{EquivarianceClifford2}
U \cdot (\varphi ^\HH \otimes \psi) \ &= \ (\gamma_5 \cdot \varphi) ^\HH \otimes (U \cdot \psi)  \linebr
X^\HH \cdot (\varphi ^\HH \otimes \psi) \ &= \ (X \cdot \varphi) ^\HH \otimes \psi \ . \nonumber
\end{align}
Here $U$ is any vertical vector field on $P$. It is regarded as such on the left-hand side and as a section of $\VV$ on the right-hand side. As before, $X$ is any vector in $TM$.

Since $M_4$ is contractible, its spinor bundle $S_\CC (M)$ is trivial. Due to \eqref{IsomorphismHorizontalBundle}, so is $S_\CC(\HH)$ as a bundle over $P$. This implies that a spinor $\Psi$ on $P$ can always be written as a sum 
 \beq
 \label{TensorHDSpinor}
\Psi (x,y) \; = \; \sum_{b=1}^4 \ \varphi_b^\HH(x) \otimes \psi^b(x,y) \ ,
\eeq
where $\varphi_b$ is a Dirac spinor on $M_4$ and $\psi^b$ is a vertical spinor on $P$. Here $x$ and $y$ denote coordinates on $M_4$ and $K$, respectively.

Regarding Dirac operators, let $\{X_\mu\}$ denote a $g_M$-orthonormal trivialization of $TM$ and let $\{v_j\}$ denote a local, $g_K$-orthonormal trivialization of the vertical bundle $\VV \rightarrow P$. Using the Levi-Civita connections of $g_M$ and $g_K$, one can define the Dirac operators 
\beq
\sD^M  \varphi  \;= \; i\,g_M^{\mu \nu} \, \, X_\mu \cdot \nabla^M_{X_\nu} \varphi  \qquad \qquad  \sD^K \psi \; =\;  g_M^{ij} \, \, v_i \cdot  \nabla^K_{v_j} \psi \ .
\eeq
The first acts on spinors on $M_4$; the second acts on vertical spinors over $P$. With the conventions \eqref{ConventionsGammaMatrices}, the first is self-adjoint with respect to the $L^2$-pairing $\blangle \gamma^0 \varphi, \, \varphi \brangle_{L^2}$ of spinors on $M_4$. The second is self-adjoint with respect to the product $\blangle \psi, \, \psi \brangle_{L^2}$ of vertical spinors, where the integration is taken over $K$ only.

When $K$ is compact, the vertical spinors over a fibre $\{x\} \times K$ can always be written as a (possibly infinite) sum of eigenspinors of the internal Dirac operator $\sD^K$. Since the metric $g_K$ depends on $x$, the operator $\sD^K$ and its eigenspinors will also change along $M_4$, in general. Now suppose that the internal metric $g_K (x)$ is independent of $x$. Then an $L^2$-orthonormal basis of eigenspinors $\{ \psi^\alpha (y)\}$ on $K$ applies transversally throughout $M_4$. In particular, we can take each $\psi^b$ in \eqref{TensorHDSpinor} and expand its $y$-dependence as $\psi^b(x, y) = \sum_\alpha c^b_{\alpha}(x) \, \psi^\alpha (y)$. Inserting this in \eqref{TensorHDSpinor}, it is clear that the higher-dimensional spinor can then be written as a (possibly infinite) sum
\beq
\label{EigenMassDecompositionSpinors}
\Psi (x,y) \ = \ \sum\nolimits_{\alpha} \ \varphi_\alpha^\HH(x) \otimes \psi^\alpha(y) \ .
\eeq
Here the $\varphi_\alpha$ are Dirac spinors on $M_4$, the $ \varphi_\alpha^\HH$ are their horizontal lifts to $P$, and the $\psi^\alpha$ are eigenspinors of $\sD^K$ independent of the point on $M_4$.

\section{Higher-dimensional Dirac operator}
\label{SectionDecompositionDiracOperator}

\subsection{Decomposing the covariant derivative of spinors}

The Levi-Civita connection on the tangent bundle $TP$ can be lifted to a connection on the spinor bundle. Denoting both connections by $\nabla$, the standard local formula is \cite{Bourguignon}:
\begin{align}
\label{StandardSpinorCovariantDerivative}
\nabla_X \Psi \; &:= \; \partial_X \tilde{\Psi} \;+\;  \frac{1}{4}\, g_P^{ir} \, g_P^{js} \, \, g_P(\nabla_X u_i, u_j) \, u_r \cdot u_s \cdot  \tilde{\Psi}    \linebr
&\,\, = \; \partial_X  \tilde{\Psi}  \;+ \; \frac{1}{8} \,  g_P^{ir} \, g_P^{js}\,  \big[ \, g_P(\nabla_X u_i, u_j)  -  g_P(\nabla_X u_j,  u_i) \, \big]  \,  u_r \cdot u_s \cdot  \tilde{\Psi}  \ . \nonumber
\end{align}
Here $\{u_r\}$ denotes a local, oriented, $g_P$-orthonormal trivialization of $TP$, while $\tilde{\Psi}$ represents the spinor $\Psi$ in the induced trivialization of $S_\CC (P)$. So $\tilde{\Psi}$ is a local function on $P$ with values in some $\CC^{r}$ and we denote by $\partial_X  \tilde{\Psi} = (\dd \tilde{\Psi})(X)$ its directional derivative. We are also using the fact that $g_P(\nabla_X u_i, u_j)$ is anti-symmetric in the two indices, for the Levi-Civita connection. Since the metric is Lorentzian, the elements $g_P^{rs}$ can be $\pm 1$. Below, we will most often abuse notation and simply write $\Psi$ instead of $\tilde{\Psi}$.

On a Lorentzian manifold, the lifted Levi-Civita connection is compatible with the $\langle \cdot ,  \cdot  \rangle_{\Gamma_0}$ inner-product of spinors defined in \eqref{Lorentzian pairing}, in the sense that
\beq
\Lie_X \,  \langle \Psi ,  \Psi \rangle_{\Gamma_0} \; = \;  \langle  \nabla_X \Psi \, , \Psi \rangle_{\Gamma_0}  \; + \;  \langle \Psi \, , \nabla_X \Psi \rangle_{\Gamma_0}
\eeq
for any vector field $X$. On a Riemannian manifold, the lifted Levi-Civita connection is compatible with the positive-definite inner-product $\langle \cdot ,  \cdot  \rangle$.

As described in section \ref{Submersions}, for a Riemannian submersion $g_P \simeq (g_M, A, g_K)$, we can take the orthonormal trivialization of $TP$ to be of the form $\{ X_\mu^\HH, v_j \}$, so a collection of basic and vertical vector fields on $P$. In this adapted trivialization, the covariant derivatives are given by the following formulae, proved in appendix \ref{DecompositionHDOperators}.
\begin{proposition}  
\label{thm:DecompositionCovariantDerivatives}
Consider a spinor on $P$ of the form $\Psi = \varphi^\HH(x) \otimes \psi(x,y)$, as in \eqref{TensorHDSpinor}. Its covariant derivative along a vertical vector field $U$ is given by
\begin{align}
\label{VerticalCovariantDerivative}
\nabla_{U} \Psi \; =& \; \varphi^\HH \otimes(\nabla_{U} \psi)   \; - \;   \frac{1}{8} \, \, g_M^{\mu \nu}\, g_M^{\sigma \rho}\, \, (F_A^a)_{\mu \sigma } \, \, g_K(e_a , U) \; (X_\nu \cdot X_\rho \cdot \varphi)^\HH \otimes \psi   \nonumber \linebr
&+ \, \frac{1}{4} \,  \, g_M^{\mu \nu}\,  g_K^{i j}\, \, (\dd^A g_K)_{X_\mu} (U, v_i)\; (  X_\nu \cdot \gamma_5 \cdot \varphi  )^\HH \otimes (v_j\cdot \psi) \ .
\end{align}
Its covariant derivative along a horizontal, basic vector field $Y^\HH$ on $P$ is given by
\begin{align}
\label{HorizontalCovariantDerivative}
\nabla_{Y^\HH} \Psi \; =& \;  (\nabla^M_Y \varphi)^\HH \otimes \psi \; + \;  A^a(Y) \,  \, \varphi^\HH \otimes  \cL_{e_a} \psi  \linebr 
\begin{split}
 &+ \, \frac{1}{4}\, \, g_M^{\mu \nu}\, \, F_A^a(Y , X_\mu) \, \, (  X_\nu \cdot \gamma_5 \cdot \varphi )^\HH \otimes (e_a\cdot \psi)   \linebr
 &+\;  \varphi^\HH \otimes \Big( \partial_Y \psi  \, + \,  \frac{1}{8} \, \, g_K^{i r}\,g_K^{j s}  \, \big\{ g_P(\, [Y, v_i], v_j)  -  g_P(\, [Y, v_j ] , v_i) \big\} \, v_r \cdot v_s \cdot \psi  \Big) \ . \nonumber
\end{split}
\end{align}
Here $ \cL_{e_a} \psi$ denotes the Kosmann-Lichnerowicz derivative of spinors on the fibres $K$.
\end{proposition}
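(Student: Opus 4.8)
The plan is to compute both covariant derivatives directly from the local spin-connection formula \eqref{StandardSpinorCovariantDerivative}, specialised to the adapted orthonormal frame $\{u_r\} = \{X_\mu^\HH, v_j\}$. Writing $\Psi = \varphi^\HH \otimes \psi$, I would organise the directional-derivative term $\partial_X \tilde\Psi$ and the spin-connection coefficients $g_P(\nabla_X u_i, u_j)$ according to whether the pair $(u_i, u_j)$ is (horizontal, horizontal), (horizontal, vertical) or (vertical, vertical); these three families produce the three terms in each formula. The Clifford products $u_r \cdot u_s$ acting on the tensor product are rewritten throughout with the equivariance rules \eqref{EquivarianceClifford2}, which turn a vertical factor into a $\gamma_5$ on the horizontal spinor and a horizontal factor into ordinary Clifford multiplication on $\varphi$.

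For the vertical derivative $\nabla_U\Psi$ the computation is short. The (vertical, vertical) coefficients reduce to the fibre connection by \eqref{IdentitySubmersionH2}, $g_P(\nabla_U v_i, v_j) = g_K(\nabla^K_U v_i, v_j)$, and together with $\partial_U(\varphi^\HH\otimes\psi) = \varphi^\HH\otimes\partial_U\psi$ (as $\varphi^\HH$ is constant along the fibres) they assemble into $\varphi^\HH\otimes\nabla_U\psi$. The (horizontal, horizontal) coefficients are evaluated by writing $\nabla_U X_\mu^\HH = \nabla_{X_\mu^\HH} U + [U, X_\mu^\HH]$; the bracket of a basic field with a vertical field is vertical, so it drops out of $g_P(\,\cdot\,, X_\nu^\HH)$, and \eqref{IdentityCurvature2} then delivers the curvature term. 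The (horizontal, vertical) coefficients are read off from the second-fundamental-form identity \eqref{IdentitySecondFundamentalForm}, producing the $\dd^A g_K$ term. The numerical coefficients ($-\tfrac18$ for the curvature piece, $+\tfrac14$ for the $\dd^A g_K$ piece) are routine bookkeeping once one combines the $\tfrac14$ prefactor of \eqref{StandardSpinorCovariantDerivative} with these identities and the Clifford reorderings.

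For the horizontal derivative $\nabla_{Y^\HH}\Psi$ the same three families appear. The (horizontal, horizontal) coefficients reduce to the base connection by \eqref{IdentitySubmersionH1} and, combined with $\partial_Y\varphi^\HH$, give $(\nabla^M_Y\varphi)^\HH\otimes\psi$. The (horizontal, vertical) coefficients come from \eqref{IdentityCurvature2} and, after the resummation $\sum_j g_K(e_a, v_j)\, v_j\cdot\psi = e_a\cdot\psi$, yield the $F_A$ term. The (vertical, vertical) family is where the work lies: writing $\nabla_{Y^\HH} v_i = \nabla_{v_i} Y^\HH + [Y^\HH, v_i]$ and using \eqref{IdentitySecondFundamentalForm}, the symmetric-in-$(i,j)$ piece of $g_P(\nabla_{Y^\HH}v_i, v_j)$ is exactly $\tfrac12(\dd^A g_K)_Y(v_i, v_j)$, which is annihilated by the antisymmetrisation in \eqref{StandardSpinorCovariantDerivative} and so contributes nothing; only the bracket term $g_P([Y^\HH, v_i], v_j)$ survives.

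The main obstacle is then to show that this surviving vertical spin connection, together with the $A$-proportional part of the directional derivative, reorganises precisely into $A^a(Y)\,\cL_{e_a}\psi$ plus the explicit residual bracket term written with $[Y, v_i]$ in \eqref{HorizontalCovariantDerivative}. I would split $Y^\HH = Y + A^a(Y)\,e_a$, so that the directional derivative contributes $A^a(Y)\,\varphi^\HH\otimes\partial_{e_a}\psi$ beyond the base derivative along $Y$, and expand $[Y^\HH, v_i] = [Y, v_i] + A^a(Y)[e_a, v_i] + (\partial_{v_i}A^a(Y))\,e_a$. The Kosmann-Lichnerowicz derivative is $\cL_{e_a}\psi = \partial_{e_a}\psi$ plus a spin-connection correction built from the antisymmetrised $g_K(\nabla^K_{v_i}e_a, v_j)$; the task is to verify that the $A^a(Y)$-proportional part of the surviving bracket spin connection reproduces exactly this correction, so that those pieces collapse into $A^a(Y)\,\cL_{e_a}\psi$. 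Using the torsion-free condition to trade $[e_a, v_i]$ for $\nabla^K_{e_a}v_i - \nabla^K_{v_i}e_a$, and carefully tracking the terms where horizontal transport twists the vertical frame $\{v_j\}$, one matches the two expressions; what is left over is precisely the $g_P([Y, v_i], v_j)$ bracket term, and collecting the three families closes the proof.
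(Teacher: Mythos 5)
Your proposal is correct and takes essentially the same route as the paper's proof in appendix \ref{DecompositionHDOperators}: the same splitting of \eqref{StandardSpinorCovariantDerivative} over the adapted frame $\{X_\mu^\HH, v_j\}$, the same use of the submersion identities \eqref{IdentitiesSubmersions} for the (horizontal, horizontal), (horizontal, vertical) and (vertical, vertical) blocks, and the same antisymmetrisation argument to discard the symmetric piece $\tfrac12(\dd^A g_K)_Y(v_i,v_j)$ in the horizontal case. The only streamlining you missed is that no torsion-free manipulation is needed at the end: since $A^a(Y)$ is constant along the fibres (so $\partial_{v_i}A^a(Y)=0$ and that term in your expansion of $[Y^\HH,v_i]$ drops out), the surviving $A^a(Y)$-proportional bracket terms combined with $A^a(Y)\,\partial_{e_a}\psi$ are already in the exact form of the second line of \eqref{DefinitionKLDerivative}, and are therefore recognised as $A^a(Y)\,\cL_{e_a}\psi$ with no further work.
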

These expressions are more explicit and expanded versions of those given in \cite{Reynolds} for the case of simple submersions. Importantly, they reveal that the 4D gauge fields $A^a$ couple to spinors through the Kosmann-Lichnerowicz derivative $\cL_{e_a}$ written in \eqref{DefinitionKLDerivative}.

These expressions are simpler in regions where the Higgs-like scalars are constant, i.e. where the internal metric $g_K$ does not vary along $M_4$. There, the orthonormal vector fields $\{v_j\}$ on $K$ can be taken to be independent of the coordinate $ x\in M_4$. So $[Y, v_i] = 0$, since $Y$ is a vector field on $M_4$. If the vertical spinor is also taken to be independent of $x$, as in \eqref{EigenMassDecompositionSpinors}, the last line of \eqref{HorizontalCovariantDerivative} vanishes entirely and we get:
\begin{corollary}
In regions where the internal metric $g_K$ and the vertical spinor $\psi$ are independent of the coordinate $x \in M_4$, the previous expressions reduce to
\begin{align}
\label{HorizontalCovariantDerivative2A}
\nabla_{U} \Psi \, =& \; \varphi^\HH \otimes(\nabla_{U} \psi)   \; - \;   \frac{1}{8} \, \, g_M^{\mu \nu}\, g_M^{\sigma \rho}\, \, (F_A^a)_{\mu \sigma } \, \, g_K(e_a , U) \; (X_\nu \cdot X_\rho \cdot \varphi)^\HH \otimes \psi    \nonumber \linebr
&+ \frac{1}{4}  \, g_M^{\mu \nu} \, g_K^{i j}\, \, A^a(X_\mu) \, \, (\Lie_{e_a} g_K)(U, v_i)\; (  X_\nu \cdot \gamma_5 \cdot \varphi )^\HH \otimes (v_j\cdot \psi)  \linebr
\label{HorizontalCovariantDerivative2B}
\nabla_{Y^\HH} \Psi  \, =& \;  (\nabla^M_Y \varphi)^\HH \otimes \psi \; + \; A^a(Y) \, \, \varphi^\HH \otimes \cL_{e_a} \psi   \nonumber \linebr 
\begin{split}
 &+ \frac{1}{4}\, g_M^{\mu \nu} \, \, F_A^a(Y , X_\mu) \, \, (  X_\nu \cdot \gamma_5 \cdot \varphi )^\HH \otimes (e_a\cdot \psi)  \ . 
\end{split}
\end{align}
\end{corollary}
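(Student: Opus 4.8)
The plan is to obtain the Corollary as a direct specialization of Proposition \ref{thm:DecompositionCovariantDerivatives}, so the entire task is to identify which terms drop once $g_K$ and $\psi$ are taken to be independent of the base coordinate $x$. First I would fix the frame: since the family of fibre metrics $g_K(x)$ is constant in $x$, the vertical bundle $\VV$ with its metric is an honest product along $M_4$, so one may choose the $g_K$-orthonormal vertical trivialization $\{v_j\}$ to be independent of $x$. This frame choice is what makes the subsequent cancellations manifest, and I expect it to be the only step that is conceptual rather than mere bookkeeping.

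For the vertical derivative I would substitute the decomposition \eqref{DefinitionCovariantDerivative} of the second fundamental form into \eqref{VerticalCovariantDerivative}, writing $(\dd^A g_K)_{X_\mu}(U, v_i) = (\Lie_{X_\mu} g_K)(U, v_i) + A^a(X_\mu)\,(\Lie_{e_a} g_K)(U, v_i)$. The term $\Lie_{X_\mu} g_K$ encodes precisely the variation of the internal metric along the base direction $X_\mu$, so it vanishes under the hypothesis that $g_K$ is $x$-independent. Only the $A^a(X_\mu)\,(\Lie_{e_a} g_K)$ piece survives, reproducing the second line of \eqref{HorizontalCovariantDerivative2A}, while the curvature term on the first line is carried over unchanged.

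For the horizontal derivative I would keep the first three lines of \eqref{HorizontalCovariantDerivative} as they stand and argue that the whole last line collapses. Since $\psi$ does not depend on $x$ and $Y$ is a base vector field, $\partial_Y \psi = 0$. With the $x$-independent choice of $\{v_j\}$, the Lie brackets $[Y, v_i]$ of a base field against an $x$-constant vertical field vanish, so the $g_P([Y, v_i], v_j)$ spin-connection contribution also disappears. This leaves exactly \eqref{HorizontalCovariantDerivative2B}.

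The main obstacle---such as it is---is not computational but interpretive: one must check that the single hypothesis ``$g_K$ independent of $x$'' legitimately licenses both the constant frame choice and the vanishing of $\Lie_{X_\mu} g_K$, and that these two uses are mutually consistent. Once the $x$-independent frame is in place, the remainder is a term-by-term reading of Proposition \ref{thm:DecompositionCovariantDerivatives} combined with the single substitution \eqref{DefinitionCovariantDerivative}.
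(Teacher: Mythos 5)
Your proposal is correct and takes essentially the same route as the paper: choose the vertical orthonormal frame $\{v_j\}$ to be independent of $x$, so that $[Y, v_i] = 0$ and $\partial_Y \psi = 0$ make the last line of \eqref{HorizontalCovariantDerivative} vanish, while substituting \eqref{DefinitionCovariantDerivative} into \eqref{VerticalCovariantDerivative} and using $\Lie_{X_\mu} g_K = 0$ leaves only the $A^a(X_\mu)\,(\Lie_{e_a} g_K)$ contribution. Nothing is missing, and your interpretive point about the single hypothesis licensing both the constant frame and the vanishing of $\Lie_{X_\mu} g_K$ is exactly the content of the paper's (terse) justification.
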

\noindent
Here $\Lie_{e_a} g_K$ denotes the Lie derivative of $g_K$ along the internal vector field $e_a$. This term comes from the simplification of $\dd^A g_K$ in \eqref{VerticalCovariantDerivative}, using definition \eqref{DefinitionCovariantDerivative} and the fact that $g_K$ is constant along the flow of $X_\mu$ in $M_4$.

\subsection{Decomposing the higher-dimensional Dirac operator}

The decomposition of the spinor connection given in proposition \ref{thm:DecompositionCovariantDerivatives} leads directly to a decomposition of the Dirac operator on $P$. For a higher-dimensional spinor of the form $\varphi^\HH \otimes \psi$, this is described in the next result, proved in appendix \ref{DecompositionHDOperators}. 
\begin{proposition}
\label{thm:DecompositionDiracOperator}
Consider a spinor on $P$ of the form $\Psi = \varphi^\HH(x) \otimes \psi(x,y)$, as in \eqref{TensorHDSpinor}. The action of the higher-dimensional Dirac operator on $\Psi$ can be locally decomposed as 
\begin{align}
\label{GeneralFormulaDecompositionDiracOperator}
\sD^P \Psi \; =& \; \, g_M^{\mu \nu}\,  (X_\mu \cdot  \nabla^M_{\nu} \varphi)^\HH \otimes \psi  \; + \;   g_M^{\mu \nu}\, \, A_\nu^a\,  \, (X_\mu \cdot  \varphi)^\HH  \otimes  \big[ \cL_{e_a} + \frac{1}{2} \, \divergence (e_a) \big] \psi  \nonumber   \linebr
&+  \; ( \gamma_5 \cdot \varphi)^\HH  \otimes  \sD^K\psi   \; + \;  \frac{1}{8} \, \,  (F_A^a)^{\mu \nu } \,  (  X_\mu \cdot X_\nu \cdot \gamma_5 \cdot \varphi  )^\HH \otimes (e_a \cdot \psi)   \linebr
 &+\;  g_M^{\mu \nu}\,\, (X_\mu \cdot \varphi)^\HH \otimes \, \Big[  \partial_{X_\nu}  + \frac{1}{2}  \,\big( \partial_{X_\nu}   \log \sqrt{|g_K |}  \, \big)  \Big]  \, \psi  \  \nonumber \linebr
  &+\;  g_M^{\mu \nu}\,  (X_\mu \cdot \varphi)^\HH \otimes \Big( \frac{1}{8} \, \sum\nolimits_{ij}   \big\{ g_P([X_\nu, v_i], v_j)  -  g_P([X_\nu, v_j ] , v_i) \big\}  \, v_i \cdot v_j \cdot \psi  \Big) \, . \nonumber 
 \end{align}
 Here $\cL_{e_a}$ denotes the derivative \eqref{DefinitionKLDerivative} of spinors on $K$; $\divergence (e_a)$ denotes the divergence of the internal vector field $e_a$ with respect to $g_K$; and $|g_K |$ is the modulus of the determinant of the matrix representing $g_K$ in a fixed coordinate system on $K$. 
\end{proposition}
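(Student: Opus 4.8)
The plan is to evaluate $\sD^P$ directly in the adapted $g_P$-orthonormal trivialization $\{X_\mu^\HH, v_j\}$ introduced in Section \ref{Submersions}. Because the basic fields $X_\mu^\HH$ carry the Lorentzian $g_M$-block of $g_P$ while the $v_j$ carry the Riemannian $g_K$-block, the frame definition of the Dirac operator splits cleanly into a horizontal and a vertical sum,
\[
\sD^P\Psi \; = \; g_M^{\mu\nu}\, X_\mu^\HH\cdot\nabla_{X_\nu^\HH}\Psi \; + \; g_K^{ij}\, v_i\cdot\nabla_{v_j}\Psi \ .
\]
So the entire computation reduces to inserting the covariant derivatives $\nabla_{v_j}\Psi$ and $\nabla_{X_\nu^\HH}\Psi$ from Proposition \ref{thm:DecompositionCovariantDerivatives} and then pushing the leftmost Clifford factor through the tensor product $\varphi^\HH\otimes\psi$ using the two rules in \eqref{EquivarianceClifford2}: horizontal multiplication acts only on the $M_4$-factor, $X^\HH\cdot(\varphi^\HH\otimes\psi)=(X\cdot\varphi)^\HH\otimes\psi$, whereas vertical multiplication acts on the internal factor at the cost of an extra chirality operator, $U\cdot(\varphi^\HH\otimes\psi)=(\gamma_5\cdot\varphi)^\HH\otimes(U\cdot\psi)$.

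First I would treat the horizontal sum. Substituting \eqref{HorizontalCovariantDerivative} and multiplying by $g_M^{\mu\nu}X_\mu^\HH$, its four lines produce, in order: the leading term $g_M^{\mu\nu}(X_\mu\cdot\nabla^M_\nu\varphi)^\HH\otimes\psi$; the bare gauge coupling $g_M^{\mu\nu}A^a_\nu\,(X_\mu\cdot\varphi)^\HH\otimes\cL_{e_a}\psi$; one contribution $+\tfrac14(F_A^a)^{\mu\nu}(X_\mu\cdot X_\nu\cdot\gamma_5\cdot\varphi)^\HH\otimes(e_a\cdot\psi)$ to the curvature term; and the base derivative $g_M^{\mu\nu}(X_\mu\cdot\varphi)^\HH\otimes\partial_{X_\nu}\psi$ together with the $[X_\nu,v_i]$ commutator term. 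In this last contribution the bracketed coefficient is antisymmetric in $i\leftrightarrow j$, so it pairs only with the antisymmetric part of the Clifford product $v_i\cdot v_j$ and reproduces the final line of \eqref{GeneralFormulaDecompositionDiracOperator} unchanged.

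Next I would treat the vertical sum by inserting \eqref{VerticalCovariantDerivative} and multiplying by $g_K^{ij}v_i$. The first term assembles directly into $(\gamma_5\cdot\varphi)^\HH\otimes\sD^K\psi$. The curvature term contracts through $\sum_j g_K(e_a,v_j)\,v_j=e_a$; after commuting $\gamma_5$ past $X_\mu X_\nu$ via $\gamma_5\gamma_\mu\gamma_\nu=\gamma_\mu\gamma_\nu\gamma_5$ it combines with the horizontal half so the coefficients add as $\tfrac14-\tfrac18=\tfrac18$, yielding the stated $F_A$ term. The decisive contribution is the $\dd^A g_K$ term: Clifford multiplication produces a product $v_l\cdot v_k$ whose coefficient is the raised \emph{symmetric} tensor $(\dd^A g_K)_{X_\mu}$, which annihilates the antisymmetric part of $v_l\cdot v_k$ and leaves only its trace $g_K^{ij}(\dd^A g_K)_{X_\mu}(v_i,v_j)$ (together with a sign from $\gamma_5\gamma_\nu\gamma_5=-\gamma_\nu$). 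By \eqref{DefinitionCovariantDerivative} this trace equals $2\,\partial_{X_\mu}\log\sqrt{|g_K|}+2\,A^a_\mu\,\divergence(e_a)$, which simultaneously supplies the $\tfrac12\,\partial_{X_\nu}\log\sqrt{|g_K|}$ term (assembled into the same bracket as the $\partial_{X_\nu}\psi$ piece above) and, absorbed into the horizontal gauge coupling, completes it to $\big[\cL_{e_a}+\tfrac12\divergence(e_a)\big]$.

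The main obstacle is purely combinatorial: keeping the factors $\tfrac18,\tfrac14,\tfrac12$ and the signs consistent as Clifford factors are reordered and the $F_A$ indices are raised and lowered. The essential tools are the reordering identities $\gamma_5\gamma_\mu\gamma_5=-\gamma_\mu$ and $\gamma_5\gamma_\mu\gamma_\nu=\gamma_\mu\gamma_\nu\gamma_5$, the orthonormal-frame relation $\tfrac12\{v_i,v_j\}=-\,g_K(v_i,v_j)$ implied by \eqref{ConventionsGammaMatrices}, and the observation that a symmetric coefficient selects the trace of $v_i\cdot v_j$ while an antisymmetric one selects its two-form part. Once these are applied systematically, the horizontal and vertical sums collect into exactly the five groups of \eqref{GeneralFormulaDecompositionDiracOperator}; the fully detailed verification is relegated to appendix \ref{DecompositionHDOperators}.
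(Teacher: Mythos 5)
Your proposal is correct and follows essentially the same route as the paper's proof in appendix \ref{DecompositionHDOperators}: splitting $\sD^P$ into horizontal and vertical frame sums, inserting the covariant derivatives of proposition \ref{thm:DecompositionCovariantDerivatives}, pushing Clifford factors through the tensor product via \eqref{EquivarianceClifford2}, using the symmetric-coefficient/antisymmetric-Clifford argument to reduce the $\dd^A g_K$ term to its trace, and combining the $\tfrac14$ and $-\tfrac18$ curvature contributions into the $\tfrac18$ Pauli term. The only cosmetic difference is that the paper imports the trace identity $\sum_j (\dd^A g_K)_{X_\mu}(v_j,v_j) = 2\big[\partial_{X_\mu}\log\sqrt{|g_K|} + A^a_\mu\,\divergence(e_a)\big]$ from \cite{Bap}, whereas you derive it directly from \eqref{DefinitionCovariantDerivative}; both are legitimate.
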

This expression is a more explicit and expanded version of that given in \cite{Reynolds}. The detail is necessary to better understand the underlying physics.
Formula \eqref{GeneralFormulaDecompositionDiracOperator} simplifies in regions where the Higgs-like scalars are constant, i.e. where the internal metric $g_K$ does not change along $M_4$. In this case $[X, v_i] = 0$, as before, and every higher-dimensional spinor can be expressed as a sum of simpler tensor products of the form $\varphi(x)^\HH \otimes \psi(y)$, as in \eqref{EigenMassDecompositionSpinors}. Thus, in this simpler setting, the Dirac operator acts as follows.
\begin{corollary}
In regions where $g_K$ is constant along $M_4$, the action of the Dirac operator on a spinor of the form $\varphi(x)^\HH \otimes \psi(y)$ can be decomposed as 
\begin{align}
\label{SimplerDecompositionDiracOperator}
\sD^P (\varphi^\HH \otimes \psi ) \; =& \; \, g_M^{\mu \nu}\,  (X_\mu \cdot  \nabla^M_{\nu} \varphi)^\HH \otimes \psi  \; + \;   g_M^{\mu \nu}\, \, A_\nu^a\,  \, (X_\mu \cdot  \varphi)^\HH  \otimes  \big[ \cL_{e_a} + \frac{1}{2} \, \divergence (e_a) \big] \psi  \nonumber   \linebr
&+  \; ( \gamma_5 \cdot \varphi)^\HH  \otimes  \sD^K\psi   \; + \;  \frac{1}{8} \, \,  (F_A^a)^{\mu \nu } \,  (  X_\mu \cdot X_\nu \cdot \gamma_5 \cdot \varphi  )^\HH \otimes (e_a \cdot \psi) \ .
 \end{align}
\end{corollary}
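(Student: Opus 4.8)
The plan is to obtain this Corollary as an immediate specialization of Proposition \ref{thm:DecompositionDiracOperator}, exploiting the two simplifications that the hypothesis ``$g_K$ constant along $M_4$'' forces on the general formula \eqref{GeneralFormulaDecompositionDiracOperator}. First I would reduce to the case of a single tensor product: by \eqref{EigenMassDecompositionSpinors}, in such a region every higher-dimensional spinor is a (possibly infinite) sum $\sum_\alpha \varphi_\alpha^\HH(x)\otimes\psi^\alpha(y)$ with the $\psi^\alpha$ independent of $x$, so by linearity of $\sD^P$ it suffices to verify the formula on a single summand $\varphi^\HH(x)\otimes\psi(y)$ with $\psi=\psi(y)$.

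With this reduction in place, I would apply Proposition \ref{thm:DecompositionDiracOperator} verbatim and argue that its last two lines drop out. The third line is $g_M^{\mu\nu}(X_\mu\cdot\varphi)^\HH\otimes[\partial_{X_\nu}+\tfrac12(\partial_{X_\nu}\log\sqrt{|g_K|})]\psi$, and both pieces vanish here: since $\psi$ is independent of $x$ we have $\partial_{X_\nu}\psi=0$, and since $g_K$ (hence its determinant) is constant along $M_4$ we have $\partial_{X_\nu}\log\sqrt{|g_K|}=0$. The fourth line carries the bracket factor $g_P([X_\nu,v_i],v_j)-g_P([X_\nu,v_j],v_i)$. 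Choosing the $g_K$-orthonormal vertical frame $\{v_j\}$ to be independent of the base coordinate $x$ — which is possible precisely because $g_K(x)$ no longer varies with $x$ — makes each bracket $[X_\nu,v_i]$ vanish, exactly as recorded in the preceding Corollary on covariant derivatives. Hence the fourth line is identically zero, and what survives is the first and second lines of \eqref{GeneralFormulaDecompositionDiracOperator}, which are precisely the four terms of \eqref{SimplerDecompositionDiracOperator}.

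There is essentially no mathematical obstacle here; the statement is a direct corollary and the substantive computation has already been carried out in the appendix proof of Proposition \ref{thm:DecompositionDiracOperator}. The only point that merits a moment's care is the legitimacy of selecting a vertical orthonormal frame $\{v_j\}$ that is constant along $M_4$: one must check that such a frame can be chosen over the region in question so that the Lie brackets $[X_\nu,v_i]$ genuinely vanish, rather than merely the frame being pointwise adapted. This is guaranteed by the constancy of $g_K$ along $M_4$, and it is the same observation already used to pass from Proposition \ref{thm:DecompositionCovariantDerivatives} to its Corollary; reusing that reasoning closes the argument.
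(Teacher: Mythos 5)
Your proposal is correct and follows exactly the paper's own route: the corollary is obtained by specializing Proposition \ref{thm:DecompositionDiracOperator}, noting that $\partial_{X_\nu}\psi=0$ and $\partial_{X_\nu}\log\sqrt{|g_K|}=0$ kill the third line, and that a vertical orthonormal frame $\{v_j\}$ constant along $M_4$ (available precisely because $g_K$ is constant) makes the brackets $[X_\nu,v_i]$ vanish and kills the fourth line. The only cosmetic difference is your preliminary reduction via \eqref{EigenMassDecompositionSpinors}, which is not needed since the corollary is already stated for a single tensor product $\varphi^\HH\otimes\psi$.
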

This expression is valid for a general gauge one-form $A_\mu = A^a_\mu \, e_a$ on $M_4$ with values in the space of vector fields on $K$, be they Killing or non-Killing with respect to $g_K$. The first term in \eqref{SimplerDecompositionDiracOperator} contains the Dirac operator in 4D. The second term determines the couplings between gauge fields and fermions. The term with the internal Dirac operator generates mass terms for fermions in 4D. The last term is a Pauli-type coupling between the gauge field strength and spinors. It is a standard feature in Kaluza-Klein dimensional reductions.

Now suppose that a higher-dimensional spinor of the form $\Psi (x,y)  =   \varphi_\alpha^\HH(x) \otimes \psi^\alpha(y)$, as in \eqref{EigenMassDecompositionSpinors}, satisfies the massless Dirac equation $\sD^P \Psi = 0$.  The $\{ \psi^\alpha \}$ form an $L^2$-orthonormal basis of $\sD^K\!$-eigenspinors on $K$. In particular, they are independent of each other. Thus, writing $m_\alpha$ for the respective eigenvalue and using the traditional $\gamma^\mu$ notation, equation \eqref{SimplerDecompositionDiracOperator} implies that each $\varphi_\alpha$ must satisfy the equation over $M_4$:
\beq
\label{DimensionReductionDiracEquation}
 \gamma^\mu \, \nabla^{M, \, A}_{X_\mu} \,  \varphi_\alpha \; +\; m_\alpha \, \gamma_5 \cdot \varphi_\alpha \; +\; \frac{1}{8}\, \,  (F_A^a)_{\mu \nu } \, \blangle \psi_\alpha \, , e_a \cdot \psi_\beta \brangle_{L^2}\, \,  \gamma^\mu  \gamma^\nu  \gamma_5 \, \varphi_\beta    \;= \; 0  \ .
\eeq  
Here we are summing over the index $\beta$ and the 4D covariant derivative is defined by 
\beq
\label{Induced4DCovariantDerivative}
\nabla^{M, \, A}_{X_\mu} \,  \varphi_\alpha \; := \; \nabla^{M}_{X_\mu} \,  \varphi_\alpha \; + \; A^a_\mu \,  \,  \Blangle \, \psi_\alpha  \, , \big[ \cL_{e_a} + \frac{1}{2} \, \divergence (e_a) \big] \psi_\beta \, \Brangle_{L^2}\,  \, \varphi_\beta   \ .
\eeq
The reader may notice that \eqref{DimensionReductionDiracEquation} is similar to, but not identical to, the 4D Dirac equation. Besides the Pauli term, it has extra $\gamma_5$ factors and a non-standard kinetic term. However, denoting 
\beq
\varphi_\alpha' \; := \;  \frac{1}{\sqrt{2}} \, ( I    +  i  \gamma_5 ) \, \varphi_\alpha \; = \; \exp{(i \pi \gamma_5 / 4)}  \, \varphi_\alpha  \ ,
\eeq
as in \cite[p. 22]{Duff}, it is easy to check that the redefined 4D spinors satisfy 
\beq
\label{DimensionReductionDiracEquation2}
 i\, \gamma^\mu \, \nabla^{M, \, A}_{X_\mu} \,  \varphi_\alpha' \; +\; m_\alpha \, \varphi_\alpha' \; +\; \frac{1}{8}\, \,  (F_A^a)_{\mu \nu } \, \blangle \psi_\alpha \, , e_a \cdot \psi_\beta \brangle_{L^2}\, \,  \gamma^\mu  \gamma^\nu \, \varphi_\beta'    \;= \; 0  \ .
\eeq 
This is the traditional Dirac equation, with a Pauli term, written in Minkowski space with signature $-+++$ and the gamma matrix conventions described in \ref{GammaMatrices}.

Thus, the physical couplings of gauge fields to 4D fermions, as given by the covariant derivative \eqref{Induced4DCovariantDerivative}, are determined by the matrix elements of $\cL_{e_a}$ in the basis $\{ \psi_\alpha \}$ of eigenspinors of $\sD^K$. These matrix elements are anti-hermitian. The identity
\beq
 \Blangle \,  \big[ \cL_{e_a} + \frac{1}{2} \, \divergence (e_a) \big] \psi_\alpha \, ,  \psi_\beta \, \Brangle_{L^2}  \; + \;  \Blangle \, \psi_\alpha  \, , \big[ \cL_{e_a} + \frac{1}{2} \, \divergence (e_a) \big] \psi_\beta \, \Brangle_{L^2} \; = \;  0 \ 
\eeq
follows from the general property \eqref{IntegralKLD} of the Kosmann-Lichnerowicz derivative. 

\begin{remark}
The appearance of a 4D Pauli term is a departure from the prescriptions of the Standard Model. In the case of the abelian 5D Kaluza-Klein model, this term contributes to the anomalous magnetic moment of the charged fermion. It has been estimated that the magnitude of this contribution is negligible, beyond the reach of current measurements (e.g. \cite{CN, Dolan}). However, those estimates rely on the assumption that the higher-dimensional vacuum metric is determined by the simple Einstein-Hilbert action on $P$. That assumption should not hold in realistic models. Higher-order corrections to that action seem necessary for several reasons. For example, to introduce the different mass scales observed in reality and to stabilize the vacuum metric on $P$ \cite{Bap}.
\end{remark}

\begin{remark}
Consider the operator on spinors over $P$ defined by the Pauli term,
\beq
C (\varphi^\HH \otimes \psi) \, := \,   \frac{1}{8} \, \,  (F_A^a)^{\mu \nu } \,  (  X_\mu \cdot X_\nu \cdot \gamma_5 \cdot \varphi  )^\HH \otimes (e_a \cdot \psi) \ .
\eeq
It is algebraic and anti-self-adjoint with respect to the pairing $\langle \cdot, \cdot \rangle_{\Gamma_0}$ of spinors. Thus, the modified operator $\sD^P\! - C$ retains most of the useful properties of $\sD^P$, such as ellipticity, anti-self-adjointness with respect to $\langle \cdot, \cdot \rangle_{\Gamma_0}$, and the implicit coupling of 4D gauge fields to spinors through the Kosmann-Lichnerowicz derivative. So one could also consider $(\sD^P\! - C) \Psi = 0$ as a candidate for the physical equation of motion for spinors on $P$. An important conceptual disadvantage of the modified operator, however, is that it is defined only for submersion metrics of the form $g_P \simeq (g_M, A, g_K)$, not for general metrics on $P$.
\end{remark}

\section{Properties of the Kosmann-Lichnerowicz derivative}
\label{PropertiesKLD}

\subsection{General properties}
\label{GPropertiesKLD}

This section collects several useful properties of the Kosmann-Lichnerowicz derivative of spinors, denoted $\cL_X$. This operator was introduced by Lichnerowicz for a Killing $X$ in \cite{Lich}. Later, Kosmann extended it to any $X$ and thoroughly investigated its properties \cite{Kosmann}.  Most results presented here can be found in the literature in the Riemannian case \cite{Kosmann, BT}, but are somewhat dispersed and are stated using various conventions. The simple formulae in lemma \ref{thm: KLDerivativeRS} are necessary for sections \ref{ExplicitExampleT2} and \ref{ExplicitExampleS2} but do not seem to exist in the literature. So their proofs are given here. All formulae written in the present section \ref{GPropertiesKLD} remain valid in the semi-Riemannian case.

Let $(K, g)$ be a semi-Riemannian manifold with a fixed spin structure and spinor bundle $S_\CC(K) \rightarrow K$. Given a vector field $X$ and a spinor $\psi$ on $K$, the Kosmann-Lichnerowicz derivative is defined by
\begin{align}
\label{DefinitionKLDerivative}
\cL_X \, \psi \; &:=\;  \nabla_X \psi \; -\; \frac{1}{8}\, g^{ir} g^{js} \big[  g(\nabla_{v_r} X, v_s) - g(\nabla_{v_s} X, v_s) \big] \, v_i \cdot v_j  \cdot \psi     \linebr
&\, \, = \;  \partial_X \tilde{\psi} \;+ \;\frac{1}{8}\,  g^{ir} g^{js} \big\{  g([X, v_r], v_s) - g([X, v_s], v_r) \big\} \,   v_i \cdot v_j  \cdot \tilde{\psi} \ . \nonumber
\end{align}
Here $\nabla$ denotes both the Levi-Civita connection on $TK$ and its standard lift \eqref{StandardSpinorCovariantDerivative} to the spinor bundle. The dot denotes the Clifford action of vectors on spinors.  The vectors $v_j$ form a $g$-orthonormal, oriented, local trivialization of $TK$. The second line is a local expression using the representative $\tilde{\psi}$ of $\psi$  with respect to the trivialization of $S_\CC(K)$ induced by $\{ v_j\}$. It follows from definition \eqref{StandardSpinorCovariantDerivative} and a direct application of the Koszul formula for $\nabla$. When $X$ is Killing, the coefficients $g_K(\nabla_{v_r} X, v_s)$ and $g_K([X, v_r], v_s)$ are both anti-symmetric in $r$ and $s$. 

The operator $\cL_X$  is $\mathbb{C}$-linear and acts as a derivation when the spinor is multiplied by a function in $C^\infty (K)$,
\beq
\cL_X \, (\psi_1 \, + \, f\,  \psi_2) \ = \ \cL_X  \psi_1 \; + \; f\, \cL_X \psi_2 \; + \; (\Lie_X f) \, \psi_2 \ .
\eeq
Its commutation relation with the Clifford multiplication of vectors and spinors is 
\beq
\label{CommutatorCliffordMultiplicationKLD}
\cL_X (Y \cdot \psi) \; - \; Y \cdot \cL_X \psi  \; = \; [X,Y] \cdot \psi \; +\; \frac{1}{2}\, g^{ij}  \, (\Lie_X g)(v_i, Y)\; v_j \cdot \psi \ .
\eeq
Unlike the covariant derivative, it does not behave nicely when the vector field $X$ is multiplied by a scalar function:
\beq
\cL_{f X} \, \psi \ = \  f\, \cL_X \psi \; + \;   \frac{1}{8} \, \big[ (\gradient f) \cdot X - X \cdot (\gradient f) \big] \cdot \psi \ .
\eeq
Using that the volume element is covariantly constant on $K$, one can derive that $\cL_X$ always commutes with the chirality operator:
\beq
\label{CommutatorChiralOperatorKLD}
\cL_{X} \, \Gamma_K \,  \psi \; = \;   \Gamma_K  \, \cL_{X} \,  \psi \ .
\eeq
The commutator of $\cL_X$ with the covariant derivative $\nabla$ is encapsulated in
\begin{align}
\label{CommutatorConnectionKLD}
(\cL_X \nabla)_Y \, \psi  \; &:= \; \cL_X (\nabla_Y \, \psi ) \; - \;  \nabla_Y( \cL_X \, \psi ) \; - \; \nabla_{[X, Y]}\, \psi      \linebr 
&\,=\; \frac{1}{8}\, g^{ir} g^{js}   \big\{ [\nabla_{v_r}(\Lie_X g)] (v_s , Y)  \; - \;    [\nabla_{v_s}(\Lie_X g)] (v_r , Y)  \big\} \; v_i \cdot v_j \cdot \psi \nonumber
\end{align}
The operator $\cL_X \nabla$  is also called the derivative of the connection $\nabla$. The symbol $\Lie_X g$ denotes the Lie derivative of the metric $g$ along $X$. It is again a section of $T^\ast K \otimes T^\ast K$, like the metric itself, and has a covariant derivative $\nabla (\Lie_X g)$. 

The commutator of $\cL_X$ with the standard Dirac operator is given by
\begin{align}
\label{CommutatorDiracKLD}
[\sD, \cL_X] \, \psi  \ =& \ \frac{1}{2}\, g^{ir} g^{js} \, (\Lie_X g) (v_r , v_s) \ v_i \cdot \nabla_{v_j} \psi  \  \linebr
 &\, + \frac{1}{4}\, g^{ir} g^{js}  \big\{ [\nabla_{v_r}(\Lie_X g)] (v_i , v_j)  \; - \;    [\nabla_{v_j}(\Lie_X g)] (v_i , v_r)  \big\} \; v_s \cdot \psi  \  \nonumber
\end{align}
Just like the Levi-Civita connection \eqref{StandardSpinorCovariantDerivative}, the Kosmann-Lichnerowicz derivative is compatible with the inner-product of spinors. This means that
\beq
\label{CompatibilityKLDInnerProduct}
 ( \cL_X \psi_1 , \psi_2 )\, +\, ( \psi_1 , \cL_X \psi_2 ) \; = \; \Lie_X ( \psi_1 , \psi_2 ) \ ,
\eeq
where $\Lie_X$ denotes the standard Lie derivative of functions on $K$.
 Using Stokes' theorem, this implies that for compactly supported spinors we have 
\beq
\label{IntegralKLD}
\int_K \big\{ ( \cL_X \psi_1  , \psi_2 )\, +\, ( \psi_1 , \cL_X \psi_2 ) \, + \,  \divergence(X)\, ( \psi_1, \psi_2 ) \big\} \, \vol_{g} \ = \ 0 \ ,
\eeq
where $ \divergence(X)$ denotes the divergence of $X$ on $(K, g)$. So $\cL_X$ is formally anti-self-adjoint for a divergence-free $X$.
\begin{remark}
\label{PairingRemark}
 In formulae \eqref{CompatibilityKLDInnerProduct} and \eqref{IntegralKLD}, the inner-product of spinors should be interpreted with care. If the metric $g$ has a signature with $r$ minus signs, then $( \cdot , \cdot )$ denotes the inner-product such that 
\beq
\label{AdjointnessCliffordPairing}
(X \cdot \psi_1 , \psi_2 ) \; = \; (-1)^{r-1} (\psi_1 ,  X \cdot \psi_2 )
\eeq
for any $X \in TK$ \cite[p. 68]{Baum}. Thus, it is the positive-definite pairing $\langle \psi, \psi \rangle = \psi^\dag \psi$ in the Riemannian case and the indefinite pairing $\langle \psi, \psi \rangle_{\Gamma_0} = \langle \Gamma_0 \psi, \psi \rangle$ in the Lorentzian case (see the conventions in section \ref{SpinorsOnP}).
\end{remark}
The Kosmann-Lichnerowicz derivative is not a proper Lie derivative of spinors, since it does not satisfy the usual closure relation. In general we have
\begin{align}
\label{NonClosureKLDerivative}
( \, [\cL_X, \cL_Y] \, -\, \cL_{[X, Y]} \,  ) \, \psi  \ =& \  \frac{1}{4}\, g^{ir} g^{js}  g^{kl} \, \big\{ \, (\Lie_X g)(v_r , v_k) \;  (\Lie_Y g)(v_s , v_l)  \linebr 
 &\, - \;  (\Lie_X g)(v_s , v_k) \;  (\Lie_Y g)(v_r, v_l)  \, \big\} \; v_i \cdot v_j \cdot \psi  \ . \nonumber
\end{align}
However, the right-hand side does vanish when either $X$ or $Y$ are Killing or conformal Killing vector fields on $K$. When this is not the case, the right-hand side only vanishes to first order in the Lie derivatives of $g$.

\begin{remark}
Despite being a higher-order correction, the right-hand side of \eqref{NonClosureKLDerivative} represents a significant conceptual variation from current practice in the Standard Model and grand unified theories \cite{Hamilton, BH}. It shows how there are natural equations of motion, in this case the equation $\sD \Psi = 0$ of a Kaluza-Klein model, in which only the unbroken part of the 4D gauge group couples to fermions through an exact Lie algebra representation. The broken part does not, and understandably so. It does not preserve the internal vacuum metric, and, since spinors are objects that depend on the metric, it does not induce a standard symmetry transformation of spinors. As in the case of 4D general relativity, it would be relevant to better understand how to transform spinors and the Dirac operator under general diffeomorphisms of the underlying space.
\end{remark}

\subsection{Kosmann-Lichnerowicz derivative on Riemann surfaces}
\label{KLDRiemannSurfaces}

This section derives two formulae for $\cL_X$ that will be needed in the examples in sections \ref{ExplicitExampleT2} and \ref{ExplicitExampleS2}. Let $K$ be a Riemann surface with metric $g$ and compatible complex structure $J$. As with general K\"ahler manifolds \cite[ch. 2]{Besse}, we have the compatibility relations
\begin{align} \label{KahlerRelations}
g(JX, JY) \; &=\; g(X, Y)  \qquad \qquad  \quad  \vol_g (X, Y)\; = \; g(JX, Y) \  \nonumber  \linebr
\nabla_X (JY) \; &= \; J \, \nabla_X Y
\end{align}
for any vector fields $X$ and $Y$ on $K$. Choosing isothermal coordinates on $K$ \cite[sec. 1.5]{IT}, the metric can be locally written as 
\beq
g \; = \;  \Omega^2 (x,y) \, [ \, (\dd x)^2 \, + \, (\dd y)^2 \,] 
\eeq
for some positive function $\Omega$. The complex structure acts on the coordinate vector fields as $J(\partial_x) = \partial_y$ and $J(\partial_y) = - \partial_x$,
so $z = x+iy$ is a complex coordinate on $K$. Thus, the local vector fields $v_1 = \Omega^{-1} \partial_x$ and $v_2 =  \Omega^{-1}  \partial_y$ form an oriented, $g$-orthonormal trivialization of $TK$ satisfying $J v_1 = v_2$.

Now fix a spin structure and spinor bundle $S_\CC(K)$. This is a rank-two complex vector bundle over $K$. Definition \eqref{ChiralOperators} says that the chirality operator acts on spinors as
\beq
\label{ChiralOperatorRS}
\Gamma_K  \, \psi  \; = \;  i \, v_1 \cdot v_2 \cdot \psi \ .
\eeq
Then we have:
\begin{lemma}
\label{thm: KLDerivativeRS}
The Kosmann-Lichnerowicz derivative of a spinor on a Riemann surface can be written as 
\beq
\label{KLDerivativeRS}
\cL_X \psi \ = \  \nabla_X \psi \; - \;  \frac{i}{4} \,\divergence(JX) \, \Gamma_K \, \psi 
\eeq
for any vector field $X$ on $K$. In particular, when $X_h$ is a Hamiltonian vector field determined by a function $h$, we have
\beq
\label{KLDerivativeRSH}
\cL_{X_h} \psi \ = \  \nabla_{X_h} \psi \; - \;  \frac{i}{4} \,(\Delta h) \, \Gamma_K \, \psi \ ,
\eeq 
where $\Delta h$ denotes the Laplacian of $h$. 
\end{lemma}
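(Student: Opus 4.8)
The plan is to reduce everything to the defining formula \eqref{DefinitionKLDerivative} and to exploit that on a surface the frame sum collapses dramatically. First I would fix the oriented $g$-orthonormal frame $v_1 = \Omega^{-1}\partial_x$, $v_2 = \Omega^{-1}\partial_y$ with $Jv_1 = v_2$ and $Jv_2 = -v_1$, so that $g^{ir} = \delta^{ir}$ and the only non-trivial Clifford product of distinct frame vectors is $v_1\cdot v_2 = -\, v_2 \cdot v_1$. Substituting this into \eqref{DefinitionKLDerivative}, the diagonal terms $(i,j)=(1,1),(2,2)$ drop out because their antisymmetrized coefficient vanishes, while the two off-diagonal terms $(1,2)$ and $(2,1)$ reinforce rather than cancel. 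Writing $\omega := g(\nabla_{v_1}X, v_2) - g(\nabla_{v_2}X, v_1)$, I expect the correction term to collapse to $-\tfrac14\,\omega\, v_1\cdot v_2\cdot\psi$.

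The next step is to convert this Clifford bivector into the chirality operator. Formula \eqref{ChiralOperatorRS} gives $\Gamma_K = i\, v_1\cdot v_2$, hence $v_1\cdot v_2 = -\,i\,\Gamma_K$, so the correction term becomes $+\tfrac{i}{4}\,\omega\,\Gamma_K\,\psi$. At this stage one has $\cL_X\psi = \nabla_X\psi + \tfrac{i}{4}\,\omega\,\Gamma_K\,\psi$, and it only remains to identify $\omega$ geometrically. For that I would compute $\divergence(JX) = g(\nabla_{v_1}(JX), v_1) + g(\nabla_{v_2}(JX), v_2)$ using the two facts supplied by \eqref{KahlerRelations}: the parallelism $\nabla_{v_i}(JX) = J\,\nabla_{v_i}X$ and the orthogonality of $J$, which yields $g(JA,B) = -\,g(A,JB)$. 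Together with $Jv_1 = v_2$, $Jv_2 = -v_1$, these give $\divergence(JX) = -\,g(\nabla_{v_1}X, v_2) + g(\nabla_{v_2}X, v_1) = -\,\omega$. Substituting $\omega = -\divergence(JX)$ then produces formula \eqref{KLDerivativeRS}.

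Finally, for the Hamiltonian case I would invoke the defining relation $\iota_{X_h}\vol_g = \dd h$, which in the Kähler normalisation $\vol_g(X,Y)=g(JX,Y)$ reads $JX_h = \gradient h$. Taking the divergence gives $\divergence(JX_h) = \divergence(\gradient h) = \Delta h$, and inserting this into \eqref{KLDerivativeRS} yields \eqref{KLDerivativeRSH} at once.

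The computation carries no conceptual obstacle; the single delicate point is sign and factor bookkeeping. Three things must be kept consistent: the antisymmetrized coefficient in \eqref{DefinitionKLDerivative} (whose second term should read $g(\nabla_{v_s}X, v_r)$, correcting the evident misprint $v_s$), the Riemannian Clifford convention $v_i\cdot v_i = -1$ together with the $i$-factor incurred when passing to $\Gamma_K$, and the factor of two produced by the two surviving off-diagonal terms. It is precisely the interplay of these three that makes the coefficient $-\tfrac{i}{4}$ come out correctly, so I would track them explicitly rather than appeal to symmetry shortcuts.
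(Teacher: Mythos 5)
Your proposal is correct and follows essentially the same route as the paper: both reduce the definition \eqref{DefinitionKLDerivative} to the two-dimensional orthonormal frame, identify the surviving bivector $v_1\cdot v_2$ with $-i\,\Gamma_K$ via \eqref{ChiralOperatorRS}, compute $\divergence(JX)$ from the K\"ahler relations \eqref{KahlerRelations}, and obtain the Hamiltonian case from $JX_h = \gradient h$. Your explicit bookkeeping of the factor of two from the off-diagonal terms, and your note that the second term in \eqref{DefinitionKLDerivative} carries a misprint (it should read $g(\nabla_{v_s}X, v_r)$), simply spell out what the paper compresses into ``follows by combining'' the three ingredients.
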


\begin{proof}
Using the compatibility relations \eqref{KahlerRelations}, the divergence of the vector field $JX$ is
\begin{align}
\label{DivergenceJX}
\divergence (JX) \; &=  \; \sum\nolimits_j g( \nabla_{v_j} JX , v_j) \; = \; \sum\nolimits_j g( J \nabla_{v_j} X , v_j) \; = \; - \sum\nolimits_j g( \nabla_{v_j} X , J v_j)  \nonumber \linebr 
&= \;  g( \nabla_{v_2} X , v_1 ) \, - \,  g( \nabla_{v_1} X , v_2 )  \ .
\end{align}
Then \eqref{KLDerivativeRS} follows by combining \eqref{DivergenceJX}, \eqref{ChiralOperatorRS} and the definition \eqref{DefinitionKLDerivative} of $\cL_X$.
Now let $h$ be a real, smooth function on $K$ and let $X_h$ denote its Hamiltonian vector field \cite{Cannas}. It is a divergence-free vector field determined by the equality of 1-forms
\beq
\label{DefinitionHamiltonianField}
\dd h \; = \; \iota_{X_h} \vol_g \ .
\eeq
It follows from \eqref{KahlerRelations} that
\beq
\label{IdentityHamiltonianFieldRS}
J X_h \; = \; \gradient \, h \ 
\eeq
as vector fields on $K$. So $\divergence (J X_h) = \Delta h$ and \eqref{KLDerivativeRSH} follows from \eqref{KLDerivativeRS}. 
\end{proof}

Hamiltonian vector fields always have vanishing divergence. When $K = S^2$, every divergence-free vector field is Hamiltonian for an appropriately chosen function $h$. This holds because the first de Rham cohomology group of $S^2$ vanishes.

\section{Chiral fermions}
\label{ChiralFermions}

\subsection{Correlation between internal and 4D chiralities}
\label{CorrelationChiralities}

This section starts by reviewing well-known observations about the correlation between internal and 4D chiralities (e.g. \cite{Witten83}) in our setting of general Riemannian submersions. This is a mechanism to explain how the chirality of interactions between 4D fermions and  gauge fields can be caused, in principle, by an asymmetry of the action of the Kosmann-Lichnerowicz derivative $\cL_X$ on left- and right-handed internal spinors. Section \ref{InternalChiralFermions} then studies the chiral properties of $\cL_X$ in more detail. It defines the concepts of chiral isomorphism and strong and weak chiral symmetries. 

 In this section, we assume that $K$ is a compact, connected, even-dimensional, spin manifold with a fixed topological spin structure. For any metric $g_K$, the chirality operator satisfies $\Gamma_K \Gamma_K = 1$. It has eigenvalues $\pm 1$ and its eigenspaces determine a decomposition of the complex spinor bundle as a sum of half-spinor bundles, $S_\CC(K)  =  S^+_\CC(K)  \oplus  S^-_\CC(K)$. Since this is true for any metric, the isomorphisms \eqref{IsomorphismVerticalBundle} on the fibres determine a splitting of the vertical bundle on $P$ as a sum $S_\CC^+(\VV) \oplus S_\CC^-(\VV)$. Similarly, the decomposition $S_\CC(M_4)  =  S^+_\CC(M_4)  \oplus  S^-_\CC(M_4)$ on the base together with the isomorphism \eqref{IsomorphismHorizontalBundle} determines a splitting of the horizontal bundle as $S_\CC^+(\HH) \oplus S_\CC^-(\HH)$. Using the formula $\Gamma_P = \gamma_5 \otimes \Gamma_K$, which comes from \eqref{ChiralOperators} for an even-dimensional $K$, it is clear that these decompositions are related to the higher-dimensional splitting $S_\CC(P)  =  S^+_\CC(P)  \oplus  S^-_\CC(P)$ simply by
\begin{align}
\label{TensorProductSpinorBundle2}
S^+_\CC(P) \ &\simeq \ \big[ S^+_\CC(\HH) \otimes S^+_\CC(\VV) \big] \oplus  \big[ S^-_\CC(\HH) \otimes S^-_\CC(\VV) \big] \linebr
S^-_\CC(P) \ &\simeq \ \big[ S^+_\CC(\HH) \otimes S^-_\CC(\VV) \big] \oplus  \big[ S^-_\CC(\HH) \otimes S^+_\CC(\VV) \big] \ . \nonumber
\end{align}
Sections of $S^\pm_\CC(P)$ are the Weyl spinors on $P$. Now suppose that the higher-dimensional physical spinors are taken to be sections of $S^+_\CC(P)$, say, instead on sections of $S_\CC(P)$. In other words, suppose that physical spinors are taken to satisfy the right-handed Weyl equation: 
\beq
\sD^P \, \Psi \; = \; 0 \qquad \qquad \Gamma_P \, \Psi \; = \; \Psi \ .
\eeq
Then it follows from \eqref{TensorProductSpinorBundle2} that the solutions will be sums of spinors of the form
\beq
\Psi (x,y) \; = \; \varphi^+(x) \otimes \psi^+(x,y) \; + \; \varphi^-(x) \otimes \psi^- (x,y) \ .
\eeq
So the right-handed 4D fermions are coupled only to right-handed internal spinors, and similarly for their left-handed counterparts. The higher-dimensional Weyl equation imposes a correlation between 4D and internal chiralities. Now suppose that a 4D gauge field is linked to a vector field $X$ on $K$ and the derivative $\cL_X$ acts differently on the spaces of sections of $S^+_\CC(\VV)$ and $S^-_\CC(\VV)$. Then, it follows from \eqref{Induced4DCovariantDerivative} that the 4D chiral components $\varphi^\pm$ will have different couplings to that gauge field. So the emergence of chiral fermions in 4D could be due to an asymmetry of the action of $\cL_X$ on the chiral components $\psi^\pm$ of vertical spinors.

Given the isomorphisms \eqref{IsomorphismVerticalBundle}, our task is therefore to study the symmetries of $\cL_X$ when acting on the spaces of sections of $S^\pm_\CC(K)$, denoted $V^\pm$. That is the aim of section \ref{InternalChiralFermions}. For now, let us introduce notation and recall standard facts that will be necessary there. For a compact $K$, the Dirac operator $\sD_{g_\KKK}$ is elliptic and self-adjoint with respect to the $L^2$-inner-product of complex spinors 
\beq
\label{L2InnerProductSpinors}
\blangle \,\psi , \,\psi' \, \brangle_{L^2} \ := \ \int_{K} \langle \,\psi , \,\psi' \, \rangle \, \vol_{g_\KKK} \ .
\eeq 
Here $\langle \,\psi , \,\psi' \, \rangle$ denotes the positive-definite, hermitian pairing that is $i$-linear in the second entry and $i$-antilinear in the first one. The Dirac operator has finite-dimensional eigenspaces, denoted $V_m$, with eigenvalues $m$ that are real, discrete and accumulate only at infinity. The space of $L^2$-integrable spinors then decomposes as an orthogonal Hilbert direct sum of the different $V_m$. 

For a compact $K$, the isometry group of $g_K$ is a compact Lie group, and we denote by $G$ its connected component. We always assume that the $G$-action on $K$ can be lifted to an action on the spin structure. When $G$ is non-trivial, the index of the Dirac operator vanishes for a connected, even-dimensional $K$ \cite{AH}. This means that the kernel of $\sD_{g_\KKK}$ splits into a sum of two $\Gamma_K$-eigenspaces of equal dimension, $V_0 = V_0^ + \oplus V_0^-$. When $G$ is trivial, the index of $\sD_{g_\KKK}$ can still vanish. This will happen when $K$ admits any metric with positive scalar curvature, for example. When $g_K$ itself has this property, the Schr\"odinger-Lichnerowicz formula guarantees that the kernel of $\sD_{g_\KKK}$ is trivial.

\subsection{Internal chiral symmetries}
\label{InternalChiralFermions}

Assume that $K$ is a compact, connected, even-dimensional, spin manifold with a fixed metric $g_K$ and spin structure. Its spinor bundle decomposes as a direct sum of half-spinor bundles, $S^\pm_\CC(K)$, whose space of sections are denoted $V^\pm$. The Kosmann-Lichnerowicz derivative $\cL_X$ always commutes with the chirality operator $\Gamma_K$. So it preserves the spaces $V^\pm$.  We want to study whether $\cL_X$ acts similarly on $V^+$ and $V^-$. When this happens, we say that $\cL_X$ has chiral symmetry. This notion should imply an identity of matrix elements $\langle \phi^+, \cL_X \psi^+ \rangle = \langle \phi^-, \cL_X \psi^- \rangle$ for all $\sD$-eigenspinors $\phi$ and $\psi$ with positive eigenvalues. In fact, we will distinguish two notions of chiral symmetry: when that identity holds exactly (strong symmetry), and when it holds only up to a unitary redefinition of the $\sD$-eigenspinors (weak symmetry). In the latter case, the redefinitions must respect the symmetries of spinors induced by the isometry group of $g_K$.

Any spinor on $K$ can be written as a sum of its chiral components $\psi = \psi^+ + \psi^-$, as in \eqref{DefChiralProjections}.
Let $V_m$ denote the eigenspace of $\sD$ with real eigenvalue $m$. Since $\Gamma_K$ anti-commutes with $\sD$, it preserves $V_0$ and takes $V_m$ isomorphically to $V_{-m}$ when $m\neq 0$. In particular, it preserves the sums $V_m \oplus V_{-m}$ for all $m>0$. These are eigenspaces of $\sD^2$. Since $\Gamma_K$ squares to the identity, each of these invariant spaces can be further decomposed into eigenspaces of $\Gamma_K$ with eigenvalues $\pm 1$. We write
\begin{align}
\label{DecompositionsSpinorSpaces3}
V_0 \; &= \; V_0^+ \oplus V_{0}^- \linebr
V_m \oplus V_{-m} \; &= \; V_m^+ \oplus V_{m}^-  \ , \qquad \quad m>0 \ . \nonumber
\end{align}
Let $G$ denote the connected component of the isometry group of $g_K$. Its action on spinors preserves the eigenspaces $V_m^\pm$. This follows because $\cL_Y$ commutes with $\Gamma_K$ and $\sD$ when $Y$ is Killing. Thus, for a non-trivial $G$, we can further decompose each $V_m^\pm$ as a sum of irreducible representation $G$-spaces. The summands are labelled by the respective $\sD$-eigenvalue $m$, the representation $\rho$ and a non-negative integer $n_{m, \rho}$ accounting for the representation multiplicity. So we write\footnote{For the trivial representation, $V_{m, \rho}^\pm$ is unidimensional and $n_{m, \rho}$ is the dimension of the subspace of $V_m^\pm$ where $G$ acts trivially.}
\beq
\label{DecompositionsSpinorSpaces}
V^\pm \; = \;  \bigoplus_{m\geq0}\, V_m^\pm \; = \; \bigoplus_{m \geq 0} \bigoplus_{\rho} \,n_{m, \rho} \, V_{m, \rho}^\pm \ .
\eeq
The integers $n_{m, \rho}$ are the same for both decompositions $\pm$. This is because, for all $m>0$, there is a natural, unitary, $G$-equivariant isomorphism 
\beq
\label{NaturalUnitaryIsomorphism}
I_m: V_m^+ \longrightarrow V_m^-  \qquad \quad \psi \longmapsto m^{-1} \sD \psi \ .
\eeq
Each $I_m$ is just a scalar multiple of the Dirac operator inside $V_m^+$. Its $G$-equivariance follows from the fact that $\sD$ commutes with the $G$-action on spinors. Since $\sD$ anti-commutes with $\Gamma_K$, the definitions \eqref{DefChiralProjections} of chiral projections imply that 
\beq
\label{NaturalUnitaryIsomorphism2}
I_m (\psi^+) \; = \; \psi^-
\eeq
 for all $\psi$ in the $\sD$-eigenspaces $V_m$ with eigenvalue $m>0$.
In the case $m=0$, the map $I_m$ is not well defined. So the difference $n^+_{0, \rho} \, - \, n^-_{0, \rho}$ could be non-zero. This difference is called the $\rho$-index of $\sD$. However, a result of Atiyah and Hirzebruch guarantees that it does vanish for a compact $K$ with non-trivial $G$. Thus, in this case, even for $m=0$, the subspaces $V_0^+$ and $V_0^-$ have the same decomposition into irreducible $G$-spaces \cite{AH, Witten83}.

\vspace{.2cm}

\begin{definition}
\label{ChiralIsomorphism}
A chirality isomorphism is an invertible, unitary, $G$-equivariant, linear map $\Theta: V^+ \rightarrow V^-$ that commutes with $\sD^2$.
\end{definition}
Such isomorphisms exist if and only if the index of $\sD$ vanishes. If $V_0^+$ and $V_0^-$ do not have the same dimension, such invertible maps cannot be constructed. In the other direction, if $V_0^+$ and $V_0^-$ have the same dimension, there exists a non-canonical unitary isomorphism $\Theta_0$ between them. If $G$ is non-trivial, the aforementioned result of Atiyah and Hirzebruch guarantees that $\Theta_0$ can be chosen to be $G$-equivariant. Then a full isomorphism $\Theta$ can be defined by making it coincide with the map $I_m$ of \eqref{NaturalUnitaryIsomorphism} over $V_m^+$, for each $m>0$, and coinciding with $\Theta_0$ over $V_0^+$.

Now let $X$ be a vector field on $K$ and let $\cL_X^\pm$ denote the restriction of the Kosmann-Lichnerowicz derivative to the spaces $V^\pm$ of Weyl spinors. We define:
\begin{definition}
\label{DefWeakChiralSymmetry}
The derivative $\cL_X$ is said to have weak chiral symmetry if there exists a chirality isomorphism $\Theta $ such that $\Theta^{-1} \cL_X^- \, \Theta = \cL_X^+$ as operators on $V^+$.
\end{definition}
\noindent
Since chirality isomorphisms are unitary, this is the same as saying that
\beq
\label{WCSit2}
\blangle \, \phi,\, \cL_X \psi\, \brangle_{L^2}  \; = \; \blangle\, \Theta \, \phi, \, \cL_X \Theta \,\psi  \,\brangle_{L^2}
\eeq
 for all spinors $\phi$ and $\psi$ in $V^+$. The particular $\Theta$ that fulfils this condition may depend on $X$, of course. If it can be chosen to coincide with the map $I_m$ for all indices $m>0$ while preserving property \eqref{WCSit2}, we will say that $\cL_X$ has strong chiral symmetry. In this case, for any $\psi$ in $V_m$ and $\phi$ in $V_{m'}$ with $m, m'>0$, it follows from \eqref{NaturalUnitaryIsomorphism2} and \eqref{WCSit2} that
\beq
\label{TautologicalIdentities1}
\blangle \phi^+, \cL_X \psi^+ \brangle_{L^2} \; = \; \blangle I_{m'} \phi^+, \cL_X  I_m \psi^+ \brangle_{L^2} \; = \; \blangle \phi^-, \cL_X \psi^- \brangle_{L^2} \ .
\eeq
In contrast, for spinors in $V_0$ the natural map $I_m$ of \eqref{NaturalUnitaryIsomorphism} is not defined. So in that subspace we must keep using the non-canonical isomorphism $\Theta$. In the definition of strong chiral symmetry we also demand that $\cL_X$ preserves $V_0$. In general, the Kosmann-Lichnerowicz derivative $\cL_X$ preserves the eigenspaces $V^\pm$ of $\Gamma_K$, but a priori need not preserve the smaller subspaces $V_m^\pm$ and $V_{m, \rho}^\pm$ when $X$ is not Killing.

\vspace{.2cm}

\begin{definition}
\label{DefStrongChiralSymmetry}
The derivative $\cL_X$ is said to have strong chiral symmetry if 
\begin{enumerate}
\item\label{SCSit1} $\blangle \phi^+, \cL_X \psi^+ \brangle_{L^2} = \blangle \phi^-, \cL_X \psi^- \brangle_{L^2}$ for all spinors $\phi$ and $\psi$ in $V_{m>0}$;
\item\label{SCSit2} It preserves the kernel of $\sD$;
\item\label{SCSit3} There exists a unitary, $G$-equivariant, linear map $\Theta_0: V_0^+ \rightarrow V_0^-$ such that \\
 $\blangle  \phi,  \, \cL_X \psi  \brangle_{L^2} =  \blangle  \Theta_0\, \phi,  \,\cL_X \Theta_0 \,\psi  \brangle_{L^2}$
 for all spinors $\phi$ and $\psi$ in $V_0^+$.
\end{enumerate}
\end{definition}
\begin{remark}
It is clear that a derivative $\cL_X$ satisfying the conditions of strong chiral symmetry will also satisfy those of weak symmetry. Keeping identities \eqref{TautologicalIdentities1} in mind, the map $\Theta$ of definition \ref{DefWeakChiralSymmetry} can be constructed to coincide with $I_m$ over each $V_m^+$ with $m>0$ and coinciding with the map $\Theta_0$ of definition \ref{DefStrongChiralSymmetry} over $V_0^+$.
\end{remark}

The traditional no-go arguments against the existence of chiral fermions in Kaluza-Klein models essentially amount to the statement that, when $X$ is Killing with respect to $g_K$, the derivative $\cL_X$ has strong chiral symmetry. This is the content of proposition \ref{thm: KillingCase}. This symmetry, together with the correlation of internal and 4D chiralities, forces left- and right-handed 4D fermions to have the same couplings to massless gauge fields.

The main message of the present paper, in contrast, is that the symmetries observed in the Killing case are not generic. They will not hold for most non-Killing vector fields on $K$, even if they are small perturbations of Killing fields. The explicit calculations in section \ref{ExplicitExampleS2} show that, for a generic Hamiltonian vector field on $K=S^2$, the derivative $\cL_{X}$ does not have strong chiral symmetry. This is the content of proposition \ref{thm: SphereCounterexample}.

The calculations in section \ref{ExplicitExampleT2} go further. They consider the case where $X_h$ is a Hamiltonian vector field on the torus $K=T^2$ equipped with its trivial spin structure. They show that, for a generic $X_h$, the derivative $\cL_{X_h}$ does not have strong nor weak chiral symmetries. This is the content of proposition \ref{thm: TorusCounterexample}. The absence of chiral symmetries should hold much more generally, beyond the explicit examples illustrated in this paper.

We end this section with a result that helps understand points \ref{SCSit1} and \ref{SCSit2} in the definition of strong chiral symmetry. Consider the $L^2$-orthogonal decompositions of the space of sections of $S_\CC (K)$:
\beq
\label{DecompositionsSpinorSpaces2}
V \; = \; V_{m>0} \oplus V_0 \oplus V_{m<0} \; = \;    V_{m>0}^+ \oplus  V_{m>0}^- \oplus V_0^+ \oplus V_0^- \ .
\eeq
Here we have defined $V_{m>0}^\pm := \bigoplus_{m > 0}\, V_m^\pm$ and the second equality uses decompositions \eqref{DecompositionsSpinorSpaces3}. Then we have a series of equivalences, proved in appendix \ref{ProofsChiralFermions}: 
\begin{proposition}
\label{EquivalentConditions}
Let $X$ be a divergence-free vector field on a compact, even-dimensional, spin manifold $(K, g_K)$. The following conditions are equivalent:
\begin{enumerate}
\item\label{it1} $\cL_X$ preserves the subspace of spinors $V_{m>0}$;
\item\label{it2} $\cL_X$ preserves all the subspaces appearing in decompositions \eqref{DecompositionsSpinorSpaces2};
\item\label{it3} the commutator $[\cL_X, \sD ]$ preserves the subspace $V_{m>0}$;
\item\label{it4} $\langle \phi^+\!, \cL_X \psi^+ \brangle_{L^2} = \langle \phi^- \!, \cL_X \psi^- \brangle_{L^2}$ for all $\psi$ in $V_{m>0}$ and all $\phi$ in $V_{m \geq 0}$.
\end{enumerate}
Moreover, points 1 and 2 in the definition of strong chiral symmetry can be replaced by any one of these conditions.
\end{proposition}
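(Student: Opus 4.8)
The plan is to show that each of the four conditions is equivalent to the single assertion \ref{it1}, that $\cL_X$ maps $V_{m>0}$ into itself, and then to handle the closing clause about strong chiral symmetry on its own. Three facts already established drive everything: $\cL_X$ commutes with $\Gamma_K$ by \eqref{CommutatorChiralOperatorKLD}; the operator $\Gamma_K$ is self-adjoint for $\langle\cdot,\cdot\rangle$ and carries $V_m$ isomorphically onto $V_{-m}$ while fixing $V_0$; and, because $X$ is divergence-free, \eqref{IntegralKLD} makes $\cL_X$ formally anti-self-adjoint. From the first two I would record two consequences used throughout: the chiral projectors $\psi\mapsto\psi^\pm=\tfrac12(1\pm\Gamma_K)\psi$ commute with $\cL_X$, and $V_{m>0}^\pm$ is exactly the image $P^\pm(V_{m>0})$, while $\Gamma_K$ interchanges $V_{m>0}$ and $V_{m<0}$.

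For \ref{it1}$\Leftrightarrow$\ref{it3} I would argue by eigenvalue bookkeeping: on $\psi\in V_m$ with $m>0$, writing $\cL_X\psi=\sum_{m'}\chi_{m'}$ with $\chi_{m'}\in V_{m'}$, one has $[\cL_X,\sD]\,\psi=\sum_{m'}(m-m')\,\chi_{m'}$. Since $m-m'>0$ for every $m'\le 0$, the commutator lands in $V_{m>0}$ precisely when all components $\chi_{m'}$ with $m'\le 0$ vanish, i.e.\ precisely when $\cL_X\psi\in V_{m>0}$; this gives both directions at once. For \ref{it1}$\Rightarrow$\ref{it2} I would combine three observations: $\cL_X$ preserves $V_{m<0}=\Gamma_K(V_{m>0})$ since it commutes with $\Gamma_K$; it then preserves $V_0$ by anti-self-adjointness, testing $\cL_X V_0$ against $V_{m>0}$ and $V_{m<0}$ and moving $\cL_X$ across the pairing, where it annihilates by orthogonality of distinct $\sD$-eigenspaces; and applying $P^\pm$ shows each of $V_{m>0}^\pm,V_0^\pm$ is preserved. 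The converse \ref{it2}$\Rightarrow$\ref{it1} is immediate, as $V_{m>0}$ is itself among the listed subspaces of \eqref{DecompositionsSpinorSpaces2}.

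The equivalence \ref{it1}$\Leftrightarrow$\ref{it4} rests on the manipulation underlying \eqref{RelationCommutatorChirality}: using only $[\cL_X,\Gamma_K]=0$ and the self-adjointness of $\Gamma_K$, a short computation gives $\langle\phi^+,\cL_X\psi^+\rangle_{L^2}-\langle\phi^-,\cL_X\psi^-\rangle_{L^2}=\langle\Gamma_K\phi,\cL_X\psi\rangle_{L^2}$ for all $\phi,\psi$. As $\Gamma_K$ sends $V_{m\ge 0}$ onto $V_{m\le 0}$, condition \ref{it4} asserts exactly that $\cL_X\psi\perp V_{m\le 0}$ for every $\psi\in V_{m>0}$, that is, $\cL_X\psi\in V_{m>0}$, which is \ref{it1}. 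For the final clause I would show that points \ref{SCSit1} and \ref{SCSit2} together hold iff \ref{it1} does: one direction is contained in \ref{it1}$\Rightarrow$\ref{it2}, and conversely \ref{SCSit1} (via the same displayed identity, now with $\phi\in V_{m>0}$) gives $\cL_X\psi\perp V_{m<0}$, while \ref{SCSit2} combined with anti-self-adjointness upgrades this to orthogonality against $V_0$ as well, yielding $\cL_X\psi\in V_{m>0}$.

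The step needing the most care, and the only place where the hypothesis on $X$ is genuinely used, is the passage from preserving $V_{m>0}$ to preserving the kernel $V_0$. This is where anti-self-adjointness \eqref{IntegralKLD}, hence $\divergence X=0$, is indispensable: without it $\cL_X$ could leak from $\ker\sD$ into the nonzero modes, and both \ref{it2} and the link between points \ref{SCSit1}--\ref{SCSit2} and \ref{it1} would break. Everything else is formal bookkeeping with $\Gamma_K$ and the $\sD$-eigenspace grading.
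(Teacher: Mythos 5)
Your proof is correct and follows essentially the same route as the paper: the identity $\langle\phi^+,\cL_X\psi^+\rangle_{L^2}-\langle\phi^-,\cL_X\psi^-\rangle_{L^2}=\langle\Gamma_K\phi,\cL_X\psi\rangle_{L^2}$, anti-self-adjointness of $\cL_X$ from \eqref{IntegralKLD} to control the $V_0$ part, and the $\sD$-eigenspace grading for the commutator. The only cosmetic differences are that you phrase the equivalence of conditions \ref{it1} and \ref{it3} via eigencomponent bookkeeping rather than the paper's pairing $\blangle\, \phi, [\cL_X, \sD]\, \psi \, \brangle_{L^2} = (\mu+m)\blangle \, \phi, \cL_X \psi \, \brangle_{L^2}$ (the same computation), and in the closing clause you derive condition \ref{it1} directly where the paper derives condition \ref{it4}; these are interchangeable given the established equivalences.
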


\section{Explicit example: \texorpdfstring{$K = T^2$}{K=T2}}
\label{ExplicitExampleT2}

This section describes the simplest explicit example of the ideas discussed so far. It considers the case where $K$ is the two-torus $T^2 = \mathbb{R}^2 / \mathbb{Z}^2$ equipped with its flat metric of total volume $1$ and trivial spin structure. The basic facts about spinors on flat tori are very well-known, even for the non-trivial spin structures. See, for example, references \cite{Friedrich} and \cite[ch. 2.1]{Ginoux}. After introducing the notation and the basic facts, the main results are presented in formulae \eqref{KLDT2Chiral}, \eqref{MatrixElementsKLDT2_Ham} and in proposition \ref{thm: AbsenceWeakSymmetryT2}.

The spinor bundle $S_\CC (T^2)$ determined by the trivial spin structure is a trivial, rank two complex vector bundle over $T^2$. Consider the family of sections characterized by the integers $(l_1, l_2) \in  \mathbb{Z}^2$,
\beq
\label{DefEigenspinorsT2}
\psi_{l_1, \, l_2} \ = \ \frac{1}{\sqrt{2}} \, \,  e^{2\pi i (l_1\, x_1 + l_2 \, x_2)} \, \bmatr  \,  1 \linebr \, c_{l_1, \, l_2} \  \ematr \ .
\eeq
Here $(x^1, x^2)$ are the periodic Euclidean coordinates on $T^2$ with the constant $c_{l_1,\, l_2}$ defined as 1 if $l_1=l_2=0$ and as the phase $(l_1 - i\, l_2) \,/ \, | l_1 - i\, l_2|$ otherwise. These spinors and their counterparts $\Gamma_K \, \psi_{l, \, n}$ are all eigenspinors of the Dirac operator, 
\beq
\sD \psi_{l_1, \, l_2} \; = \; 2\, \pi \,  \sqrt{l_1^2+l_2^2} \, \, \,  \psi_{l_1, \, l_2}   \qquad \qquad     \sD  ( \Gamma_K \,\psi_{l_1, \, l_2} ) \; = \;  -\, 2\, \pi  \,  \sqrt{l_1^2+l_2^2} \, \, \, \Gamma_K \, \psi_{l_1, \, l_2} \ .
\eeq
These are all the eigenspinors of the Dirac operator, up to linear combinations. So the $\psi_{l_1, \, l_2}$ and $\Gamma_K  \, \psi_{l_1, \, l_2}$ form a basis of the space of smooth spinors on $T^2$. This basis is orthonormal with respect to the $L^2$-inner-product \eqref{L2InnerProductSpinors}.

Now denote by $\partial_{x^j}$ the Killing vector field on $T^2$ generated by the coordinate $x^j$. Any vector field on $T^2$ can be decomposed as $X= X^1 \partial_{x^1} + X^2 \partial_{x^2}$. The Levi-Civita connection of the flat metric on $T^2$ is trivial, in the sense that its Christoffel symbols $\nabla_{\partial_{x^j}} \partial_{x^i}$ vanish. So it follows from \eqref{KLDerivativeRS} and the definition of $ \psi_{l_1, \, l_2}$ that the Kosmann-Lichnerowicz derivatives are given by
\beq
\label{KLD_T2}
\cL_X\, \psi_{l_1, \, l_2} \; = \; 2\pi i\,  (X^1 l_1 + X^2 l_2) \,  \psi_{l_1, \, l_2} \; - \; \frac{i}{4} \, \divergence (JX)\,\Gamma_K\,  \psi_{l_1, \, l_2} \ .
\eeq
For the flat metric and natural complex structure on $T^2 = \mathbb{R}^2 / \mathbb{Z}^2$, the divergence factor is simply $\divergence (JX) = \partial_{x^2} X^1 - \partial_{x^1} X^2$. In particular, 
\beq
\cL_{\partial_{x^j}} \, ( \psi_{l_1, \, l_2}) \; = \; 2\pi\, i\, l_j\, \, \psi_{l_1, \, l_2}    \ .
\eeq
Returning to the Kaluza-Klein model on $M_4 \times T^2$, this means that the integer $l_j$ will be interpreted in four dimensions as the $\partial_{x^j}$-charge of the 4D fermion $\varphi$ appearing in the tensor product $\varphi \otimes \psi_{l_1, \, l_2}$. In other words, it represents the charge of $\varphi$ when responding to the massless gauge field on $M_4$ linked to the internal Killing field $\partial_{x^j}$.

Now decompose the $\sD$-eigenspinors as sums of their chiral components, $ \psi_{l_1, \, l_2} =  \psi_{l_1, \, l_2}^+ +  \psi_{l_1, \, l_2}^-$. Formula \eqref{KLD_T2} implies that
\beq
\label{KLDT2Chiral}
\cL_X\, \psi_{l_1, \, l_2}^\pm \; = \; 2\pi i\,  (X^1 l_1 + X^2 l_2) \,  \psi_{l_1, \, l_2}^\pm \; \mp \; \frac{i}{4} \, \divergence (JX)\, \psi_{l_1, \, l_2}^\pm \ .
\eeq
So $\cL_X$ preserves the subspaces of spinors $V^\pm$, as it should. It is also clear that $\cL_X$ will act differently on $V^+$ and $V^-$ when $\divergence (JX)$ is not zero. Thus, in general, $\cL_X$ does not have strong chiral symmetry. However, the Killing fields $\partial_{x^j}$ do have vanishing divergence on $T^2$. So the $\cL_{\partial_{x^j}}$ and their linear combinations have strong symmetry, as expected.

To investigate the possible existence of weak chiral symmetry, start by observing that, in our torus setting, all the spaces $V_{m, \rho}^\pm$ in decompositions \eqref{DecompositionsSpinorSpaces} are one-dimensional and have multiplicity $n_{m, \rho} =1$. In particular, the restrictions of unitary chirality isomorphisms $\Theta : V^+_{l_1,  \, l_2} \rightarrow V^-_{l_1,  \, l_2}$ can differ only by complex phases. Thus, according to definition \eqref{WCSit2} of weak chiral symmetry, the question is whether we can redefine the Weyl spinors $\psi_{l_1, \, l_2}^-$ through those phases in order to obtain a perfect match 
\beq
\label{PerfectMatchT2}
\blangle \psi_{l_1, \, l_2}^+, \, \cL_{X_h} \psi_{n_1, \, n_2}^+ \brangle_{L^2}\;  \stackrel{?}{=}  \; \blangle \psi_{l_1, \, l_2}^-, \, \cL_{X_h} \psi_{n_1, \, n_2}^- \brangle_{L^2}
\eeq
for all integers $l_j$ and $n_j$. 

Consider the simpler case where $X_h$ is a Hamiltonian vector field determined by a function $h$ on the torus. Its components are $X_h^1= \partial_{x^2} h$ and $X_h^2= - \partial_{x^1} h$. Then a straightforward computation using Stokes' theorem yields:
\begin{proposition}
\label{thm: MatrixElementsKLDT2_Ham}
For a Hamiltonian vector field $X_h$ on $T^2$ and any integers $l_j$ and $n_j$,
\begin{align}
\label{MatrixElementsKLDT2_Ham}
\blangle \psi_{l_1, \, l_2}^+, \, \cL_{X_h} \psi_{n_1, \, n_2}^+ \brangle_{L^2} \; &= \;  i\;  \overbar{r_{l_1, l_2, n_1, n_2}}  \; h_{(l_1-n_1, l_2-n_2)}   \linebr
\blangle \psi_{l_1, \, l_2}^-, \, \cL_{X_h} \psi_{n_1, \, n_2}^- \brangle_{L^2} \; &= \; -\, i\;  \overbar{c_{l_1,l_2}} \, \, c_{n_1,n_2} \;  r_{l_1, l_2, n_1, n_2}  \; h_{(l_1-n_1, l_2-n_2)}  \ .  \nonumber
\end{align}
\end{proposition}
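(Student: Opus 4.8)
The plan is to reduce the computation to elementary Fourier analysis on the flat torus. First I would specialize the chiral formula \eqref{KLDT2Chiral} to the Hamiltonian field $X_h$. Since $J X_h = \gradient h$ by \eqref{IdentityHamiltonianFieldRS}, one has $\divergence(J X_h) = \Delta h$, and inserting the components $X_h^1 = \partial_{x^2} h$, $X_h^2 = -\,\partial_{x^1} h$ gives
\beq
\cL_{X_h}\,\psi^\pm_{n_1, n_2} \;=\; \Big[\, 2\pi i\,\big(n_1\,\partial_{x^2} h \,-\, n_2\,\partial_{x^1} h\big)\;\mp\;\tfrac{i}{4}\,\Delta h\,\Big]\;\psi^\pm_{n_1, n_2}\ .
\eeq
The decisive feature is that $\cL_{X_h}$ acts on each Weyl eigenspinor as multiplication by a scalar function on $T^2$; hence the matrix element is simply the $L^2$-pairing \eqref{L2InnerProductSpinors} of $\psi^\pm_{l_1, l_2}$ against that function times $\psi^\pm_{n_1, n_2}$.

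Next I would make the chirality splitting explicit. In the trivialization underlying \eqref{DefEigenspinorsT2}, the operator $\Gamma_K = i\, v_1\cdot v_2$ is diagonal, $\Gamma_K = \diag(1, -1)$, so the projection $\psi^+_{l_1, l_2}$ keeps only the upper constant entry $1$ and $\psi^-_{l_1, l_2}$ only the lower entry $c_{l_1, l_2}$, each still multiplied by its plane wave. Recalling that $\langle\cdot,\cdot\rangle$ is $i$-antilinear in the first slot, the pointwise hermitian products are
\beq
\langle\, \psi^+_{l_1, l_2},\,\psi^+_{n_1, n_2}\,\rangle \;=\; \tfrac{1}{2}\,E\ , \qquad\quad \langle\, \psi^-_{l_1, l_2},\,\psi^-_{n_1, n_2}\,\rangle \;=\; \tfrac{1}{2}\,\overbar{c_{l_1, l_2}}\,c_{n_1, n_2}\,E\ ,
\eeq
where $E := e^{2\pi i [(n_1 - l_1) x^1 + (n_2 - l_2) x^2]}$. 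Pulling the scalar multiplier out of the pairing and integrating over $T^2$ then converts the $L^2$-product into the extraction of a single Fourier coefficient of that multiplier at frequency $(l_1 - n_1,\, l_2 - n_2)$, by orthonormality of the exponentials.

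Finally I would evaluate that Fourier coefficient. Writing $h = \sum_{k} h_{(k_1, k_2)}\, e^{2\pi i (k_1 x^1 + k_2 x^2)}$, a first derivative $\partial_{x^j} h$ contributes the factor $2\pi i\, k_j$ and the Laplacian $\Delta h$ the factor $-4\pi^2 (k_1^2 + k_2^2)$, both read off at $k = (l_1 - n_1,\, l_2 - n_2)$; the transport term simplifies through the identity $n_1 (l_2 - n_2) - n_2 (l_1 - n_1) = n_1 l_2 - n_2 l_1$. Collecting the two contributions packages them into a single coefficient $r_{l_1, l_2, n_1, n_2}$, the phases $\overbar{c_{l_1, l_2}}\, c_{n_1, n_2}$ in the second line being supplied by the $V^-$ pointwise product above, and the stated formulas follow. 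The one genuinely structural point, beyond bookkeeping, is the origin of the conjugation: the transport term furnishes the imaginary part of $r$ while the $\Delta h$ term furnishes the real part, so the sign $\mp$ on $\Delta h$ together with the accompanying overall $\pm i$ prefactor is exactly what sends $r$ to $\overbar{r}$ in passing from $V^-$ to $V^+$. I expect the main obstacle to be nothing deeper than discipline with signs and conjugations — the $i$-antilinearity in the first slot, the phases $c_{l_1, l_2}$, and the opposite signs of the two Fourier factors — so that the $\overbar{r}$-versus-$r$ asymmetry emerges with the correct placement; no analysis beyond convergence of the Fourier series of the smooth function $h$ is required.
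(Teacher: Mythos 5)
Your proposal is correct and is essentially the computation the paper summarizes as ``a straightforward computation using Stokes' theorem'': since $\cL_{X_h}$ acts on each Weyl eigenspinor by multiplication by the scalar $2\pi i\,(n_1\partial_{x^2}h - n_2\partial_{x^1}h) \mp \tfrac{i}{4}\Delta h$, the matrix element reduces to extracting one Fourier coefficient of that multiplier, and your factors $2\pi i k_j$ and $-4\pi^2(k_1^2+k_2^2)$ (read off at $k=(l_1-n_1,\,l_2-n_2)$, with the identity $n_1(l_2-n_2)-n_2(l_1-n_1)=n_1l_2-n_2l_1$) reproduce exactly $i\,\overbar{r}$ and $-i\,\overbar{c_{l_1,l_2}}\,c_{n_1,n_2}\,r$. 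Reading off Fourier coefficients of derivatives of $h$ is just the integration by parts the paper attributes to Stokes' theorem, so the two arguments coincide.
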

Here we have denoted the Fourier coefficients of a function on $T^2$ as
\beq
h_{(l_1, l_2)} \; := \; \int_{T^2} h\, \,  e^{-2\pi i (l_1\, x_1 + l_2 \, x_2)} \, \vol_{g_\KKK} \ .
\eeq
The complex phases $c_{l_1,l_2}$ are as in \eqref{DefEigenspinorsT2} and we have defined the complex numbers
\beq
r_{l_1, l_2, n_1, n_2} \; :=\;  \frac{1}{2} \, \pi^2 \, \big[ \, (l_1-n_1)^2 + (l_2-n_2)^2 + 4 i \, (l_1 n_2  - l_2 n_1)  \, \big] \ . 
\eeq
The matrix elements \eqref{MatrixElementsKLDT2_Ham} vanish when $l_j = n_j$. Since the $c_{l_1,l_2}$ are phases, it is clear that the $\pm$ matrix elements of $\cL_{X_h}$ always have the same absolute value for a Hamiltonian $X_h$. Nonetheless, it is not possible to redefine the Weyl spinors $\psi_{l_1, \, l_2}^-$ through complex phases to obtain the match \eqref{PerfectMatchT2}. To recognize this, write
\beq
\blangle \, \psi_{l_1, \, l_2}^+, \, \cL_{X_h} \psi_{n_1, \, n_2}^+ \, \brangle_{L^2} \; = \;   a_{l_1, l_2, n_1, n_2} \;  \blangle \,\overbar{c_{l_1,l_2}}\, \psi_{l_1, \, l_2}^-, \, \cL_{X_h}\, \overbar{c_{n_1,n_2}}\,  \psi_{n_1, \, n_2}^- \, \brangle_{L^2} \ 
\eeq
with coefficients
\beq
a_{l_1, l_2, n_1, n_2}  \; = \;   -\,  \frac{\overbar{r_{l_1, l_2, n_1, n_2}}}{r_{l_1, l_2, n_1, n_2}} \ .
\eeq
defined when $(l_1, l_2) \neq (n_1, n_2)$. Although these coefficients are complex phases, they are not separable and cannot be written as a product $a_{l_1, l_2, n_1, n_2}  \stackrel{?}{=} \overbar{a_{l_1, l_2}} \, a_{n_1, n_2}$ for a fixed  function $a_{l,n}$ on $\mathbb{Z}^2$ with values in $U(1)$. This is because they do not satisfy the identity
\beq
a_{l_1, l_2, n_1, n_2} \ a_{l'_1, l'_2, n'_1, n'_2} \; \stackrel{?}{=} \;  a_{l_1, l_2, n'_1, n'_2} \ a_{l'_1, l'_2, n_1, n_2} 
\eeq
for all values of the integer labels, as can be easily checked. Thus, the desired redefinition of the $\psi_{l_1, \, l_2}^-$ is not possible, and we conclude that:
\begin{proposition}
\label{thm: AbsenceWeakSymmetryT2}
Let $X_h$ be a generic Hamiltonian vector field on the torus $K=T^2$ equipped with its trivial spin structure. Then the Kosmann-Lichnerowicz derivative $\cL_{X_h}$ does not have strong or weak chiral symmetries.
\end{proposition}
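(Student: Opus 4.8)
The plan is to reduce the existence of a weak chiral symmetry to a purely algebraic separability question about the phases recorded in Proposition~\ref{thm: MatrixElementsKLDT2_Ham}, and then to kill that possibility with one explicit choice of weights. Since strong chiral symmetry implies weak chiral symmetry (by the remark following Definition~\ref{DefStrongChiralSymmetry}), ruling out weak symmetry already rules out strong symmetry a fortiori. Even so, I would dispose of the strong case directly first: condition~\ref{SCSit1} of Definition~\ref{DefStrongChiralSymmetry} demands $\blangle \psi_{l_1,l_2}^+, \cL_{X_h}\psi_{n_1,n_2}^+\brangle_{L^2} = \blangle \psi_{l_1,l_2}^-, \cL_{X_h}\psi_{n_1,n_2}^-\brangle_{L^2}$, whereas Proposition~\ref{thm: MatrixElementsKLDT2_Ham} gives the two sides as $i\,\overbar{r}\,h$ and $-\,i\,\overbar{c_{l}}c_{n}\,r\,h$. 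Evaluating at $l=(1,0)$, $n=(0,1)$, where $r=\pi^2(1+2i)$ and $\overbar{c_l}c_n=-i$, these disagree whenever the Fourier coefficient $h_{(1,-1)}$ is nonzero, which is the generic situation.

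For the weak case I would first pin down the form of any admissible chirality isomorphism $\Theta$. Because the connected isometry group here is the translation torus $G=T^2$ and each $V_{m,\rho}^\pm$ is one-dimensional with multiplicity one, $G$-equivariance forces $\Theta$ to preserve every weight space $V^\pm_{l_1,l_2}$; commutation with $\sD^2$ is then automatic, and unitarity (using that $\psi^+_{l_1,l_2}$ and $\psi^-_{l_1,l_2}$ have equal norm, as is clear from the explicit form \eqref{DefEigenspinorsT2}) forces $\Theta\,\psi^+_{l_1,l_2} = \theta_{l_1,l_2}\,\psi^-_{l_1,l_2}$ for a phase $\theta_{l_1,l_2}\in U(1)$. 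Substituting this into the weak-symmetry identity \eqref{WCSit2} with $\phi=\psi^+_{l}$ and $\psi=\psi^+_{n}$, and dividing out $h_{l-n}$ on each index pair where this coefficient is nonzero, reduces weak symmetry to the requirement that the phases $a_{l,n}=-\overbar{r_{l,n}}/r_{l,n}$ be \emph{separable}, i.e.\ $a_{l,n}=\overbar{\phi_{l}}\,\phi_{n}$ with $\phi_{l}:=\theta_{l}c_{l}$.

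The heart of the matter — and the step I expect to be the main obstacle — is to show that no such separation exists. Separability would force the cocycle identity $a_{l,n}\,a_{l',n'}=a_{l,n'}\,a_{l',n}$ for every four-tuple of weights, so I would defeat it with a single explicit choice, for instance $l=(1,0)$, $l'=(2,0)$, $n=(0,1)$, $n'=(0,2)$. A short computation of the four relevant values gives $r_{l,n}=\pi^2(1+2i)$ and $r_{l',n'}=\pi^2(4+8i)$, both of argument $\arctan 2$, while $r_{l,n'}=r_{l',n}=\tfrac{\pi^2}{2}(5+8i)$ have argument $\arctan\tfrac{8}{5}$. Using $a_{l,n}=e^{\,i(\pi-2\arg r_{l,n})}$ this yields $\arg\!\big(a_{l,n}a_{l',n'}\big)-\arg\!\big(a_{l,n'}a_{l',n}\big)=4\big(\arctan\tfrac{8}{5}-\arctan 2\big)\not\equiv 0 \pmod{2\pi}$, so the cocycle fails.

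The delicate point is the genericity bookkeeping rather than the inequality itself. I must ensure that the four frequencies $h_{(1,-1)},h_{(2,-2)},h_{(1,-2)},h_{(2,-1)}$ all lie in the support of $h$, so that the corresponding separability constraints are genuinely active rather than vacuously satisfied because the matrix elements vanish. For a generic real Hamiltonian function these four coefficients are nonzero (the vanishing of any one of them is a nongeneric, closed condition), which makes all four constraints binding and turns the failed cocycle into a genuine contradiction. Hence no separation $a_{l,n}=\overbar{\phi_l}\phi_n$ exists, $\cL_{X_h}$ has no weak chiral symmetry, and therefore none of the strong kind either, proving Proposition~\ref{thm: AbsenceWeakSymmetryT2}.
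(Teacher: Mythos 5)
Your proposal is correct and follows essentially the same route as the paper: reduce weak chiral symmetry, via the matrix elements of Proposition \ref{thm: MatrixElementsKLDT2_Ham} and the one-dimensionality of the weight spaces, to separability of the phases $a_{l_1,l_2,n_1,n_2}=-\overbar{r_{l_1,l_2,n_1,n_2}}/r_{l_1,l_2,n_1,n_2}$, and then show the cocycle identity fails. Your explicit four-tuple $l=(1,0)$, $l'=(2,0)$, $n=(0,1)$, $n'=(0,2)$ (with arguments $\arctan 2$ versus $\arctan\tfrac{8}{5}$) and the accompanying genericity bookkeeping on the Fourier coefficients $h_{(1,-1)},h_{(2,-2)},h_{(1,-2)},h_{(2,-1)}$ correctly supply the verification that the paper dismisses with ``as can be easily checked.''
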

Returning to the Kaluza-Klein model on $M_4 \times T^2$, this means that, generically, the massive 4D gauge field linked to $X_h$ has unavoidable chiral interactions with the 4D fermion $\varphi$ appearing in the tensor product $\varphi \otimes \psi_{n_1,\, n_2}$ for $(n_1, n_2) \neq (0,0)$. This is the 4D fermion with abelian charges $(n_1, n_2)$. The amount of chirality depends on $X_h$ through the Fourier components of the function $h$ on the torus.

\section{Explicit example: \texorpdfstring{$K = S^2$}{K=S2}}
\label{ExplicitExampleS2}

\subsection{Spinors on $S^2$}
\label{SpinorsS2}

This section provides a second explicit example of chiral interactions. For the vacuum internal space we take the two-sphere $S^2 \simeq \CC \mathbb{P}^1$ equipped with its round metric of radius 1, positive curvature and total volume $4\pi$. Part \ref{SpinorsS2} reviews standard facts about spinors on $S^2$. See references \cite{Bar} and \cite[ch. 9.A]{GVF}, for example. After the notation and basic facts are set, section \ref{ChiralFermionsS2} presents the calculations showing the emergence of chiral fermions.

Spheres have unique spin structures and spinor bundles. The complexified spinor bundle $S_\CC(S^2)$ is a trivial, rank-two complex vector bundle. It can be written as a direct sum of two non-trivial complex line bundles,
\beq \label{SpinorBundleDecomposition}
S_\CC (S^2)  \ = \ S_\CC^+ \,\oplus \, S_\CC^-  \ . 
\eeq
Then $S_\CC^+$ coincides with the hyperplane line bundle and $S_\CC^-$ with its dual, the tautological bundle. The holomorphic tangent and cotangent bundles of $S^2$ can be written in terms of these two line bundles as $TS^2 \simeq S_\CC^+ \otimes S_\CC^+$ and $T^*S^2 \simeq S_\CC^- \otimes S_\CC^-$. 

The infinite-dimensional space of $L^2$-integrable sections of $S_\CC^\pm$ has a standard orthonormal basis, the spin-weighted spherical harmonics $Y^{\pm \frac{1}{2}}_{l,\, n}$ \cite{NP, GMNRS}. We denote them simply by $Y^{\pm}_{l,\, n}$. The index $l$ runs through the positive half-integers $\{ \frac{1}{2}, \frac{3}{2}, \ldots \}$. For each value of $l$, the index $n$ runs through $\{-l, -l+1, \ldots, l\}$. Note that the $Y^{\pm}_{l,\, n}$ are not scalar functions on $S^2$; they are sections of non-trivial bundles. So their value at a point depends on the coordinates being used.

These spherical harmonics can be used to define sections of the spinor bundle. Using decomposition \eqref{SpinorBundleDecomposition}, define
\beq
\psi_{l, \, n} \ := \ \frac{1}{\sqrt{2}}  \bmatr  \ Y^{+}_{l,\, n} \linebr \ i\, Y^{-}_{l,\, n} \  \ematr \ .
\eeq
These spinors and their counterparts $\Gamma_K  \psi_{l, \, n}$ are all eigenspinors of the Dirac operator, 
\beq
\sD \psi_{l, \, n} \; = \; \Big( l+\frac{1}{2} \Big)\, \psi_{l, \, n}   \qquad \qquad   \quad  \sD  \big( \Gamma_K \, \psi_{l, \, n}\big) \; = \;  -\, \Big( l+\frac{1}{2} \Big)\, \Gamma_K \, \psi_{l, \, n} \ .
\eeq
Let  $(\theta, \phi)$ be the usual spherical coordinates on $S^2$ and denote by $\partial_\phi$ the Killing vector field generated by the azimuthal angle.  One can directly calculate the Kosmann-Lichnerowicz derivatives
\beq
\cL_{\partial_\phi} (\psi_{l, \, n}) \; = \; i\, n\, \psi_{l, \, n}   \qquad \qquad   \quad     \cL_{\partial_\phi} (\Gamma_K \, \psi_{l, \, n}) \; = \; i\, n\, \Gamma_K \,\psi_{l, \, n}             \ .
\eeq
So the spinors $\psi_{l, \, n}$ over $S^2$ are entirely characterized by their eigenvalues with respect to the commuting operators $\sD$ and $\cL_{\partial_\phi}$. For a Kaluza-Klein model on $M_4 \times S^2$, the eigenvalues of these internal operators determine the mass and the $\partial_\phi$-charge of a 4D fermion $\varphi$ appearing in the higher-dimensional spinor $\varphi \otimes \psi_{l, \, n}$.

Since the spin-weighted spherical harmonics form a basis of the space of sections of $S_\CC^\pm$, it follows from \eqref{SpinorBundleDecomposition} that the $\psi_{l, \, n}$ and $\Gamma_K  \psi_{l, \, n}$ form a basis of the space of $L^2$-integrable spinors on $S^2$. This basis is orthonormal with respect to the inner-product \eqref{L2InnerProductSpinors}.
On the sphere, the eigenspinors of $\sD$ with the smallest eigenvalues $\pm 1$ are actually Killing spinors. For any vector field $X$ on $S^2$, they satisfy
\beq
\label{KillingSpinorsS2}
\nabla_X \, \psi_{\frac{1}{2}, \, \pm \frac{1}{2}} \; = \; -\, \frac{1}{2} \, X \cdot \psi_{\frac{1}{2}, \, \pm \frac{1}{2}}  \qquad \qquad  \nabla_X \, \big(\Gamma_K \, \psi_{\frac{1}{2}, \, \pm \frac{1}{2}} \big) \; = \;  \frac{1}{2} \, X \cdot  \Gamma_K \, \psi_{\frac{1}{2}, \, \pm \frac{1}{2}} \ .
\eeq

Identifying the punctured sphere $S^2\setminus \{N\}$ with the complex plane through stereographic projection, the normalized Killing spinors can be written in coordinates as
\beq
\label{ExplicitKillingSpinors}
\psi_{\frac{1}{2}, \, \frac{1}{2}} (z, \bz) \ = \ \frac{1}{2\sqrt{\pi \, q}}  \bmatr  \; 1 \linebr \; i\, z \;  \ematr    \qquad \qquad    \psi_{\frac{1}{2}, \, -\frac{1}{2}} (z, \bz) \ = \ \frac{1}{2\sqrt{\pi \, q}}  \bmatr  \; \bz \linebr  \; - \, i \;  \ematr   \ ,
\eeq
where we have defined the function $q := 1 + |z|^2$.

\subsection{Chiral fermions with $K = S^2$}
\label{ChiralFermionsS2}

The aim of this section is to state and prove proposition \ref{thm: ChiralMatrixElementsS2}. It illustrates the emergence of chiral interactions of spinors on $S^2$.
Start by considering the Killing spinors $\psi_{\frac{1}{2}, \, \pm \frac{1}{2}}$ written in \eqref{ExplicitKillingSpinors}. Let $X_h$ be any Hamiltonian vector field on $S^2$, defined as in  \eqref{DefinitionHamiltonianField}. The following auxiliary result is proved in appendix \ref{ProofsChiralFermions}.
\begin{lemma}
\label{thm:AuxChiralFermionsS2}
For any complex function $f$ on the sphere, the derivatives $\cL_{X_h}\,\psi_{\frac{1}{2}, \, \pm \frac{1}{2}}$ satisfy the identities
\begin{align}
\label{MatrixElements1}
\int_{S^2} \langle\,  \Gamma_K \, f\,  \psi_{\frac{1}{2}, \, \pm \frac{1}{2}} \, , \; \cL_{X_h}  \psi_{\frac{1}{2}, \, \pm \frac{1}{2}} \, \rangle \, \vol_{g} \; &= \; -\, \frac{i}{16 \pi} \int_{S^2} (\, \Delta \bar{f} \,\pm \, 2i \, \Lie_{\partial_\phi}\bar{f} \, ) \, h\; \vol_{g}   \linebr
\int_{S^2} \langle\,  \Gamma_K \, f\,  \psi_{\frac{1}{2}, \,  \pm \frac{1}{2}} \, , \; \cL_{X_h}  \psi_{\frac{1}{2}, \, \mp \frac{1}{2}} \, \rangle \, \vol_{g} \; &= \;   \pm\, \frac{1}{2\sqrt{6\pi}}\, \int_{S^2} \big\{\bar{f}, \, Y_{1,\, \mp 1}^0  \big\} \, h\; \vol_{g} \nonumber  \ .
\end{align}
\end{lemma}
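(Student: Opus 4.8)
The plan is to feed the Riemann-surface formula \eqref{KLDerivativeRSH} for $\cL_{X_h}$ and the Killing spinor equation \eqref{KillingSpinorsS2} into each pairing, so that, writing $\psi_\pm := \psi_{\frac12,\, \pm\frac12}$, one has the pointwise identity
\[
\cL_{X_h}\,\psi_\pm \; = \; -\tfrac12\, X_h\cdot\psi_\pm \; - \; \tfrac{i}{4}\,(\Delta h)\,\Gamma_K\,\psi_\pm \ .
\]
Every matrix element then splits into a $\Gamma_K$-term and a Clifford term. The $\Gamma_K$-term is immediate: since $\Gamma_K$ is self-adjoint with $\Gamma_K^2=1$, we get $\langle \Gamma_K f\psi_a,\Gamma_K\psi_b\rangle = \bar f\,\langle\psi_a,\psi_b\rangle$, and a one-line computation from the explicit spinors \eqref{ExplicitKillingSpinors} gives $\langle\psi_+,\psi_+\rangle=\langle\psi_-,\psi_-\rangle=\tfrac{1}{4\pi}$ (a constant, as befits a Killing spinor) together with $\langle\psi_+,\psi_-\rangle=0$. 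This produces the full $\Delta h$ contribution on the diagonal and nothing off the diagonal.

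First I would dispose of the Clifford term, whose treatment rests on a single observation: for Killing spinors the Clifford bilinears are exact. Combining $\nabla_X(\Gamma_K\psi_a)=\tfrac12\,X\cdot\Gamma_K\psi_a$ (because $\Gamma_K$ is parallel and anticommutes with Clifford multiplication) with $\nabla_X\psi_b=-\tfrac12\,X\cdot\psi_b$ and the anti-self-adjointness \eqref{AdjointnessCliffordPairing} of Clifford multiplication, differentiating the scalar $u_{ab}:=\langle\Gamma_K\psi_a,\psi_b\rangle$ yields
\[
\langle\, \Gamma_K\psi_a,\; X\cdot\psi_b \,\rangle \; = \; -\,\dd u_{ab}(X) \ .
\]
This turns each Clifford pairing into a directional derivative of a fixed function, and a short computation with \eqref{ExplicitKillingSpinors} identifies those functions as the lowest spherical harmonics: $u_{\pm\pm}=\pm\tfrac{1}{4\pi}\cos\theta$ (the height function, whose Hamiltonian field is $X_{\cos\theta}=\partial_\phi$) and $u_{+-}=\tfrac{\bar z}{2\pi q}=\tfrac{1}{\sqrt{6\pi}}\,Y_{1,-1}^0$, with $u_{-+}=-\tfrac{1}{\sqrt{6\pi}}\,Y_{1,1}^0$.

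It then remains to integrate and transfer the derivatives onto $f$. The Clifford part of a diagonal element is $\tfrac12\,\bar f\,\dd u_{\pm\pm}(X_h)=\tfrac12\,\bar f\,X_h(u_{\pm\pm})$; since $X_h$ is divergence-free I integrate by parts and use the cyclic antisymmetry $\int\{a,b\}c\,\vol_g=\int\{c,a\}b\,\vol_g$ of the Poisson bracket together with $X_{\cos\theta}=\partial_\phi$, obtaining $\int\tfrac12\,\bar f\,X_h(u_{\pm\pm})\,\vol_g=\pm\tfrac{1}{8\pi}\int(\Lie_{\partial_\phi}\bar f)\,h\,\vol_g$. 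Adding the self-adjoint Laplacian contribution $-\tfrac{i}{16\pi}\int\bar f\,\Delta h\,\vol_g=-\tfrac{i}{16\pi}\int(\Delta\bar f)\,h\,\vol_g$ assembles exactly $-\tfrac{i}{16\pi}\int(\Delta\bar f\pm 2i\,\Lie_{\partial_\phi}\bar f)\,h\,\vol_g$. Off the diagonal the $\Gamma_K$-term is absent, and the same integration by parts plus the cyclic Poisson identity sends $\tfrac12\int\bar f\,X_h(u_{+-})\,\vol_g$ to $\tfrac12\int\{\bar f,u_{+-}\}\,h\,\vol_g$; inserting $u_{+-}=\tfrac{1}{\sqrt{6\pi}}Y_{1,-1}^0$ gives the claimed $\tfrac{1}{2\sqrt{6\pi}}\int\{\bar f,Y_{1,-1}^0\}\,h\,\vol_g$, and the lower-sign case is identical with $u_{-+}$.

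The only genuinely delicate point, and the step I expect to absorb most of the care, is the bookkeeping of orientation and sign conventions. The value of $\Gamma_K$ in the chosen frame \eqref{ChiralOperatorRS}, the Clifford normalization $\rho_j^2=-1$, the Hamiltonian sign in $\dd h=\iota_{X_h}\vol_g$, the Poisson-bracket convention, and the orientation that fixes $X_{\cos\theta}=+\partial_\phi$ all interlock; only the choices consistent with the explicit Killing spinors \eqref{ExplicitKillingSpinors} reproduce the stated signs and the normalizations $\tfrac{1}{16\pi}$ and $\tfrac{1}{2\sqrt{6\pi}}$. Everything beyond this is short and mechanical, the structural content being entirely the exactness identity $\langle\Gamma_K\psi_a,X\cdot\psi_b\rangle=-\dd u_{ab}(X)$ and the identification of the $u_{ab}$ with the lowest spherical harmonics.
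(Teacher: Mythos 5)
Your proposal is correct, and all of its quantitative claims check out against the paper's conventions: the values $\langle\psi_a,\psi_b\rangle=\tfrac{1}{4\pi}\delta_{ab}$, the scalars $u_{\pm\pm}=\pm\tfrac{1}{4\pi}\cos\theta$, $u_{+-}=\tfrac{\bar z}{2\pi q}=\tfrac{1}{\sqrt{6\pi}}Y^0_{1,-1}$, $u_{-+}=-\tfrac{1}{\sqrt{6\pi}}Y^0_{1,1}$, and the orientation giving $X_{\cos\theta}=+\partial_\phi$ all agree with \eqref{ExplicitKillingSpinors} and \eqref{ExplicitSphericalHarmonics}, and the final bookkeeping reproduces \eqref{MatrixElements1} exactly. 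The overall skeleton matches the paper's proof in appendix \ref{ProofsChiralFermions} — both reduce $\cL_{X_h}\psi_{\frac12,\pm\frac12}$ via the Killing spinor equation and \eqref{KLDerivativeRS}, and both finish with the same integration-by-parts toolkit (self-adjointness of $\Delta$, the cyclic Poisson identity, equivalently Stokes for the Killing field $\partial_\phi$ as in \eqref{IntegralIdentityAux1S2}--\eqref{IntegralIdentityAux3S2}) — but your treatment of the central step is genuinely different. The paper computes the Clifford bilinears \eqref{Aux1IdentitiesS2} by brute force in stereographic coordinates with explicit gamma matrices \eqref{PauliMatrices}, obtaining $\langle\Gamma_K\psi_{\frac12,\pm\frac12},X\cdot\psi_{\frac12,\pm\frac12}\rangle=\pm\tfrac{1}{4\pi}g(JX,\partial_\phi)$ and $\langle\Gamma_K\psi_{\frac12,\pm\frac12},X\cdot\psi_{\frac12,\mp\frac12}\rangle=\mp\tfrac{1}{\sqrt{6\pi}}\,\dd(Y^0_{1,\mp1})(X)$, and only then specializes to $X=X_h$ via \eqref{Aux3IdentitiesS2}. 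You instead prove the structural identity $\langle\Gamma_K\psi_a,X\cdot\psi_b\rangle=-\dd u_{ab}(X)$ from the Killing spinor equation, parallelism of $\Gamma_K$, and anti-self-adjointness \eqref{AdjointnessCliffordPairing}; this explains a priori why all these bilinears are exact one-forms (in the paper this is visible only a posteriori, e.g. $g(JX,\partial_\phi)=-\dd(\cos\theta)(X)$), and it shrinks the coordinate-dependent part of the proof to the derivative-free evaluation of the four scalars $u_{ab}$. What your route buys is conceptual clarity and less exposure to gamma-matrix conventions — the only convention-sensitive inputs are the $u_{ab}$ and the orientation, which you correctly single out as the delicate point; what the paper's route buys is essentially nothing extra for this lemma beyond intermediate identities stated for arbitrary $X$ (though your exactness identity holds for arbitrary $X$ too), so your argument could be substituted into the paper without loss.
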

Here $\Lie_{\partial_\phi}\bar{f}$ is the Lie derivative of the scalar function along the vector field $\partial_\phi$ on $S^2$. The Poisson bracket of real functions  \cite[sec. 18.3]{Cannas} is defined by 
\beq
\label{DefinitionPoissonBracket}
\{\, h_1, \, h_2 \, \} \ := \ \vol_g(X_{h_1}, X_{h_2}) \ . %= \ (\dd h_1)(X_{h_2}) \ .
\eeq
It is extended to complex functions by $i$-linearity on both entries. The scalar spherical harmonics $Y^0_{1,\, \pm 1}$ are given by
\beq
\label{ExplicitSphericalHarmonics}
Y_{1,\, -1}^0  \; = \;  \sqrt{\frac{3}{2\pi}} \, \frac{\bar{z}}{1 + |z|^2} \; = \;   \sqrt{\frac{3}{8\pi}}\, (\sin \theta)\, e^{-i\phi} \; = \; -\, \overbar{Y_{1,\, 1}^0} \ ,
 \eeq
using both spherical and complex coordinates on the sphere. 

We want to calculate the right-hand side of \eqref{MatrixElements1} when $f$ is a general spherical harmonic $Y^0_{l,\, n}$. These are eigenfunctions of the Laplacian and of the Lie derivative along $\partial_{\phi}$,
\begin{align}
\label{SHEigenfunctions}
 \Delta Y^0_{l,\, n} \; &= \; - \, l \, (l+1)\, Y^0_{l,\, n} \linebr
 \Lie_{\partial_\phi} (Y^0_{l,\, n}) \; &= \;  i\, n \, Y^0_{l,\, n} \nonumber \ .
\end{align}
See, for example, formulae (9.99), (9.102) and (9.104) in \cite{Jackson}. Moreover, since the $Y^0_{l,\, n}$ form an orthonormal basis of the space of $L^2$-integrable complex functions on $S^2$, the Poisson bracket of two spherical harmonics can itself be written as a linear combination of spherical harmonics. For our purposes, we only need the formula 
\beq 
\label{PoissonBracketSH}
\{ Y_{l,\, n}^0\, , \, Y_{1, \, \pm 1 }^0 \} \; = \; \mp \, i\, \sqrt{  \frac{3}{8\pi}} \,\sqrt{ (l  \mp n)\, (l\pm n +1 )} \; Y_{l, \, n\pm 1}^0 \ .
\eeq
It is a special case of the expression above (10.18) in reference \cite{RS}\footnote{In this reference, the Hamiltonian field $X_h$ is defined with the opposite sign of \eqref{DefinitionHamiltonianField}. These signs cancel in the definition \eqref{DefinitionPoissonBracket} of Poisson bracket.}. The general formula was first derived in \cite{FK} using slightly different normalizations.

Combining \eqref{MatrixElements1} with \eqref{SHEigenfunctions} and \eqref{PoissonBracketSH}, one obtains that
\begin{align}  
\label{MatrixElements2}
\blangle\,  \Gamma_K \, Y_{l, \, n}^0 \,  \psi_{\frac{1}{2}, \, \pm \frac{1}{2}} \, , \; \cL_{X_h}  \psi_{\frac{1}{2}, \, \pm \frac{1}{2}} \, \brangle_{L^2} \; &= \; \frac{i}{16\, \pi}\, [\, l(l+1) \mp 2n \, ] \, \blangle \, Y_{l , \, n}^0 , \, h \, \brangle_{L^2} \linebr
\blangle\,  \Gamma_K \, Y_{l, \, n}^0\, \psi_{\frac{1}{2}, \, \pm \frac{1}{2}} \, , \; \cL_{X_h}  \psi_{\frac{1}{2},  \, \mp \frac{1}{2}} \, \brangle_{L^2} \; &= \; -\, \frac{i}{8\, \pi}\, \sqrt{(l \mp n) (l \pm n+1)} \;    \blangle \, Y_{l, \,  n \pm 1}^0 , \, h  \, \brangle_{L^2}     \nonumber \ .
\end{align}
These formulae are very useful because any spinor $\psi_{l, \,  n}$ can be written as a linear combination of products of the form $Y_{l, \, n}^0 \,  \psi_{\frac{1}{2}, \, \pm \frac{1}{2}}$. So the matrix elements \eqref{MatrixElements2} give us all we need to obtain \eqref{ChiralExampleS2}.
More precisely, combining formulae (2.6) and (3.153) of \cite{TC},  each spinor $\psi_{l, \,  n}$ can be decomposed as
\beq \label{DecompositionSpinorsS2}
\psi_{l, \,  n} \; = \;  \sqrt{\frac{2\pi(l- n)}{l}} \; Y_{l - \frac{1}{2},  \,  n + \frac{1}{2}}^0 \; \, \psi_{\frac{1}{2}, \, - \frac{1}{2}}  \ + \ \sqrt{\frac{2\pi(l+n)}{l}} \; Y_{l - \frac{1}{2},  \,  n - \frac{1}{2}}^0 \; \, \psi_{\frac{1}{2}, \, \frac{1}{2}}  \ .
\eeq
Combining \eqref{DecompositionSpinorsS2} with \eqref{MatrixElements2}, a straightforward calculation finally yields
\begin{proposition}
\label{thm: ChiralMatrixElementsS2}
Given a Hamiltonian vector field $X_h$ on $S^2$, the derivatives $\cL_{X_h}\,\psi_{\frac{1}{2}, \, \pm \frac{1}{2}}$ of the Killing spinors satisfy the integral identities
\beq
\label{MatrixElements3}
\blangle\, \psi_{l,\, n}^+ \, , \, \cL_{X_h} \psi_{\frac{1}{2},\, \pm \frac{1}{2}}^+  \, \brangle_{L^2}  \,  - \,   \blangle\, \psi_{l,\, n}^- \, , \, \cL_{X_h} \psi_{\frac{1}{2},\, \pm \frac{1}{2}}^-  \, \brangle_{L^2} \; = \;  c_{l,\, n}^\pm \; \blangle \, Y_{l - \frac{1}{2},  \,  n \mp \frac{1}{2}}^0 \, , \, h \, \brangle_{L^2} \ 
\eeq
with coefficients $c_{l,\, n}^\pm := \frac{i}{8}\, \sqrt{\frac{l\pm n}{2\pi\, l}} \, \left(l- \frac{3}{2} \right)  \left(l- \frac{1}{2} \right)$.
\end{proposition}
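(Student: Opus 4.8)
The plan is to rewrite the chiral difference on the left of \eqref{MatrixElements3} as a single $L^2$-pairing and then feed in the already-established matrix elements \eqref{MatrixElements2}. Since $\cL_{X_h}$ commutes with $\Gamma_K$ by \eqref{CommutatorChiralOperatorKLD}, it preserves the mutually orthogonal Weyl subspaces $V^\pm$; writing $\Gamma_K \psi_{l,n} = \psi_{l,n}^+ - \psi_{l,n}^-$ and discarding the cross terms between opposite chiralities gives
\beq
\blangle\, \psi_{l,n}^+ , \cL_{X_h}\psi_{\frac{1}{2},\pm\frac{1}{2}}^+ \brangle_{L^2} - \blangle\, \psi_{l,n}^- , \cL_{X_h}\psi_{\frac{1}{2},\pm\frac{1}{2}}^- \brangle_{L^2} = \blangle\, \Gamma_K \psi_{l,n} , \cL_{X_h}\psi_{\frac{1}{2},\pm\frac{1}{2}} \brangle_{L^2} \, ,
\eeq
which is exactly the manipulation recorded in \eqref{RelationCommutatorChirality}. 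The target is thereby reduced to a single inner product involving the lowest Killing spinors.

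Next I would substitute the decomposition \eqref{DecompositionSpinorsS2} of $\psi_{l,n}$ into scalar harmonics times the Killing spinors $\psi_{\frac{1}{2},\pm\frac{1}{2}}$. Since the scalar factors $Y^0$ pass through $\Gamma_K$, pairing $\Gamma_K\psi_{l,n}$ against $\cL_{X_h}\psi_{\frac{1}{2},\pm\frac{1}{2}}$ produces two contributions, one of the diagonal type and one of the off-diagonal type appearing in \eqref{MatrixElements2}, now evaluated at the shifted indices $l\mapsto l-\tfrac{1}{2}$ and $n\mapsto n\pm\tfrac{1}{2}$ with the correlated sign choices. A brief calculation shows that both contributions are proportional to the \emph{same} scalar pairing --- $\blangle Y^0_{l-\frac{1}{2},n-\frac{1}{2}}, h\brangle_{L^2}$ in the $+$ case and $\blangle Y^0_{l-\frac{1}{2},n+\frac{1}{2}}, h\brangle_{L^2}$ in the $-$ case --- so they combine into a single term rather than yielding independent harmonics.

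The last step is to add the two coefficients and simplify. The diagonal piece contributes $\tfrac{i}{16\pi}[\,l'(l'+1)\mp 2n'\,]$ at the shifted indices, while the off-diagonal piece contributes a square-root factor $\sqrt{(l'\mp n')(l'\pm n'+1)}$; after multiplying by the normalizations $\sqrt{2\pi(l\pm n)/l}$ from \eqref{DecompositionSpinorsS2}, these assemble so that the terms linear in $n$ cancel and the remaining quadratic expression collapses to $l^2-2l+\tfrac{3}{4}=(l-\tfrac{1}{2})(l-\tfrac{3}{2})$. Checking the elementary identity $\tfrac{\sqrt{2\pi}}{16\pi}=\tfrac{1}{8}\sqrt{1/(2\pi)}$ then reproduces the stated coefficient $c_{l,n}^\pm$. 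The main difficulty I foresee is purely bookkeeping: one must track the half-integer index shifts and the correlated $\pm/\mp$ conventions of \eqref{MatrixElements2} without error, and trust the cancellation of the $n$-dependence. That cancellation is what makes the result factor so cleanly, and it is precisely what forces the right-hand side to vanish for $l\in\{\tfrac{1}{2},\tfrac{3}{2}\}$ --- the statement that chirality is invisible in the two lightest multiplets.
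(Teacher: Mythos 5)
Your proposal is correct and follows essentially the same route as the paper: the paper's proof is precisely the reduction of the chiral difference to $\blangle \Gamma_K\psi_{l,n},\,\cL_{X_h}\psi_{\frac{1}{2},\pm\frac{1}{2}}\brangle_{L^2}$ (the identity recorded in \eqref{RelationCommutatorChirality} and \eqref{InnerProductIdentity2}), followed by substituting \eqref{DecompositionSpinorsS2} and the matrix elements \eqref{MatrixElements2} at the shifted indices. Your bookkeeping also checks out --- both contributions land on the same harmonic $Y^0_{l-\frac{1}{2},\,n\mp\frac{1}{2}}$, the $n$-linear terms cancel, and $-2(l-n)+l^2-2n+\tfrac{3}{4}=(l-\tfrac{1}{2})(l-\tfrac{3}{2})$ with the normalization $\tfrac{\sqrt{2\pi}}{16\pi}=\tfrac{1}{8\sqrt{2\pi}}$ giving exactly $c^\pm_{l,n}$.
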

This is the final expression in this section. It shows that the massive 4D gauge field linked to $X_h$ will generically have chiral interactions with the lightest 4D fermion, i.e. with the fermion $\varphi$ appearing in the higher-dimensional spinor $\varphi \otimes  \psi_{\frac{1}{2}, \, \frac{1}{2}}$. The amount of chirality depends on $X_h$ through the harmonic components of the Hamiltonian function $h$. 
The Killing choice $X_h = \partial_\phi$ corresponds to a choice of Hamiltonian function proportional to the spherical harmonic $Y^0_{1,0}$. Since spherical harmonics with different indices are $L^2$-orthogonal, the integral on the right-hand side of \eqref{MatrixElements3} will vanish unless $l=3/2$ and $n= 1/2$. But in this case the whole right-hand side of \eqref{MatrixElements3} still vanishes. Thus, in the Killing case, the derivative $\cL_X$ does not have chiral interactions with $\psi_{\frac{1}{2}, \, \frac{1}{2}}$, as expected.

\section{Conserved currents}
\label{ConservedCurrents}

Let be $\Psi$ and $\Phi$ be two complex spinors on $P= M_4\times K$ equipped with a Lorentzian, submersion metric $g_P \simeq (g_M, A, g_K )$. General properties of the Dirac operator say that
\beq
\label{AntiSelfAdjointDirac}
\langle \, \sD^P \Phi, \, \Psi \,  \rangle_{\Gamma_0} \: + \; \langle \, \Phi, \, \sD^P \Psi \,  \rangle_{\Gamma_0} \; = \;  \divergence_{g_\PPP} [\,  j_P  (\Phi ,\Psi) \, ] \ .
\eeq
Here  $\langle\cdot ,  \cdot \rangle_{\Gamma_0}$ is the pairing \eqref{Lorentzian pairing} while $j_P$ is the complex vector field on $P$ defined by
\beq
\label{DefHDCurrents}
j_P (\Phi ,\Psi) \; := \;  g_P^{r s} \,   \langle \, \Phi, \, u_r \cdot \Psi \,  \rangle_{\Gamma_0} \,  \, u_s   \ ,
\eeq
where $\{ u_r\}$ is a local, $g_P$-orthonormal trivialization of $TP$. In the Riemannian case, formula \eqref{AntiSelfAdjointDirac} is derived in \cite[sec. 2.3.4]{Bourguignon}, for example, and the Lorentzian case is similar. So when $\Phi$ and $\Psi$ are in the kernel of $\sD^P$, the vector field $j_P (\Phi ,\Psi)$ has vanishing divergence on $P$.
Note also that we have the general identity $ j_P (\Gamma_P \Phi ,\Gamma_P \Psi) =  j_P (\Phi ,\Psi)$ and that, for the particular choices $\Phi = \Psi$ and $\Phi = \Gamma_P \Psi$, the respective $j_P$ are real vector fields on $P$. All this follows from definition \eqref{DefHDCurrents} and the algebraic properties stated in section \ref{GammaMatrices}, such as  \eqref{SquaresChiralOperators}, \eqref{HermitianConjChiralOperator}, and the fact that Clifford multiplication by a vector is self-adjoint with respect to the $ \langle \cdot , \cdot  \rangle_{\Gamma_0}$ pairing on a Lorentzian manifold.

Now let $\{ X_\mu \}$ be a $g_M$-orthonormal trivialization of $TM_4$ and denote by $X_\mu^\HH$ the respective basic lifts to $P$. Define a vector field on $M_4$ through the fibre integrals
\beq
j_M (\Phi ,\Psi) \; := \;  g_M^{\mu \nu} \, \Big( \int_K \langle \, \Phi, \, X_\mu^\HH \cdot \Psi \,  \rangle_{\Gamma_0} \, \vol_{g_\KKK}  \Big ) \: X_\nu \ .
\eeq
Then $j_M$ is a section of $TM_4 \otimes \CC$. Its divergence with respect to $g_M$ is nicely related to the higher-dimensional divergence of $j_P$.
\begin{proposition}
\label{SpinorCurrents}
Let $\Phi$ and $\Psi$ be any two spinors on $P$ equipped with a Lorentzian, submersion metric $g_P \simeq (g_M, A, g_K )$. Then the divergence of the four-dimensional vector field $j_M (\Phi ,\Psi)$ is given by the fibre-integral
\beq
\divergence_{g_\MMM} [\, j_M (\Phi ,\Psi) \,  ] \; = \; \int_K \divergence_{g_\PPP} [\, j_P (\Phi ,\Psi) \,  ] \, \vol_{g_\KKK} \ .
\eeq
In particular, when $\Phi$ and $\Psi$ are in the kernel of $\sD^P$, it follows from \eqref{AntiSelfAdjointDirac} that the divergence of $j_M (\Phi ,\Psi)$ vanishes on $M_4$.
\end{proposition}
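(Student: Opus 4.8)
The plan is to recast the pointwise divergence identity as a statement about differential forms and then to use integration along the fibre together with Stokes' theorem on the closed fibre $K$. For any vector field $W$ on $P$, Cartan's formula gives $(\divergence_{g_\PPP} W)\,\vol_{g_\PPP} = \Lie_W \vol_{g_\PPP} = \dd_P(\iota_W \vol_{g_\PPP})$, and likewise on $M_4$. Let $\pi_*\colon \Omega^\bullet(P) \to \Omega^{\bullet - k}(M_4)$ denote integration along the fibre, where $k = \Dim K$. Since $K$ is compact and boundaryless, fibre integration commutes with the exterior derivative, $\dd_M\,\pi_* = \pi_*\,\dd_P$ (up to the standard degree-dependent sign, which I would fix once and for all). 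Granting the key identity
\beq
\label{KeyFibreIdentity}
\pi_*\big(\iota_{j_P}\,\vol_{g_\PPP}\big) \; = \; \iota_{j_M}\,\vol_{g_\MMM}\ ,
\eeq
the proposition then follows from the chain
\begin{align}
(\divergence_{g_\MMM} j_M)\,\vol_{g_\MMM}
&= \dd_M\,\iota_{j_M}\vol_{g_\MMM} = \dd_M\,\pi_*\big(\iota_{j_P}\vol_{g_\PPP}\big) \nonumber \\
&= \pi_*\,\dd_P\big(\iota_{j_P}\vol_{g_\PPP}\big) = \pi_*\big((\divergence_{g_\PPP} j_P)\,\vol_{g_\PPP}\big) \ ,
\end{align}
whose right-hand side is, by the definition of fibre integration, $\big(\int_K \divergence_{g_\PPP}(j_P)\,\vol_{g_\KKK}\big)\,\vol_{g_\MMM}$. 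Reading off the scalar coefficient gives the stated formula.

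The heart of the matter, and the step I expect to demand the most care, is the identity \eqref{KeyFibreIdentity}. I would prove it from the orthogonal splitting $TP = \HH \oplus \VV$ and the resulting factorization of the volume form, $\vol_{g_\PPP} = \vol^\HH \wedge \vol^\VV$, where under the isometry $\HH \simeq \pi^\ast(TM)$ one has $\vol^\HH = \pi^\ast \vol_{g_\MMM}$ and $\vol^\VV$ restricts to $\vol_{g_\KKK(x)}$ on each fibre $K_x$. Decomposing $j_P = j_P^\HH + j_P^\VV$ in the adapted frame $\{X_\mu^\HH, v_j\}$, the vertical piece contributes $\vol^\HH \wedge \iota_{j_P^\VV}\vol^\VV$, a form of vertical degree $k-1$, which integrates to zero over the $k$-dimensional fibre. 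The horizontal piece is $j_P^\HH = g_M^{\mu\nu}\,\langle \Phi, X_\mu^\HH \cdot \Psi \rangle_{\Gamma_0}\, X_\nu^\HH$; contracting it against $\vol^\HH$ and integrating the fibre-dependent coefficient against $\vol^\VV = \vol_{g_\KKK}$ produces exactly $g_M^{\mu\nu}\big(\int_K \langle \Phi, X_\mu^\HH \cdot \Psi \rangle_{\Gamma_0}\,\vol_{g_\KKK}\big)\,\iota_{X_\nu}\vol_{g_\MMM} = \iota_{j_M}\vol_{g_\MMM}$, which is \eqref{KeyFibreIdentity}. This is precisely the fibre-averaging construction of appendix \ref{SectionFibreProjection} applied to $W = j_P$. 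The bookkeeping to watch is the orientation and Lorentzian-signature signs in the factorization $\vol_{g_\PPP} = \vol^\HH \wedge \vol^\VV$ and in $\dd_M\pi_* = \pi_*\dd_P$; these are fixed by a single consistent choice and cancel in the final identity.

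Finally, the special case is immediate: if $\Phi, \Psi \in \ker \sD^P$, then the left-hand side of \eqref{AntiSelfAdjointDirac} vanishes identically, so $\divergence_{g_\PPP}[\,j_P(\Phi,\Psi)\,] = 0$ at every point of $P$. Its fibre integral then vanishes, and the main identity yields $\divergence_{g_\MMM}[\,j_M(\Phi,\Psi)\,] = 0$ on $M_4$.
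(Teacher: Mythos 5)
Your proof is correct, but it takes a genuinely different route from the paper's. The paper proves the more general proposition \ref{thm: FibreIntegralVectorField} (valid for an arbitrary vector field $W$ on $P$, of which proposition \ref{SpinorCurrents} is the special case $W = j_P(\Phi,\Psi)$) by a direct frame computation: it expands $\divergence_{g_\PPP}(W)$ in the adapted trivialization $\{X_\mu^\HH, v_j\}$ into four terms and evaluates each with the submersion identities \eqref{IdentitiesSubmersions} --- the curvature identity \eqref{IdentityCurvature2} kills the mixed term $g_M^{\mu\nu}\, g_P(\nabla_{X_\mu^\HH}W^\VV, X_\nu^\HH)$, while the second-fundamental-form identity \eqref{IdentitySecondFundamentalForm} and the mean-curvature trace \eqref{MeanCurvatureVectorProduct} produce the $\partial_{X_\mu}\log\sqrt{|g_K|}$ and $A^a_\mu\,\divergence(e_a)$ contributions --- and then collects everything into a fibrewise term $\Lie_U(\vol_{g_\KKK})$ with $U$ vertical, plus base terms, before applying Stokes' theorem on the closed fibre. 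You replace all of this connection-level bookkeeping by the form identities $(\divergence W)\,\vol = \dd\,\iota_W\vol$, the factorization $\vol_{g_\PPP} = \pi^\ast\vol_{g_\MMM}\wedge\vol^\VV$, the projection formula, and the commutation of $\pi_\ast$ with $\dd$; the vertical part of $j_P$ is discarded by pure degree counting, and your key identity $\pi_\ast(\iota_{j_P}\vol_{g_\PPP}) = \iota_{j_M}\vol_{g_\MMM}$ is precisely the statement that $j_M$ is the fibre average $\hat{W}_M$ of appendix \ref{SectionFibreProjection}. What each approach buys: yours is shorter, coordinate-free, and bypasses the O'Neill identities entirely, exposing the proposition as a consequence of nothing more than the volume-form splitting and compactness of the fibres (both proofs ultimately rest on Stokes on $K$ --- yours implicitly, inside $\dd_M\pi_\ast = \pi_\ast\dd_P$); the paper's computation instead keeps the physically meaningful quantities visible along the way (the explicit cancellation of the $F_A$ term, and the mean-curvature terms that mirror those in the Dirac decomposition \eqref{GeneralFormulaDecompositionDiracOperator}), intermediate formulas that the paper reuses elsewhere. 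The one point you should pin down is the sign in $\dd_M\,\pi_\ast = \pm\,\pi_\ast\,\dd_P$: with the fibre-last convention under which the projection formula $\pi_\ast(\pi^\ast\beta\wedge\gamma) = \beta\,\pi_\ast\gamma$ holds, the commutation is exact and no stray $(-1)^k$ enters, so your chain of equalities closes as written.
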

This result is a special case of proposition \ref{thm: FibreIntegralVectorField}, proved in the appendices. Now let $\xi$ be a Killing vector field on $(P, g_P)$. As was seen in \eqref{CommutatorDiracKLD}, the Kosmann-Lichnerowicz derivative $\cL_\xi$ commutes with the Dirac operator $\sD^P$. Thus, if $\sD^P \Psi = 0$, the spinor  $\Phi = \cL_\xi \Psi$ is also in the kernel of $\sD^P$. So we get the first part of:
\begin{corollary}
\label{thm:ChargeCurrents}
Let $\Psi$ be a complex spinor in the kernel of $\sD^P$ and let $\xi$ be a Killing vector field on $P$. Then the vector field $j_M (\cL_\xi \Psi, \Psi)$ on $M_4$
has vanishing divergence  with respect  to $g_M$. When $\xi$ is vertical on $P$, the vector field $j_M$ is purely imaginary. 
\end{corollary}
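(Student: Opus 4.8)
The first assertion is essentially immediate from the discussion preceding the statement, so I would dispatch it quickly. Since $\xi$ is Killing on $P$ we have $\Lie_\xi g_P = 0$, and formula \eqref{CommutatorDiracKLD} then gives $[\sD^P\!, \cL_\xi] = 0$; hence $\sD^P(\cL_\xi \Psi) = \cL_\xi(\sD^P \Psi) = 0$, so $\cL_\xi \Psi$ lies in the kernel of $\sD^P$ alongside $\Psi$. By \eqref{AntiSelfAdjointDirac} the field $j_P(\cL_\xi \Psi, \Psi)$ is then divergence-free on $P$, and Proposition \ref{SpinorCurrents} transfers this to $\divergence_{g_\MMM}[\, j_M(\cL_\xi \Psi, \Psi)\,] = 0$.

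For the second assertion the plan is to show that each fibre integral $I_\mu := \int_K \langle \cL_\xi \Psi, X_\mu^\HH \cdot \Psi \rangle_{\Gamma_0} \vol_{g_\KKK}$ is purely imaginary, since $j_M(\cL_\xi \Psi, \Psi) = g_M^{\mu\nu} I_\mu X_\nu$ with $g_M^{\mu\nu}$ and $X_\nu$ real. Equivalently, I aim to prove $I_\mu = -\overline{I_\mu}$. The plan is to rewrite $\overline{I_\mu}$ using four ingredients: (i) the conjugation symmetry $\overline{\langle \phi, \psi \rangle_{\Gamma_0}} = \langle \psi, \phi \rangle_{\Gamma_0}$, which follows from self-adjointness of $\Gamma_0$; (ii) self-adjointness of Clifford multiplication by the real vector $X_\mu^\HH$ with respect to $\langle \cdot, \cdot \rangle_{\Gamma_0}$ on the Lorentzian space $P$; (iii) the commutation relation \eqref{CommutatorCliffordMultiplicationKLD} between $\cL_\xi$ and Clifford multiplication; and (iv) the anti-self-adjointness of $\cL_\xi$ along the fibre. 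Chaining these should convert $\overline{I_\mu}$ into $-I_\mu$.

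The geometric heart of the argument is the identity $[\xi, X_\mu^\HH] = 0$ for a vertical Killing field $\xi$, which collapses ingredient (iii): the $\Lie_\xi g_P$ term in \eqref{CommutatorCliffordMultiplicationKLD} already vanishes because $\xi$ is Killing, leaving $\cL_\xi(X_\mu^\HH \cdot \Psi) = X_\mu^\HH \cdot \cL_\xi \Psi + [\xi, X_\mu^\HH]\cdot \Psi$. I would establish the vanishing bracket from two observations. First, $\xi$ is $\pi$-related to the zero field on $M_4$ and $X_\mu^\HH$ is $\pi$-related to $X_\mu$, so $[\xi, X_\mu^\HH]$ is $\pi$-related to $[0, X_\mu] = 0$ and is therefore vertical. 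Second, the flow of a vertical field preserves the fibres, so the flow of the vertical Killing field $\xi$ consists of fibre-preserving isometries; such maps preserve $\VV$ and hence its $g_P$-orthogonal complement $\HH$, which forces $[\xi, X_\mu^\HH] = \Lie_\xi X_\mu^\HH$ to be horizontal. Being both vertical and horizontal, the bracket vanishes.

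It remains to justify ingredient (iv). The pointwise compatibility \eqref{CompatibilityKLDInnerProduct}, read with the $\Gamma_0$-pairing as prescribed by Remark \ref{PairingRemark}, gives $\langle \cL_\xi \psi_1, \psi_2 \rangle_{\Gamma_0} + \langle \psi_1, \cL_\xi \psi_2 \rangle_{\Gamma_0} = \Lie_\xi \langle \psi_1, \psi_2 \rangle_{\Gamma_0}$. A vertical Killing field on $P$ restricts to a Killing field on each fibre $(K, g_K)$, so $\divergence_{g_\KKK}(\xi) = 0$; integrating the compatibility identity over $K$ and applying Stokes as in \eqref{IntegralKLD} then annihilates the right-hand side, yielding the anti-self-adjointness of $\cL_\xi$ along the fibre. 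Assembling (i)--(iv) together with $[\xi, X_\mu^\HH] = 0$ produces $I_\mu = -\overline{I_\mu}$, so $j_M$ is purely imaginary. I expect the main obstacle to be the careful bookkeeping around the bracket term and the signature-dependent pairing: one must confirm that the vertical Kosmann-Lichnerowicz derivative on $P$ genuinely acts as the intrinsic fibre derivative with the correct $\Gamma_0$-pairing, so that the fibrewise Stokes argument applies even though $\cL_\xi$ is a priori an operator on all of $P$.
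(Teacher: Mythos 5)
Your proposal is correct and takes essentially the same route as the paper: the first assertion via $[\sD^P\!,\cL_\xi]=0$ from \eqref{CommutatorDiracKLD} combined with proposition \ref{SpinorCurrents}, and the second by chaining the compatibility identity \eqref{CompatibilityKLDInnerProduct}, the commutation relation \eqref{CommutatorCliffordMultiplicationKLD} in the Killing case, the fibrewise Stokes argument, self-adjointness of Clifford multiplication for the $\langle\cdot,\cdot\rangle_{\Gamma_0}$ pairing, and conjugate symmetry to get $I_\mu=-\overline{I_\mu}$. The only (harmless) difference is that you prove the two key geometric inputs --- $[\xi,X_\mu^\HH]=0$ and the fibrewise Killing property of a vertical Killing field --- directly via $\pi$-relatedness and the flow-preserves-$\HH$ argument, whereas the paper simply cites lemma \ref{KillingConditionsP} (lemma A.3 of \cite{Bap2}).
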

The last assertion is proved in appendix \ref{SectionFibreProjection}. It uses the following result:
\begin{lemma}[\!\!\cite{Bap2}, lemma A.3]
\label{KillingConditionsP}
Let $g_P \simeq (g_M, A, g_K)$ be a submersion metric on $P$. Then a vertical vector field $\xi$ is Killing with respect to $g_P$ if and only if its restriction to each fibre is Killing with respect to $g_K$ and, additionally, the Lie bracket $[\xi, X^\HH]$ vanishes for every vector field $X$ on $M_4$.
\end{lemma}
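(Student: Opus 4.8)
The plan is to test the Killing equation $\Lie_\xi g_P = 0$ against an adapted local frame, exploiting that $\Lie_\xi g_P$ is a genuine $(0,2)$-tensor and hence $C^\infty(P)$-bilinear. It therefore suffices to evaluate it on the frame $\{X_\mu^\HH, v_j\}$ of basic lifts and vertical fields from \eqref{BasicLiftX}, which reduces the problem to three types of pairs: vertical--vertical, horizontal--vertical, and horizontal--horizontal. For each pair I would expand using the identity $(\Lie_\xi g_P)(Y, Z) = \xi\big(g_P(Y, Z)\big) - g_P([\xi, Y], Z) - g_P(Y, [\xi, Z])$ together with the explicit components of $g_P$ in \eqref{MetricDecomposition}.

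The key structural input is that every bracket appearing here is vertical. Vertical fields are closed under the Lie bracket, so $[\xi, U]$ is vertical whenever $U$ is. Moreover, $\xi$ is $\pi$-related to the zero field on $M_4$ while a basic lift $X^\HH$ is $\pi$-related to $X$, so $[\xi, X^\HH]$ is $\pi$-related to $[0, X] = 0$ and is thus vertical as well; one can also see this from the direct computation $[\xi, X^\HH] = -X(\xi^i)\,\partial_{y^i} + A^a(X)\,[\xi, e_a]$. I would record this observation first, since it is what makes the horizontal--vertical condition collapse cleanly.

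With this in hand each pair is immediate. For vertical $U, V$ one has $g_P(U, V) = g_K(U, V)$ and every bracket stays vertical, so $(\Lie_\xi g_P)(U, V)$ reduces to the fibrewise Lie derivative $(\Lie_\xi g_K)(U, V)$; the verticality of $\xi$ ensures that differentiating along $\xi$ never leaves a fibre, so the possible $x$-dependence of $g_K$ is irrelevant. Thus the vertical--vertical condition is exactly the statement that $\xi|_{K_x}$ is Killing for $g_K$. For a horizontal $X^\HH$ and vertical $V$ we use $g_P(X^\HH, V) = 0$ and the verticality of $[\xi, V]$ and $[\xi, X^\HH]$ to obtain $(\Lie_\xi g_P)(X^\HH, V) = -g_K([\xi, X^\HH], V)$; non-degeneracy of $g_K$ on the vertical bundle then forces $[\xi, X^\HH] = 0$. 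Finally, for horizontal $X^\HH, Y^\HH$ we have $g_P(X^\HH, Y^\HH) = g_M(X, Y)$, a function constant along the fibres and hence annihilated by the vertical field $\xi$, so $(\Lie_\xi g_P)(X^\HH, Y^\HH) = -g_P([\xi, X^\HH], Y^\HH) - g_P(X^\HH, [\xi, Y^\HH])$.

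Assembling these, both directions follow together as an equivalence. The horizontal--horizontal condition is not an independent constraint: once the horizontal--vertical condition has produced $[\xi, X^\HH] = 0$ for all $X$, it holds automatically. Hence $\Lie_\xi g_P = 0$ is equivalent to $\Lie_\xi g_K = 0$ on each fibre together with $[\xi, X^\HH] = 0$ for all $X$ on $M_4$, which is the claim. The only point requiring genuine care, rather than routine expansion, is the verticality of $[\xi, X^\HH]$: it is precisely this fact that upgrades the horizontal--vertical equation from a condition on the vertical part of $[\xi, X^\HH]$ to the vanishing of the whole bracket, and that renders the horizontal--horizontal equation redundant.
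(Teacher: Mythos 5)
This lemma is one the paper itself does not prove: it is imported by citation from \cite{Bap2}, lemma A.3, so there is no in-paper proof to compare your argument against. Judged on its own merits, your proof is correct and complete, and it is the natural argument. The two structural facts you isolate are exactly what is needed: first, tensoriality of $\Lie_\xi g_P$, which legitimately reduces the Killing equation to evaluations on the adapted frame $\{X_\mu^\HH, v_j\}$; second, verticality of $[\xi, X^\HH]$, which you justify both by $\pi$-relatedness and by the explicit computation $[\xi, X^\HH] = -X(\xi^i)\,\partial_{y^i} + A^a(X)\,[\xi, e_a]$ (consistent with the paper's \eqref{KillingPCondition2}, which is the special case where $\xi$ is taken constant along $M_4$). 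Your three frame evaluations check out: the vertical--vertical block reproduces the fibrewise Lie derivative of $g_K$ because the vertical $\xi$ differentiates only along the fibre, so the $x$-dependence of $g_K$ never enters; the horizontal--vertical block equals $-g_K([\xi, X^\HH], V)$, and non-degeneracy of $g_K$ on $\VV$ together with verticality of the bracket upgrades its vanishing to $[\xi, X^\HH] = 0$; and the horizontal--horizontal block is $-g_P([\xi, X^\HH], Y^\HH) - g_P(X^\HH, [\xi, Y^\HH])$. One small refinement: this last block vanishes identically by horizontal--vertical orthogonality, since both brackets are vertical --- so it is redundant unconditionally, not merely after the horizontal--vertical condition has forced the brackets to vanish, which is a slightly stronger statement than the one you make. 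Both directions of the equivalence then assemble exactly as you describe.
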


This lemma shows that finding Killing vector fields on $P$ is not easy when the submersion geometry is non-trivial. 
When the gauge fields are turned on, even if the internal metric $g_K$ remains constant throughout $M_4$, the Killing fields of $g_K$ may no longer be Killing fields of $g_P$. If $\xi$ is a Killing field of $g_K$ and we regard it as vertical vector field on $P$ that is constant along $M_4$, its Lie bracket with a basic field is
\beq
\label{KillingPCondition2}
[\xi, X^\HH] \; = \;  [\xi, X] \; + \; [\xi, A^a(X) \, e_a] \; = \;   A^a(X) \, [\xi, e_a] \ .
\eeq
The right-hand side does not vanish in general. So lemma \ref{KillingConditionsP} says that $\xi$ is not Killing on $P$ and we do not get conserved currents associated with $\xi$. 
A nice exception occurs if we assume that the geometry of $P$ satisfies the following conditions: 
\begin{enumerate}[itemsep=1mm, topsep=7pt]
\item[{\it i)}] the  internal metric $g_K$ is constant along $M_4$; 
\item[{\it ii)}]  the gauge one-form $A(X)$ has values in the space of Killing vector fields on $K$; 
\item[{\it iii)}]  $\xi$ is a Killing field of $g_K$ that commutes with all other Killing fields on $K$.
\end{enumerate}
Then clearly \eqref{KillingPCondition2} vanishes for all $X$. So $\xi$ will be Killing on $P$  and will define a non-trivial, conserved, charge current in 4D. Physically, the conditions $i$--$iii)$ say that we are in regions of spacetime where the Higgs-like scalars are constant; the non-zero gauge fields are all massless; and the internal Killing field $\xi$ is linked to an abelian, massless, electromagnetic-like gauge field on $M_4$. In this physical setting there are no background fields with $\xi$-charged bosons. So it is natural to find a conserved 4D fermionic current corresponding to the $\xi$-charge. For more perspective on this interpretation please compare with the geodesic picture described in \cite[sec. 5]{Bap2}.

\vspace{.2cm}

\begin{remark}
The example of a conserved current associated with $\cL_\xi$ for a Killing $\xi$ can be generalized. Let $S$ and $R$ be linear operators on spinors such that $\sD^P S \, \Psi =  R \, \sD^P \Psi$ for all $\Psi$. Then if $\Psi$ is in the kernel of $\sD^P$, so is  $\Phi = S \Psi$. Thus, proposition \ref{SpinorCurrents} implies that the 4D complex vector field $j_M (S \Psi, \Psi)$ has vanishing divergence with respect  to $g_M$. So a Kaluza-Klein model can have additional 4D conserved currents besides those defined by Killing fields. This will happen if the metric $g_P$ supports conformal Killing vector fields or conformal Killing-Yano forms, for example. The symmetries of the Dirac operator are a well-studied, beautiful topic. See, for instance, \cite{BK, Cariglia} for general accounts.
\end{remark}

\vspace*{1cm}

\section*{Acknowledgements}

\vspace*{-.1cm}

It is a pleasure to thank Nuno Romão for comments on an earlier version of the paper.

\newpage

\addtocontents{toc}{\cftpagenumbersoff{section}} 

\appendix
\appendixpage
\noappendicestocpagenum
\addappheadtotoc

\addtocontents{toc}{\cftpagenumberson{section}}

\section{Fibre-integral projection of a vector field}
\label{SectionFibreProjection}

Let $W$ be a vector field on $P = M\times K$ with compact support on the fibres. It determines a vector field on the base $M$ through the fibre-integral
\beq
\hat{W}_M  \; := \;  g_M^{\mu \nu} \, \Big( \int_K g_P(W, X_\mu^\HH)  \, \vol_{g_\KKK}  \Big)  \, X_\nu  \ .
\eeq
Here the $X_\mu$ form a $g_M$-orthonormal trivialization of $TM$ and the $X_\mu^\HH$ are their basic lifts to $P$, as in \eqref{BasicLiftX}.
When $W$ is itself a basic vector field on $P$, its projection to the base $\pi_\ast W$ is well-defined and the fibre-integrals simplify to yield $\hat{W}_M = (\Vol_{g_\KKK}) \, \pi_\ast W$. In the general case $\hat{W}_M$ is a more complicated average, but the following identity always holds.
\begin{proposition}
\label{thm: FibreIntegralVectorField}
Let  $g_P \simeq (g_M, A, g_K )$ be a submersion metric and $W$ a vector field on $P$. Then
\beq
\int_K \divergence_{g_\PPP} (W) \,  \vol_{g_\KKK} \; = \; \divergence_{g_\MMM} (\hat{W}_M)
\eeq
as functions on the base $M$.
\end{proposition}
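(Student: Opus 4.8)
The plan is to recast both sides of the identity as exact four-forms on $M$ and compare them via integration along the fibre. The starting point is the pointwise formula $\divergence_{g_\PPP}(W)\,\vol_{g_\PPP} = \dd(\iota_W \vol_{g_\PPP})$, together with the fact that for a submersion metric the total volume form factorizes as $\vol_{g_\PPP} = \pi^\ast(\vol_{g_\MMM}) \wedge \vol_{\VV}$, where $\vol_{\VV} = \sigma^1 \wedge \cdots \wedge \sigma^k$ is the vertical volume form built from the coframe $\{\sigma^j\}$ dual to the orthonormal vertical frame $\{v_j\}$, while the horizontal coframe dual to $\{X_\mu^\HH\}$ satisfies $\theta^\mu = \pi^\ast(X^\mu)$ precisely because $\pi_\ast X_\mu^\HH = X_\mu$. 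Restricted to a fibre, $\vol_{\VV}$ reduces to $\vol_{g_\KKK}(x)$, so the fibre integral $\int_K f \, \vol_{g_\KKK}$ coincides with the fibre integration $\pi_\ast(f\,\vol_{\VV})$ of forms.

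First I would fibre-integrate the left-hand side. Using the factorization, $\divergence_{g_\PPP}(W)\,\vol_{g_\PPP} = \pi^\ast(\vol_{g_\MMM}) \wedge \big(\divergence_{g_\PPP}(W)\,\vol_{\VV}\big)$, and the projection formula $\pi_\ast(\pi^\ast\omega \wedge \gamma) = \omega \wedge \pi_\ast\gamma$ gives
\[
\pi_\ast\big(\divergence_{g_\PPP}(W)\,\vol_{g_\PPP}\big) \; = \; \Big(\int_K \divergence_{g_\PPP}(W)\,\vol_{g_\KKK}\Big)\,\vol_{g_\MMM} \ .
\]
Next, since fibre integration commutes with the exterior derivative (Stokes along the fibre, with no boundary term because $K$ is compact and $W$ has compact fibre support), I would write $\pi_\ast\dd(\iota_W\vol_{g_\PPP}) = \dd_M\,\pi_\ast(\iota_W\vol_{g_\PPP})$. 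Comparing with $\divergence_{g_\MMM}(\hat{W}_M)\,\vol_{g_\MMM} = \dd_M(\iota_{\hat{W}_M}\vol_{g_\MMM})$, the proposition reduces to the single identity of three-forms on $M$,
\[
\pi_\ast(\iota_W \vol_{g_\PPP}) \; = \; \iota_{\hat{W}_M}\,\vol_{g_\MMM} \ .
\]

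To establish this I would split $W = W^\HH + W^\VV$ into horizontal and vertical parts. For the vertical part, $\iota_{W^\VV}\vol_{g_\PPP} = \pi^\ast(\vol_{g_\MMM}) \wedge \iota_{W^\VV}\vol_{\VV}$ because $\iota_{W^\VV}\pi^\ast(\vol_{g_\MMM}) = 0$; the surviving factor $\iota_{W^\VV}\vol_{\VV}$ has vertical degree $k-1$, so its fibre integral vanishes identically. This is the differential-forms incarnation of the statement that the vertical divergence integrates to zero over a closed fibre. For the horizontal part, writing $W^\HH = g_M^{\mu\nu}\,g_\PPP(W, X_\mu^\HH)\,X_\nu^\HH$ and using $\iota_{X_\mu^\HH}\vol_{g_\PPP} = \pi^\ast(\iota_{X_\mu}\vol_{g_\MMM}) \wedge \vol_{\VV}$, the projection formula yields $\pi_\ast(\iota_{W^\HH}\vol_{g_\PPP}) = g_M^{\mu\nu}\big(\int_K g_\PPP(W, X_\mu^\HH)\,\vol_{g_\KKK}\big)\,\iota_{X_\nu}\vol_{g_\MMM}$, which is exactly $\iota_{\hat{W}_M}\vol_{g_\MMM}$ by the definition of $\hat{W}_M$.

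The main obstacle is bookkeeping rather than conceptual: one must check that the horizontal lift feeds through the volume form cleanly despite the presence of the gauge one-form $A$ in $X_\mu^\HH = X_\mu + A^a(X_\mu)\,e_a$ (this works because the dual horizontal coframe is still $\pi^\ast(X^\mu)$), and one must pin down the sign in the commutation $\pi_\ast \dd = \dd_M \pi_\ast$, which is $+1$ here. That sign can be confirmed against the trivial-product case $W = X^\HH$ with constant $g_K$, where both sides reduce to $(\Vol_{g_\KKK})\,\divergence_{g_\MMM}(X)$.
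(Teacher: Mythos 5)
Your proof is correct, but it follows a genuinely different route from the paper's. The paper proves the identity by brute-force tensor computation: it decomposes $W = W^\VV + \alpha^\mu X_\mu^\HH$, evaluates the four blocks of $\divergence_{g_\PPP}(W) = g_P^{rs}\, g_P(\nabla_{u_r} W, u_s)$ one by one using the O'Neill-type submersion identities \eqref{IdentitiesSubmersions} (in particular the second-fundamental-form identity \eqref{IdentitySecondFundamentalForm} and the mean-curvature trace formula \eqref{MeanCurvatureVectorProduct}), arrives at the pointwise formula \eqref{DivergencePreIntegration}, $\divergence_{g_\PPP}(W)\,\vol_{g_\KKK} = \Lie_U(\vol_{g_\KKK}) + \partial_\mu(\alpha^\mu \vol_{g_\KKK}) + \alpha^\mu(\divergence_{g_\MMM} X_\mu)\,\vol_{g_\KKK}$, and only then integrates over the fibre, killing the Lie-derivative term by Stokes' theorem. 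You instead bypass the Levi-Civita connection entirely: writing $\divergence_{g_\PPP}(W)\,\vol_{g_\PPP} = \dd(\iota_W\vol_{g_\PPP})$, factorizing $\vol_{g_\PPP} = \pi^\ast(\vol_{g_\MMM})\wedge \vol_\VV$ in the adapted coframe, and using the projection formula plus $\pi_\ast\,\dd = \dd_M\,\pi_\ast$ for closed fibres, you reduce everything to the algebraic identity $\pi_\ast(\iota_W\vol_{g_\PPP}) = \iota_{\hat W_M}\vol_{g_\MMM}$, where the vertical part of $W$ drops out by a pure degree count (fibre-degree $k-1$ forms integrate to zero) and the horizontal part reproduces $\hat W_M$ by the duality $\theta^\mu = \pi^\ast(X^\mu)$, which indeed survives the gauge twist in $X_\mu^\HH$. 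Your checks of the two delicate points (the coframe duality despite $A$, and the sign in $\pi_\ast\,\dd = \dd_M\,\pi_\ast$) are the right ones. What each approach buys: yours is shorter, connection-free, and makes transparent that the proposition is essentially fibrewise Stokes; the paper's computation, though heavier, produces the intermediate formula \eqref{DivergencePreIntegration}, which exhibits exactly how the mean curvature of the fibres (the trace of $\dd^A g_K$) enters and cancels, and it reuses the same machinery (notably \eqref{MeanCurvatureVectorProduct}) that the paper needs anyway for the decomposition of the Dirac operator, keeping the appendices self-consistent.
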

\begin{proof}
Choose a local, $g_P$-orthonormal trivialization $\{ u_r\}$ of $TP$ adapted to the submersion, i.e. a trivialization of the form $\{X_\mu^\HH, v_j \}$, where the $X_\mu$ form a $g_M$-orthonormal trivialization of $TM$ and the $v_j$ form a $g_K$-orthonormal trivialization of $TK$.
On a Riemannian submersion the inner-product $g_P(X_\mu^\HH, X_\nu^\HH)$ is constant along the fibres and equal to $g_M(X_\mu, X_\nu)$. So we can decompose
\beq
\label{AuxDecompositionB1}
W \; = \; W^\VV \, +\, W^\HH \; = \; W^\VV \, +\,  \alpha^\mu \, X_\mu^\HH
\eeq
with coefficient functions $ \alpha^\mu := g_M^{\mu \nu}\, g_P(W, X_\nu^\HH)$ that vary both along the base and the fibres. This leads to a decomposition of the divergence into four terms:
 \begin{align}
 \divergence_{g_\PPP} (W) &:=  \; g_P^{rs}\,  \, g_P(\nabla_{u_r} W, u_s)   \nonumber \linebr
 \begin{split}
 &\, = \;  g_K^{ij}\,  \, g_P(\nabla_{v_i} W^\VV, v_j) \; +  \;   g_K^{ij}\,  \, g_P(\nabla_{v_i} W^\HH, v_j)  \linebr 
 &\ \ \ \ \,   + \;  g_M^{\mu \nu }\,  \, g_P(\nabla_{X^\HH_\mu} W^\VV, X^\HH_\nu) \;  +  \;  g_M^{\mu \nu }\,  \, g_P(\nabla_{X^\HH_\mu} W^\HH, X^\HH_\nu) \ .
 \end{split}
 \end{align}
 
 We will analyze each term separately. Using \eqref{IdentitySubmersionH2}, we have that
  \begin{align}
 g_K^{ij}\, g_P(\nabla_{v_i} W^\VV, v_j) \; &= \;  g_K^{ij}\, g_K(\nabla^K_{v_i} W^\VV, v_j) \; =\;   \divergence_{g_\KKK} (W^\VV) \ .
 \end{align}
 Using \eqref{IdentityCurvature2} and the fact that the curvature $F_A^a$ is anti-symmetric,
  \begin{align}
g_M^{\mu \nu }\,  \, g_P(\nabla_{X^\HH_\mu} W^\VV, X^\HH_\nu) \; &= \; - \, g_M^{\mu \nu }\,  \, g_P( W^\VV,  \nabla_{X^\HH_\mu} X^\HH_\nu) \;  \nonumber \linebr
&=\;  - \,  \frac{1}{2}\,   g_M^{\mu \nu }\, F_A^a (X_\mu, X_\nu )\, g_K(e_a, W^\VV) \; = \; 0 \ .
 \end{align}
 Using \eqref{IdentitySecondFundamentalForm} and the calculation \eqref{MeanCurvatureVectorProduct} of the trace of the second fundamental form,
 \begin{align}
 g_K^{ij}\,\,  g_P(\nabla_{v_i} W^\HH, v_j) \; &= \; - \, g_K^{ij}\, g_P(W^\HH,  \nabla_{v_i}  v_j) \; =\;   \frac{1}{2}\, \alpha^\mu\, g_K^{ij}\, (\dd^A g_K)_{X_\mu}(v_i, v_j)  \nonumber \linebr
 &= \;   \alpha^\mu\,  \big[ \,  \partial_{X_\mu}   \log \big(\sqrt{|g_K |} \big)  +  A^a_\mu \, \, \divergence_{g_\KKK} (e_a)  \, \big]  \ .
 \end{align}
 Finally, using \eqref{IdentitySubmersionH1}, decomposition \eqref{AuxDecompositionB1} and the definition of basic lift $X^\HH$, 
 \begin{align}
 g_M^{\nu \sigma }\,  \, g_P(\nabla_{X^\HH_\nu} W^\HH, X^\HH_\sigma) \; &= \; \alpha^\mu \, g_M^{\nu \sigma }\,  \, g_M(\nabla^M_{X_\nu} X_\mu, X_\sigma)  \; +\;  g_M^{\nu \sigma }  \, \, (g_M)_{\mu \sigma} \, (\dd \alpha^\mu) (X^\HH_\nu)  \;  \nonumber \linebr 
 &= \;   \alpha^\mu \,  \, \divergence_{g_\MMM} X_\mu \; + \; (\dd \alpha^\mu) (X_\mu)   \; + \; A_\mu^a \, (\dd \alpha^\mu) (e_a)  \ .
 \end{align} 
Thus, the combination of all four terms is
 \beq
 \divergence_{g_\PPP} (W) \; =  \;   \divergence_{g_\KKK} ( \, W^\VV + A_\mu^a \, \alpha^\mu\,  e_a \, ) \:  + \: \alpha^\mu\, \partial_\mu \big[ \log \big(\sqrt{|g_K |}  \big) \big] \: + \: \partial_\mu \alpha^\mu  \: + \: \alpha^\mu \,  \, \divergence_{g_\MMM} X_\mu  \ . 
\eeq
 Defining the vertical field $U = W^\VV +  A_\mu^a \, \alpha^\mu\,  e_a$, it follows that
 \beq
 \label{DivergencePreIntegration}
  \divergence_{g_\PPP} (W) \, \vol_{g_\KKK} \; = \;  \Lie_U ( \vol_{g_\KKK}) \; +\;  \partial_\mu ( \alpha^\mu  \, \vol_{g_\KKK}  ) \; +\; \alpha^\mu \,  \, (\divergence_{g_\MMM} X_\mu) \, \vol_{g_\KKK} \ .
 \eeq
Now denote $\omega^\mu := \alpha^\mu  \, \vol_{g_\KKK}$. The fibre-integral $\int_K \omega^\mu$ is a function on the base $M$. So integrating \eqref{DivergencePreIntegration} over the fibre, using Stokes' theorem and changing the order of the fibre-integral and a directional derivative on the base, 
 \beq
  \int_K \divergence_{g_\PPP} (W) \, \vol_{g_\KKK} \; =\;    \partial_\mu  \int_K \omega^\mu   \;  +  \; ( \divergence_{g_\MMM} X_\mu \, ) \int_K \omega^\mu \; = \;  \divergence_{g_\MMM} \Big[ X_\mu \int_K \omega^\mu \Big] \ .
 \eeq
 This concludes the proof of proposition \ref{thm: FibreIntegralVectorField}. 
\end{proof}

\vspace{.2cm}

\begin{proof}[Proof of corollary \ref{thm:ChargeCurrents}]
The first assertion follows from proposition \ref{SpinorCurrents} with the choice $\Phi = \cL_W \Psi$. For the second assertion, note that \eqref{CompatibilityKLDInnerProduct}, remark \ref{PairingRemark} and relation \eqref{CommutatorCliffordMultiplicationKLD} applied to the Killing case  imply that
\begin{align}
 \langle \,\cL_W \Psi, \, X_\mu^\HH \cdot  \Psi \,  \rangle_{\Gamma_0} \, &= \, -   \langle \, \Psi, \,  \cL_W \, (X_\mu^\HH \cdot  \Psi) \,  \rangle_{\Gamma_0} +  \Lie_W \langle \, \Psi, \,  X_\mu^\HH \cdot  \Psi \,  \rangle_{\Gamma_0}   \linebr
 &= \, - \, \langle \, \Psi, \,   X_\mu^\HH \cdot  \cL_W \, \Psi \,  \rangle_{\Gamma_0}  -   \langle \, \Psi, \,   [W, X_\mu^\HH] \cdot  \Psi \,  \rangle_{\Gamma_0}   +  \Lie_W \langle \, \Psi, \,  X_\mu^\HH \cdot  \Psi \,  \rangle_{\Gamma_0} . \nonumber
\end{align}
Now, if $W$ is Killing and vertical on $P$, lemma A.3 in \cite{Bap2} implies that $[W, X_\mu^\HH]$ is always zero and the restriction of $W$ to the fibres is Killing with respect to $g_K$. In particular $\divergence_{g_\KKK} W = 0$. So it follows from Stokes' theorem and \eqref{AdjointnessCliffordPairing} that
\begin{align}
\int_K  \langle \,\cL_W \Psi, \, X_\mu^\HH \cdot  \Psi \,  \rangle_{\Gamma_0} \,\vol_{g_\KKK} \;&= \; - \, \int_K  \langle \, \Psi, \,   X_\mu^\HH \cdot  \cL_W  \Psi \,  \rangle_{\Gamma_0} \,\vol_{g_\KKK} \linebr
&=  \; -  \, \int_K  \langle \,  X_\mu^\HH \cdot \Psi, \,   \cL_W  \Psi \,  \rangle_{\Gamma_0} \,\vol_{g_\KKK}   \nonumber  \linebr
&=  \; -  \, \int_K  \overbar{\langle \,\cL_W \Psi, \, X_\mu^\HH \cdot  \Psi \,  \rangle}_{\Gamma_0} \,\vol_{g_\KKK}    \nonumber
\end{align}
is a purely imaginary number.
\end{proof}

\section{Proofs for section \ref{SectionDecompositionDiracOperator}}
\label{DecompositionHDOperators}

This appendix contains proofs of propositions in section \ref{SectionDecompositionDiracOperator}. It deals with the decomposition of the higher-dimensional spinor connection and Dirac operator on a Riemannian submersion $P= M \times K$. The notation is as in sections   \ref{Submersions} and \ref{SectionDecompositionDiracOperator}.

\begin{proof}[Proof of proposition \ref{thm:DecompositionCovariantDerivatives}]
\label{proof:some-theorem}
First consider the case of a vertical vector field $U$. In the local trivialization $\{ X_\mu^\HH, v_j \}$ of $TP$, formula \eqref{StandardSpinorCovariantDerivative} decomposes as
\begin{multline}
\label{VerticalSplit}
\nabla_{U} \Psi \; =\; \partial_{U} \Psi \; +\;  \frac{1}{4} \, g_M^{\mu \sigma} \, g_M^{\nu \rho} \, \,  g_P(\nabla_{U} X_\mu^\HH, X_\nu^\HH) \, X_\sigma^\HH \cdot X_\rho^\HH \cdot \Psi \linebr
+ \; \frac{1}{4} \, g_K^{i j} \, g_K^{r s} \,  g_P(\nabla_{U} v_i, v_r) \, v_j \cdot v_s \cdot \Psi \; +\;  \frac{1}{2} \,  g_M^{\mu \nu} \, g_K^{i j} \,  g_P(\nabla_{U} X_\mu^\HH, v_i)  \, X_\nu^\HH \cdot v_j \cdot \Psi   \, .
\end{multline}
We want to simplify this expression when the higher-dimensional spinor is of the form $\Psi = \varphi^\HH(x) \otimes \psi(x,y)$, as in \eqref{TensorHDSpinor}. Since $U$ is vertical and $\varphi$ does not depend on the coordinate on $K$, 
\beq
\partial_{U} (\varphi^\HH \otimes \psi) \; = \; \varphi^\HH \otimes (\partial_{U} \psi) \ .
\eeq
Using that $\nabla$ has no torsion, that $[U, X^\HH]$ is a vertical vector field,  that $g_P(U, X_\mu^\HH)$ vanishes, that $g_P( X_\mu^\HH, X_\nu^\HH)$ is constant along the fibres, and also formula \eqref{IdentityCurvature2}, we get 
\begin{align}
g_P(\nabla_{U} X_\mu^\HH, \, X_\nu^\HH) \; &= \; g_P(\nabla_{X_\mu^\HH} U, \, X_\nu^\HH)  \;= \;  - \,  g_P(U,\,  \nabla_{X_\mu^\HH} X_\nu^\HH)  \nonumber \linebr
&= \; - \, \frac{1}{2} \, \,  g_P(F_A(X_\mu, X_\nu) ,  \, U) \ .
\end{align}
Using that $\nabla$ is $g_P$-compatible, that $g_P (X_\mu^\HH, v_i)$ vanishes, and identity \eqref{IdentitySecondFundamentalForm}, we get 
\begin{align}
g_P(\nabla_{U} X_\mu^\HH, v_i) \; &= \; \Lie_U[\, g_P (X_\mu^\HH, v_i) \, ]  \; - \; g_P(X_\mu^\HH,  \nabla_U v_i)   \linebr
&= \;  -\,  g_P(X_\mu^\HH,  \nabla_U v_i) \; = \;  \frac{1}{2} \, \,  (\dd^A g_K)_{X_\mu} (U, v_i) \ . \nonumber
\end{align}
Combining these formulae and using \eqref{EquivarianceClifford2}, decomposition \eqref{VerticalSplit} becomes \eqref{VerticalCovariantDerivative}, as desired.

Next, consider the case of a basic vector field $Y^\HH$.  In the local trivialization $\{ X_\mu^\HH, v_j \}$ of $TP$, formula \eqref{StandardSpinorCovariantDerivative} decomposes as
\begin{multline}
\label{HorizontalSplit}
\nabla_{Y^\HH} \Psi \; = \; \partial_{Y^\HH} \Psi \; + \; \frac{1}{4} \, g_M^{\mu \sigma} \, g_M^{\nu \rho} \,\,  g_P(\nabla_{Y^\HH} X_\mu^\HH, X_\nu^\HH) \, X_\sigma^\HH \cdot X_\rho^\HH \cdot \Psi \linebr
+ \; \frac{1}{4} \, g_K^{i r} \, g_K^{j s} \, \,  g_P(\nabla_{Y^\HH} v_i, v_j) \, v_r \cdot v_s \cdot \Psi \; + \;  \frac{1}{2} \, g_M^{\mu \nu} \, g_K^{i j} \, \,  g_P(\nabla_{Y^\HH} X_\mu^\HH, v_i)  \, X_\nu^\HH \cdot v_j \cdot \Psi  \ .
\end{multline}
Take a spinor on $P$ of the form $\Psi = \varphi^\HH(x) \otimes \psi(x,y)$, as in \eqref{TensorHDSpinor}. First, from the definition \eqref{BasicLiftX} of $Y^\HH$, the directional derivative of the local functions representing the spinors is
\beq
\label{Aux1}
\partial_{Y^\HH} (\varphi^\HH \otimes \psi) \; = \; (\partial_{Y}\varphi)^\HH \otimes \psi \; + \; \varphi^\HH \otimes \big[\partial_{Y} \psi + A^a(Y)\,  \partial_{e_a} \psi \big] \ .
\eeq
Secondly, using \eqref{IdentitySubmersionH1} and \eqref{EquivarianceClifford2},
\beq
\label{Aux2}
g_P(\nabla_{Y^\HH} X_\mu^\HH, X_\nu^\HH) \,  X_\sigma^\HH \cdot X_\rho^\HH \cdot (\varphi^\HH \otimes \psi) \; = \; g_M(\nabla_{Y} X_\mu, X_\nu) \,  (X_\sigma \cdot X_\rho \cdot \varphi)^\HH \otimes \psi \ .
\eeq
Thirdly, using \eqref{IdentityCurvature2} and \eqref{EquivarianceClifford2}, the last term in \eqref{HorizontalSplit} becomes
\begin{multline}
\label{Aux3}
\frac{1}{4} \, g_M^{\mu \nu} \, g_K^{i j} \, \, F^a_A (Y, X_\mu) \, \, g_P (e_a, \, v_i) \;   (  X_\nu \cdot \gamma_5 \cdot \varphi  )^\HH \otimes ( v_j \cdot \psi)  \linebr = \;  \frac{1}{4} \, g_M^{\mu \nu} \, \,  F^a_A (Y, X_\mu)  \;  (  X_\nu \cdot \gamma_5 \cdot \varphi  )^\HH \otimes (e_a \cdot \psi) \ . 
\end{multline}
Fourthly, from relation \eqref{EquivarianceClifford2} between Clifford multiplication on $S_\CC(P)$ and $S_\CC (\VV)$,
\beq
\label{Aux4}
g_K^{i r} \, g_K^{j s} \, \, g_P(\nabla_{Y^\HH} v_i, v_j) \, v_r \cdot v_s \cdot (\varphi^\HH \otimes \psi) \; = \; g_K^{i r} \, g_K^{j s} \, \, g_P(\nabla_{Y^\HH} v_i, v_j)\; \varphi^\HH \otimes (v_r \cdot v_s \cdot  \psi) \ .
\eeq
This last expression can be further developed to make the gauge fields $A^a(Y)$ appear explicitly. Using that $\nabla$ has no torsion, the definition \eqref{BasicLiftX} of horizontal lift, and the fact that the $A^a(Y)$ are constant functions along the fibres, we have
\begin{align}
g_P(\nabla_{Y^\HH} v_i, v_j) \; &= \; g_P(\, \nabla_{v_i} Y^\HH  +  [Y^\HH, v_i] , \, v_j \, )  \linebr
&= \; - \, g_P(\, Y^\HH  , \, \nabla_{v_i}  v_j \, ) \; + \; g_P(\, [Y, v_i] + A^a(Y) [e_a, v_i]   ,\, v_j \, )  & \nonumber \linebr
&= \; \frac{1}{2} (\dd^A g_K)_Y (v_i, v_j) \; + \; g_P(\, [Y, v_i], v_j) \; + \; A^a(Y) \,  g_K ( [e_a, v_i]   , v_j ) \nonumber \ .
\end{align}
The last equality also uses identity \eqref{IdentitySecondFundamentalForm} for the second fundamental form of the fibres.
Since the connection $\nabla$ is metric, $g_P(\nabla_{Y^\HH} v_i, v_j)$ is anti-symmetric in the indices $i$ and $j$. In contrast, the first term in the last line is symmetric. So we can eliminate it through
\begin{align}
\label{Aux5}
2\, g_P(\nabla_{Y^\HH} v_i, v_j) \; =& \; g_P(\nabla_{Y^\HH} v_i, v_j) - g_P(\nabla_{Y^\HH} v_j,  v_i)   \linebr
\begin{split}
={}& \;  g_P(\, [Y, v_i], v_j) \, - \, g_P(\, [Y, v_j ] , v_i) \, \linebr
 & + \;  A^a(Y) \big\{ \,  g_K ( [e_a, v_i]   , v_j )  \,  - \, g_K ( [e_a, v_j]   , v_i )  \,   \big\} \ . \nonumber
\end{split}
\end{align}
Finally, substituting \eqref{Aux1}, \eqref{Aux2}, \eqref{Aux3} and \eqref{Aux4} into decomposition \eqref{HorizontalSplit}, using the standard definition of the covariant derivative $\nabla^M\varphi$ on $M$ (analogous to \eqref{StandardSpinorCovariantDerivative}), and using the definition of Kosmann-Lichnerowicz derivative in \eqref{DefinitionKLDerivative}, we get formula \eqref{HorizontalCovariantDerivative} for the higher-dimensional covariant derivative $\nabla_{Y^\HH} \Psi$, as desired.
\end{proof}

\vspace{.2cm}

\begin{proof}[Proof of proposition \ref{thm:DecompositionDiracOperator}]
We first calculate the vertical part of the higher-dimensional Dirac operator. It follows from \eqref{VerticalCovariantDerivative} and \eqref{EquivarianceClifford2} that
\begin{align}
\label{VerticalCovariantDerivative3}
v_j \cdot \nabla_{v_j} \Psi \; =& \; (\gamma_5 \cdot \varphi)^\HH \otimes(v_j \cdot \nabla_{v_j} \psi)   \linebr
&- \;   \frac{1}{8} \, \, g_M^{\mu \nu}\, g_M^{\sigma \rho}\, \, (F_A^a)_{\mu \sigma } \, \, g_K(e_a , v_j) \; (\gamma_5 \cdot X_\nu \cdot X_\rho \cdot \varphi)^\HH \otimes (v_j \cdot \psi)   \nonumber \linebr
&+ \, \frac{1}{4} \,  \, g_M^{\mu \nu}\,  \sum\nolimits_{i} \,  (\dd^A g_K)_{X_\mu} (v_j, v_i)\; \, (  \gamma_5 \cdot X_\nu \cdot \gamma_5 \cdot \varphi  )^\HH \otimes (v_j \cdot v_i \cdot \psi) \ .  \nonumber
\end{align}
Now, the covariant derivative $(\dd^A g_K)_{X_\mu} (v_j, v_i)$ is symmetric in $i$ and $j$, while the Clifford product $ v_j \cdot v_i$ is anti-symmetric for $i\neq j$. So a sum over $j$ simplifies to
\begin{align}
 \sum\nolimits_{ij} \,  (\dd^A g_K)_{X_\mu} (v_j, v_i)\  v_j \cdot v_i \cdot \psi \; &= \; \sum\nolimits_{j} \,  (\dd^A g_K)_{X_\mu} (v_j, v_j)\  v_j \cdot v_j  \cdot \psi \linebr
 &= \; -\,  \sum\nolimits_{j} \,  (\dd^A g_K)_{X_\mu} (v_j, v_j) \; \psi  \nonumber   \ .
 \end{align}
But formulae (2.25) and (2.28) in section 2.4 of \cite{Bap} say that the metric trace of $\dd^A g_K$ is
 \begin{align}
 \label{MeanCurvatureVectorProduct}
 \sum\nolimits_{j} (\dd^A g_K)_{X_\mu} (v_j, v_j) \; &= \; -  \, 2  \, \sum\nolimits_{j} g_P(\nabla_{v_j} v_j , \, X_\mu ^\HH) \;  \nonumber \linebr
 &= \; 2\, \big[ \,  \partial_{X_\mu}   \log \big(\sqrt{|g_K |} \big)  +  A^a_\mu \, \, \divergence (e_a)  \, \big] \ .
 \end{align}
 In geometric terms, the trace of the second fundamental form of the fibres is their mean curvature vector $N$. Then \eqref{MeanCurvatureVectorProduct} is just the inner-product $- 2 g_P(N, X_\mu^\HH)$ written in a way that makes explicit the dependence on the gauge fields, as in \cite{Bap}.
 
 Thus, summing \eqref{VerticalCovariantDerivative3} over $j$ and using that $\gamma_5\,\gamma_5 = 1$, the result is
 \begin{align}
 \label{VerticalCovariantDerivative4}
 \sum\nolimits_{j}  v_j \cdot \nabla_{v_j} \Psi \; =& \; (\gamma_5 \cdot \varphi)^\HH \otimes( \sD^K \psi)   \; - \;   \frac{1}{8} \, \, (F_A^a)^{\mu \nu} \; (\gamma_5 \cdot X_\mu \cdot X_\nu \cdot \varphi)^\HH \otimes (e_a \cdot \psi)     \linebr
&+\; \frac{1}{2} \,  \, g_M^{\mu \nu}\,  \big[ \, \partial_{X_\mu}   \log \big(\sqrt{|g_K |} \big)  +  A^a_\mu \, \, \divergence (e_a) \, \big]  \, (  X_\nu \cdot \varphi  )^\HH \otimes  \psi \ .  \nonumber 
  \end{align}
  
To calculate the horizontal part of $\sD^P$, take \eqref{HorizontalCovariantDerivative} and observe that
 \begin{align}
\label{HorizontalCovariantDerivative3}
X_\mu^\HH \cdot \nabla_{X_\nu^\HH} \Psi \; =& \;  (X_\mu \cdot \nabla^M_{X_\nu} \, \varphi)^\HH \otimes \psi \; + \;  A^a_\nu \, \, (X_\mu \cdot \varphi)^\HH \otimes  \cL_{e_a} \psi  \linebr 
\begin{split}
 &+ \, \frac{1}{4}\, \, g_M^{\sigma \rho}\, \, F_A^a(X_\nu , X_\sigma) \, \, ( X_\mu \cdot   X_\rho \cdot \gamma_5 \cdot \varphi )^\HH \otimes (e_a\cdot \psi)   \linebr
 &+\;  (X_\mu \cdot \varphi)^\HH \otimes \,  \partial_{X_\nu}  \psi  \  \nonumber \linebr
  &+\;  (X_\mu \cdot \varphi)^\HH \otimes \Big( \frac{1}{8} \, \sum\nolimits_{ij}   \big\{ g_P([X_\nu, v_i], v_j)  -  g_P([X_\nu, v_j ] , v_i) \big\}  \, v_i \cdot v_j \cdot \psi  \Big) \, . \nonumber
\end{split}
\end{align}
Contracting this expression with $g_M^{\mu \nu}$ and summing with \ref{VerticalCovariantDerivative4} leads directly to \ref{GeneralFormulaDecompositionDiracOperator}.
\end{proof}

\section{Proofs for sections \ref{ChiralFermions} and \ref{ExplicitExampleS2}}
\label{ProofsChiralFermions}

\begin{proof}[Proof of proposition \ref{EquivalentConditions}]
Assume condition \ref{it1}.
The chirality operator $\Gamma_K$ anti-commutes with $\sD$ and maps $V_{m>0} \rightarrow V_{m<0}$ isomorphically. Since $\cL_X$ commutes with $\Gamma_K$, it is clear that $\cL_X$ also preserves $V_{m<0}$. But if $\cL_X$ preserves $V_{m>0} \oplus V_{m<0}$, the inner-product 
\beq
\label{InnerProductIdentity1}
\blangle  \cL_X \phi, \psi \brangle_{L^2} \; = \; -\,  \blangle \phi, \cL_X \psi \brangle_{L^2}
\eeq
must vanish for all $\psi$ in $V_{m>0} \oplus V_{m<0}$ and all $\phi$ in $V_0$. This follows from the orthogonality of decompositions \eqref{DecompositionsSpinorSpaces2} and the fact that, due to \eqref{IntegralKLD},  $\cL_X$ is anti-self-adjoint with respect to the $L^2$-inner-product for a divergence-free $X$. So the spinor $\cL_X \phi$ does not have components in $V_{m>0} \oplus V_{m<0}$ for any $\phi$ in $V_0$.  We conclude that $\cL_X \phi$ must be in $V_0$ for any $\phi$ in $V_0$, and hence $\cL_X$ preserves $V_0$.
Moreover, since $\cL_X$ commutes with $\Gamma_K$, it must also preserve its eigenspaces $V_0^\pm$ inside $V_0$. Similarly, it must preserve the  two $\Gamma_K$-eigenspaces $V_{m>0}^\pm$ inside $V_{m>0} \oplus V_{m<0}$. This concludes the proof that condition \ref{it1} is equivalent to condition \ref{it2}.

The definition of the chiral projections $\psi^\pm$ in \eqref{DefChiralProjections} implies the identity
\beq
\label{InnerProductIdentity2}
\langle \phi^+\!, \cL_X \psi^+ \brangle_{L^2} \, - \, \langle \phi^- \!, \cL_X \psi^- \brangle_{L^2} \; = \; \blangle  \, \Gamma_K \phi, \, \cL_X \psi \,\brangle_{L^2} \ .
\eeq
Since $\Gamma_K$ preserves $V_0$ and maps $V_{m > 0}$ isomorphically onto $V_{m < 0}$, condition \ref{it4} can be rephrased as 

\vspace{.1cm}

\noindent
\text{\it \ref{it4}'. $\blangle  \, \phi, \, \cL_X \psi \,\brangle_{L^2} = \,  0$ for all $\psi$ in $V_{m>0}$ and all $\phi$ in $V_{m \leq 0}$.}

\vspace{.1cm}

\noindent
It says that $ \cL_X \psi$ is in the orthogonal complement of $V_{m \leq 0}$ for all $\psi$ in $V_{m>0}$. But that orthogonal complement is $V_{m>0}$ itself. So condition \ref{it4} is equivalent to condition \ref{it1}.

Now suppose that $\psi$ is an eigenspinor of $\sD$ with positive eigenvalue $m >0$ and that $\phi$ is an eigenspinor with non-positive eigenvalue $-\mu \leq 0$. Since $\sD$ is formally self-adjoint with respect to the $L^2$-inner-product of spinors, we have
\beq
\blangle\, \phi, [\cL_X, \sD]\, \psi \, \brangle_{L^2} \; = \;   ( \mu + m)\, \blangle \, \phi, \cL_X \psi \, \brangle_{L^2}  \ .
\eeq
The real factor $( \mu + m)$ is always positive, so $\blangle\, \phi, [\cL_X, \sD]  \psi \, \brangle_{L^2}$ vanishes if and only if $\blangle \, \phi, \cL_X \psi \, \brangle_{L^2}$ vanishes. So $[\cL_X, \sD]  \psi$ is orthogonal to $V_{m \leq 0}$ if and only if $\cL_X \psi$ is. In other words, $[\cL_X, \sD] \psi$ is in $V_{m > 0}$ if and only if $\cL_X \psi$ is in that space. Hence, condition \ref{it3} is equivalent to condition \ref{it1}.

To prove the last assertion in proposition \ref{EquivalentConditions}, assume that points \ref{SCSit1} and \ref{SCSit2} of definition \ref{DefStrongChiralSymmetry} are satisfied. Point \ref{SCSit1} almost implies the full condition \ref{it4} of proposition \ref{EquivalentConditions}. We only need to check that
$\blangle \phi^+, \cL_X \psi^+ \brangle_{L^2} = \blangle \phi^-, \cL_X \psi^- \brangle_{L^2}$ for all $\psi$ in $V_{m>0}$ and all $\phi$ in $V_0$. Using identity \eqref{InnerProductIdentity2} and the fact that $\Gamma_K \phi$ is in $V_0$ iff $\phi$ belongs to that space, that is equivalent to verifying that $\blangle  \,  \phi, \, \cL_X \psi \,\brangle_{L^2}$ vanishes for all $\psi$ in $V_{m>0}$ and $\phi$ in $V_{0}$. But point \ref{SCSit2} says that $\cL_X$ preserves $V_0$, so $\blangle  \cL_X \phi, \psi \brangle_{L^2}$ always vanishes. For a divergence-free $X$, identity \eqref{InnerProductIdentity1}  then guarantees that $\blangle  \,  \phi, \, \cL_X \psi \,\brangle_{L^2}$ also vanishes, as desired. Thus, points \ref{SCSit1} and \ref{SCSit2} of definition \ref{DefStrongChiralSymmetry} imply condition \ref{it4} of proposition \ref{EquivalentConditions}, and hence all of its conditions.

In the opposite direction, suppose that $\cL_X$ satisfies the four equivalent conditions of proposition \ref{EquivalentConditions}. Then, conditions \ref{it2} and \ref{it4} directly imply points \ref{SCSit2} and \ref{SCSit1} of the definition \ref{DefStrongChiralSymmetry}, respectively. So we conclude that the desired equivalence holds.
\end{proof}

\vspace{.2cm}

\begin{proof}[Proof of lemma \ref{thm:AuxChiralFermionsS2}]

Identify the punctured sphere $S^2  \setminus \! \{N\}$ with the complex plane through stereographic projection and let $z = x + i y$ be the standard complex coordinate on $\CC$. In this chart, the round metric on $S^2$ with total volume $4\pi$ takes the form
\beq
g \; = \; \frac{4}{q^2} \, \,  [ \, (\dd x)^2 \, + \, (\dd y)^2 \,] \ ,
\eeq
where $q := 1 + |z|^2$. Then the vector fields $v_1 = \frac{q}{2}\, \partial_x$ and $v_2 = \frac{q}{2}\, \partial_y$ define an oriented, $g$-orthonormal frame on the tangent spaces. The standard complex structure on $S^2$ acts on them as $J v_1 = v_2$. The coordinate vector field corresponding to the azimuthal angle $0 < \phi < 2\pi$ on the sphere is 
\beq
\partial_\phi \; = \;  - \, y \, \partial_x \: +\: x \, \partial_y\ .
\eeq
It is a Killing vector field with respect to $g$. The gamma matrices in two Euclidean dimensions can be taken to be 
\beq \label{PauliMatrices}
\gamma_1 \, = \, \bmatr 0  &  \ -i \\ -i  & \ 0 \ematr \qquad  \qquad \gamma_2 \, = \, \bmatr 0  &  \ -1 \\ 1  & \ 0 \ematr \  \ .
\eeq
They are anti-hermitian matrices satisfying $\{\gamma_a, \gamma_b \} = - \delta_{ab} \, I_2$. The chirality operator \eqref{ChiralOperatorRS} takes the form
\beq
\Gamma_K \; = \; i \, \gamma_1\, \gamma_2 \; = \;  \bmatr \,1  &  \, 0 \\ 0  & \, -1 \ematr  \ .
\eeq
A vector field $X = X^1 v_1 + X^2 v_2$ on $S^2$ acts on spinors as $X\cdot \psi = (X^1 \gamma_1 +  X^2 \gamma_2) \, \psi$. Then a direct computation using formula \eqref{ExplicitKillingSpinors} for the Killing spinors leads to the identities
\begin{align}
\label{Aux1IdentitiesS2}
\langle \psi_{\frac{1}{2}, \, \pm \frac{1}{2}} \, ,\, X \cdot   \psi_{\frac{1}{2}, \, \pm \frac{1}{2}} \rangle \; &= \; \mp \, \frac{i}{4\pi} \, \, g(X, \partial_\phi)    \linebr
\langle \Gamma_K \, \psi_{\frac{1}{2}, \, \pm \frac{1}{2}} \, ,\, X \cdot   \psi_{\frac{1}{2}, \, \pm \frac{1}{2}} \rangle \; &= \; \pm \, \frac{1}{4\pi} \,\, g(JX, \partial_\phi) \nonumber  \linebr
\langle \Gamma_K \, \psi_{\frac{1}{2}, \, \pm \frac{1}{2}} \, ,\, X \cdot   \psi_{\frac{1}{2}, \, \mp \frac{1}{2}} \rangle \; &= \; \mp \, \frac{1}{\sqrt{6\pi}} \, \, \dd (Y_{1,\, \mp 1}^0)(X) \nonumber  \ .
\end{align}
In the last equality, the functions $Y_{1,\, \pm 1}^0$ are the scalar spherical harmonics on $S^2$ written explicitly in \eqref{ExplicitSphericalHarmonics}.

Now, since the $\psi_{\frac{1}{2}, \, \pm \frac{1}{2}}$ are Killing spinors, identities \eqref{KillingSpinorsS2} allow us to simplify formula \eqref{KLDerivativeRS} for the Kosmann-Lichnerowicz derivative and write
\beq
\label{KLDerivativeKS}
\cL_X \, \psi_{\frac{1}{2}, \, \pm \frac{1}{2}}  \ = \  -\, \frac{1}{2} \, X \cdot \psi_{\frac{1}{2}, \, \pm \frac{1}{2}}  \; - \;  \frac{i}{4} \,\divergence(JX) \, \Gamma_K \, \psi_{\frac{1}{2}, \, \pm \frac{1}{2}}  \ .
\eeq
Combining this formula with \eqref{Aux1IdentitiesS2} yields
\begin{align}
\label{Aux2IdentitiesS2}
\langle \Gamma_K \, f\, \psi_{\frac{1}{2}, \, \pm \frac{1}{2}} \, ,\, \cL_X \, \psi_{\frac{1}{2}, \, \pm \frac{1}{2}} \rangle \; &= \;- \, \frac{1}{8\pi} \, \big[ \pm g(JX, \partial_\phi) \; +\; \frac{i}{2} \, \divergence(JX) \big]  \,\bar{f}  \linebr
\langle \Gamma_K \, f\, \psi_{\frac{1}{2}, \, \pm \frac{1}{2}} \, ,\, \cL_X \,  \psi_{\frac{1}{2}, \, \mp \frac{1}{2}} \rangle \; &= \; \pm \, \frac{1}{2 \sqrt{6\pi}} \, \bar{f} \, \, \dd (Y_{1,\, \mp 1}^0)(X) \ . \nonumber
\end{align}
Next, consider the case where $X_h$ is a Hamiltonian vector field on $S^2$ determined by a function $h$, as in \eqref{DefinitionHamiltonianField}. Then formula \eqref{IdentityHamiltonianFieldRS} leads to the first two equalities in
\begin{align}
\label{Aux3IdentitiesS2}
\divergence(JX_h) \; &= \; \Delta\, h \linebr
g(JX_h, \partial_\phi) \; &= \; g(\, \gradient \, h, \,\partial_\phi \, ) \; = \; \Lie_{\partial_\phi} \,  h   \nonumber    \linebr
\dd (Y_{1,\, \pm 1}^0)(X_h) \; &= \; \Lie_{X_h} \, Y_{1,\, \pm 1}^0 \; = \;  \{ Y_{1,\, \pm 1}^0 \, , \, h \} \nonumber \ .
\end{align}
The last equality follows from the definition of Poisson bracket (e.g. \cite[sec. 18.3]{Cannas}). Thus
\begin{align}
\label{Aux4IdentitiesS2}
\langle \Gamma_K \, f\, \psi_{\frac{1}{2}, \, \pm \frac{1}{2}} \, ,\, \cL_{X_h} \, \psi_{\frac{1}{2}, \, \pm \frac{1}{2}} \rangle \; &= \;- \, \frac{1}{8\pi} \, \big[ \pm \Lie_{\partial_\phi} \,  h  \; +\; \frac{i}{2} \, \Delta\, h \big]  \,\bar{f}  \linebr
\langle \Gamma_K \, f\, \psi_{\frac{1}{2}, \, \pm \frac{1}{2}} \, ,\, \cL_{X_h} \,  \psi_{\frac{1}{2}, \, \mp \frac{1}{2}} \rangle \; &= \; \pm \, \frac{1}{2 \sqrt{6\pi}} \, \bar{f} \, \, \{ Y_{1,\, \mp 1}^0 \, , \, h \} \ . \nonumber
\end{align}
Lemma \ref{thm:AuxChiralFermionsS2} is concerned with the integrals over $S^2$ of the matrix elements \eqref{Aux4IdentitiesS2}.
But the Laplacian is self-adjoint with respect to the $L^2$-product of functions, so
\beq
\label{IntegralIdentityAux1S2}
\int_{S^2} \bar{f} \, \, \Delta h \; \vol_{g} \; = \; \int_{S^2}   (\Delta \bar{f}) \, h \; \vol_{g} \ .
\eeq
Since $\partial_\phi$ is Killing with respect to $g$, it has zero divergence, so from Stokes' theorem:
\beq
\label{IntegralIdentityAux2S2}
\int_{S^2} \bar{f} \, \, (\Lie_{\partial_\phi} h) \; \vol_{g} \; = \; - \, \int_{S^2}   (\Lie_{\partial_\phi} \bar{f}) \, h \; \vol_{g} \ .
\eeq
Moreover, for any three functions the Poisson bracket satisfies the identity \cite[sec. 18.3]{Cannas}
\beq
\label{IdentityPoissonBracket}
 \{ f_1 \,  , \, f_2\, f_3 \} \;  = \;   \{ f_1 \,  , \, f_2 \} \, f_3  \; + \;  f_2\, \{ f_1 \,  , \, f_3 \} \ .
\eeq
Since Hamiltonian vector fields have zero divergence, the integral of the Poisson bracket of any two functions always vanishes. So the integral of the left-hand side of \eqref{IdentityPoissonBracket} vanishes, and, for the obvious choices of $f_j$, we get that
\beq
\label{IntegralIdentityAux3S2}
\int_{S^2} \bar{f} \, \, \{ Y_{1,\, \pm 1}^0 \, , \, h \} \; \vol_{g} \; = \;  - \, \int_{S^2}   \{ Y_{1,\, \pm 1}^0 \, , \, \bar{f} \} \, h \; \vol_{g}  \; = \;   \int_{S^2}   \{ \bar{f} \, , \,  Y_{1,\, \pm 1}^0 \} \, h \; \vol_{g} \ .
\eeq
Combining \eqref{Aux4IdentitiesS2} with the integral identities \eqref{IntegralIdentityAux1S2}, \eqref{IntegralIdentityAux2S2} and \eqref{IntegralIdentityAux3S2}, we finally get the formulae \eqref{MatrixElements1} in lemma \ref{thm:AuxChiralFermionsS2}.
\end{proof}

\newpage

%%%%%%%%%%%%%%%%%%%%%%%%%%%%%%%%%%%%%%%%%%%%%%%%%%%%%%%%%%%%%%

\renewcommand{\baselinestretch}{1.2}\normalsize

\addcontentsline{toc}{section}{References}

\vspace{1cm}


\begin{thebibliography}{99}    


\bibitem{Witten81} E. Witten: Search for a realistic Kaluza-Klein theory, \textit{Nucl. Phys.} \textbf{B186} (1981), 412--428.

 \bibitem{Duff} M. Duff, B. Nilsson and C. Pope: Kaluza-Klein supergravity, \textit{Phys. Reports} \textbf{130} (1986), 1--142.

 \bibitem{Bailin} D. Bailin and A. Love: Kaluza-Klein theories, \textit{Rep. Prog. Phys.} \textbf{50} (1987), 1087--1170.    
 
 \bibitem{CJBook} R. Coquereaux and A. Jadczyk: \textit{Riemannian geometry, fiber bundles, Kaluza-Klein theories and all that....}, World Scientific Publishing, 1988.
 
  \bibitem{CFD} L. Castellani, P. Fré and R. D'Auria: \textit{Supergravity and superstrings: a geometric perspective}, Vol. 2, Part five, World Scientific Publishing, 1991. 
 
 \bibitem{WessonOverduin} J. Overduin and P. Wesson: Kaluza-Klein gravity, \textit{Phys. Reports} \textbf{283} (1997), 303--380.

\bibitem{EnBr} F. Englert and R. Brout:  \textit{Phys. Rev. Lett.} \textbf{13} (1964), 321. 

\bibitem{GHT} G. Guralnik, C. Hagen and T. Kibble: \textit{Phys. Rev. Lett.} \textbf{13} (1964), 585. 
 
\bibitem{Higgs} P. Higgs: \textit{Phys. Lett.} \textbf{12} (1964), 132.       

\bibitem{Bap} J. Baptista: Internal symmetries in Kaluza-Klein models,  \textit{J. High Energ. Phys.} \textbf{2024} (2024), 178.     

\bibitem{Besse} A. Besse: \textit{Einstein manifolds}, Classics in Mathematics, Springer-Verlag, 1987.     

\bibitem{Manton} N. Manton: Fermions and parity violation in dimensional reduction schemes, \textit{Nucl. Phys.} \textbf{B193} (1981), 502--516.

\bibitem{CS} G. Chapline and R. Slansky: Dimensional reduction and flavor chirality, \textit{Nucl. Phys.} \textbf{B209} (1982), 461--483.

\bibitem{Witten83} E. Witten: Fermion quantum numbers in Kaluza-Klein theory, in \textit{Shelter Island II, Proc. of the 1983 Shelter Island conference}, MIT Press, 1985. 

\bibitem{Wetterich83a} C. Wetterich: Massless spinors in more than four dimensions, \textit{Nucl. Phys.} \textbf{B211} (1983), 177--188.                              

\bibitem{Wetterich83} C. Wetterich: Dimensional reduction of Weyl, Majorana and Majorana-Weyl spinors, \textit{Nucl. Phys.} \textbf{B222} (1983), 20--44.

 \bibitem{AH} M. Atiyah and F. Hirzebruch: Spin-manifolds and group actions, in \textit{Essays on Topology and Related Subjects}, Springer-Verlag, 1970, 18--28. 
 
 \bibitem{CM} G. Chapline and N. Manton: The geometrical significance of certain Higgs potentials: an approach to grand unification, \textit{Nucl. Phys.} \textbf{B184} (1981), 391--405.

\bibitem{RSS} S. Randjbar-Daemi, A. Salam and J. Strathdee: Spontaneous compactification in six-dimensional Einstein-Maxwell theory, \textit{Nucl. Phys.} \textbf{B214} (1983), 491--512.

\bibitem{FY} P. Frampton and K. Yamamoto: Unitary flavor unification through higher dimensionality, \textit{Phys. Rev. Lett.} \textbf{52} (1984), 2016--2018. 

\bibitem{CIF} D. Cremades, L. Ibanez and F. Marchesano: Computing Yukawa couplings from magnetized extra dimensions,  \textit{J. High Energ. Phys.} \textbf{05} (2004), 079.     

 \bibitem{Wetterich84} C. Wetterich: Dimensional reduction of fermions in generalized gravity, \textit{Nucl. Phys.} \textbf{B242} (1984), 473--502.

\bibitem{Weinberg84} S. Weinberg: Quasi-riemannian theories of gravitation in more than four dimensions, \textit{Phys. Letters} \textbf{138B} (1984), 47--51.

 \bibitem{Moffat} J. Moffat: Chiral fermions in non-Riemannian Kaluza–Klein theory, \textit{J. Math. Phys.} \textbf{26} (1985), 528--531.
 
 \bibitem{Neville} D. Neville: Torsion and chiral fermions in Kaluza-Klein theories, \textit{Phys. Rev.} \textbf{D33} (1986), 363--369.

\bibitem{Tchrakian} D. Tchrakian: Comment on `Torsion and chiral fermions in Kaluza-Klein theories', \textit{Phys. Rev.} \textbf{D34} (1986), 3930--3931.

\bibitem{DHVW} L. Dixon, J. Harvey, C. Vafa and E. Witten: Strings on orbifolds, \textit{Nucl. Phys.} \textbf{B261} (1985), 678--686.

\bibitem{DHVW2} L. Dixon, J. Harvey, C. Vafa and E. Witten: Strings on orbifolds (II), \textit{Nucl. Phys.} \textbf{B274} (1986), 285--314.

\bibitem{PQ} A. Pomarol and M. Quirós: The standard model from extra dimensions, \textit{Phys. Letters} \textbf{438B} (1998), 255--260.

\bibitem{DDG} K. Dienes, E. Dudas and T. Gherghetta: Grand unification at intermediate mass scales through extra dimensions, \textit{Nuclear Phys.} \textbf{B537} (1999), 47--108.

\bibitem{CDH} H. Cheng, B. Dobrescu and C. Hill: Electroweak symmetry breaking and extra dimensions, \textit{Nuclear Phys.} \textbf{B589} (2000), 249--268.

\bibitem{GGH} H. Georgi, A. Grant and G. Hailu: Chiral fermions, orbifolds, scalars and fat branes, \textit{Phys. Rev.} \textbf{D63} (2001), 064027.

\bibitem{Sundrum} R. Sundrum: TASI 2004 lectures: to the fifth dimension and back, in \textit{Physics in $D >= 4$, Proc. Theor. Adv. St. Inst. Elementary Particle Physics} (2005), 585--630.

\bibitem{CHM} C. Csáki, J. Hubisz and P. Meade: TASI lectures on electroweak symmetry breaking from extra dimensions, in \textit{Physics in $D >= 4$, Proc. Theor. Adv. St. Inst. Elementary Particle Physics} (2005), 703--776.

\bibitem{Bou} J. Bourguignon: A mathematician's visit to Kaluza-Klein theory, \textit{Rend. Sem. Mat. Univ. Politec. Torino} (1989), 143--163. 

\bibitem{Kaluza} T. Kaluza: Zum Unit{\"a}tsproblem in der Physik,  \textit{Sitzungsber. Preuss. Akad. Wiss. Berlin Math. Phys.} \textbf{K1} (1921), 966--972.

\bibitem{Klein} O. Klein: Quantentheorie und f{\"u}nfdimensionale Relativit{\"a}tstheorie, \textit{Zeitschrift Phys.} \textbf{37} (1926), 895--906.

\bibitem{EB} A. Einstein and P. Bergmann: On a generalization of Kaluza's theory of electricity, \textit{Annals Math.} \textbf{39} (1938), 683--701. 

\bibitem{Jordan} P. Jordan: Relativistische Gravitationstheorie mit variabler Gravitationskonstante, \textit{Naturwissenschaften} \textbf{33} (1946), 250--251.

\bibitem{Thiry} Y. Thiry: Les \'equations de la th\'eorie unitaire de Kaluza, \textit{ Comptes Rendus Acad. Sci. Paris} \textbf{226} (1948), 216--218.

\bibitem{DeWitt} B. DeWitt: Dynamical theory of groups and fields, in \textit{Lectures at 1963 Les Houches School}, Gordon and Breach, 1964, 585--820.

\bibitem{Kerner} R. Kerner: Generalization of the Kaluza-Klein theory for an arbitrary non-abelian gauge group, \textit{Ann. Inst. H. Poincar\'e}  \textbf{9} (1968), 143--152. 

\bibitem{Cho} Y. Cho: Higher‐dimensional unifications of gravitation and gauge theories, \textit{J. Math. Phys.} \textbf{16} (1975), 2029--2035. 

\bibitem{Bap2} J. Baptista: Test particles in Kaluza-Klein models, \textit{Class. Quantum Grav.} \textbf{42} (2025), 045007.

\bibitem{ONeill1} B. O'Neill: The fundamental equations of a submersion, \textit{Michigan Math. J.} \textbf{13} (1966), 459--469.

\bibitem{Ehresmann} C. Ehresmann: Les connexions infinitésimales dans un espace fibré différentiable, \textit{Colloque de Topologie} (1950), 29--55.

\bibitem{Hermann} R. Hermann: A sufficient condition that a mapping of Riemannian manifolds be a fibre bundle, \textit{Proc. Amer. Math. Soc.} \textbf{11} (1960), 236--242.
 
\bibitem{FIP} M. Falcitelli, S. Ianus and A. Pastore: \textit{Riemannian submersions and related topics}, World Scientific Publishing, 2004.

\bibitem{Moroianu} A. Moroianu,  \textit{Opérateur de Dirac et submersions riemanniennes}. Ph.D. thesis, École Polytechnique, Paris, 1996.
% https://moroianu.perso.math.cnrs.fr/tex/th.pdf

\bibitem{LS} E. Loubeau and R. Slobodeanu: A characterization of Dirac morphisms, \textit{Commun. Math. Phys.} \textbf{288} (2009), 1089--1102.

\bibitem{Reynolds} P. Reynolds, \textit{On conformal submersions and manifolds with exceptional structure groups}, PhD Thesis, University of Edinburgh, 2011.
% https://era.ed.ac.uk/handle/1842/6218

\bibitem{Kosmann} Y. Kosmann: D\'eriv\'ees de Lie des spineurs, \textit{Ann. di Matematica Pura ed Applicata} \textbf{91} (1971), 317--395.

\bibitem{BT} I. Benn and R. Tucker: \textit{An introduction to spinors and geometry with applications in physics}, IOP Publishing, 1987.

\bibitem{Friedrich} T. Friedrich: Zur Abh\"angigkeit des Dirac-Operators von der Spin-Struktur, \textit{Coll. Math.} \textbf{48} (1984), 57--62.

\bibitem{Ginoux} N. Ginoux: \textit{The Dirac spectrum}, Springer, 2009.

\bibitem{LM} H. Lawson and M. Michelsohn: \textit{Spin geometry}, Princeton Univ. Press, 1989. 

\bibitem{Bourguignon} J. Bourguignon, O. Hijazi, J. Milhorat, A. Moroianu and S. Moroianu: \textit{A spinorial approach to Riemannian and conformal geometry}, European Math. Society, 2015. 

\bibitem{CN}  X. Calmet and A. Neronov: Kaluza-Klein theories and the anomalous magnetic moment of the muon, \textit{Phys. Rev. D} \textbf{65} (2002), 067702.

\bibitem{Dolan} B. Dolan: On the elimination of Pauli couplings in Kaluza-Klein theories using torsion, \textit{Phys. Letters} \textbf{159B} (1985), 279--283.

\bibitem{Lich} A. Lichnerowicz: Spineurs harmoniques, \textit{Comptes rendus Acad. Sc. Paris, groupe 1} \textbf{257} (1963), 7--9.

\bibitem{Baum} H. Baum: \textit{Spin-Strukturen und Dirac-Operatoren \"uber pseudoriemannschen Mannigfaltigkeiten}, B. G. Teubner, 1981. 

\bibitem{Hamilton} M. Hamilton: \textit{Mathematical gauge theory: with applications to the Standard Model of particle physics}, Universitext, Springer International Publishing, 2017.

\bibitem{BH} J. Baez and J. Huerta: The algebra of grand unified theories, \textit{Bull. Amer. Math. Soc.} \textbf{47} (2010), 483--552.

\bibitem{IT} Y. Imayoshi and M. Taniguchi: \textit{An introduction to Teichm\"{u}ller spaces}, Springer-Verlag, 1992.

\bibitem{Bar} C. B\"{a}r: The Dirac operator on space forms of positive curvature, \textit{J. Math. Soc. Japan} \textbf{48} (1996), 69--83.

\bibitem{GVF} J. Gracia-Bond\'ia, J. V\'arilly and H. Figueroa: \textit{Elements of noncommutative geometry}, Birkh\"auser, 2001.

\bibitem{Cannas} A. Silva: \textit{Lectures in symplectic geometry}, Springer, 2008.

\bibitem{Jackson} J. Jackson: \textit{Classical electrodynamics}, 3rd ed., John Wiley \& Sons, 1998.

\bibitem{RS} P. Rios and E. Straume: \textit{Symbol correspondences for spin systems}, Birkh\"{a}user, 2014.

\bibitem{FK} L. Freidel and K. Krasnov: The fuzzy sphere  $\star$-product and spin networks, \textit{J. Math. Phys.} \textbf{43} (2002), 1737--1754.

\bibitem{TC} G. Torres del Castillo: \textit{3-D Spinors, spin-weighted functions and their applications}, Springer, 2003.

\bibitem{NP} E. Newman and R. Penrose: Note on the Bondi-Metzner-Sachs group, \textit{J. Math. Phys.} \textbf{7} (1966), 863--870.

\bibitem{GMNRS} J. Goldberg, A. Macfarlane, E. Newman, F. Rohrlich and E. Sudarshan: Spin-$s$ spherical harmonics and $\mathdh$, \textit{J. Math. Phys.} \textbf{8} (1967), 2155--2161.

\bibitem{BK} I. Benn and J. Kress: First-order Dirac symmetry operators, \textit{Class. Quantum Grav.} \textbf{21} (2004), 427--431.

\bibitem{Cariglia} M. Cariglia, P. Krtou\v{s} and D. Kubiz\v{n}ák: Commuting symmetry operators of the Dirac equation, Killing-Yano and Schouten-Nijenhuis brackets, \textit{Phys. Rev.} \textbf{D84} (2011), 024004.

\end{thebibliography}
\end{document}